\definecolor{mygray}{gray}{0.6}
\definecolor{myblue}{rgb}{0.8,0.85,1} 
\newcolumntype{L}[1]{>{\raggedright\let\newline\\\arraybackslash\hspace{0pt}}m{#1}}
\newcolumntype{C}[1]{>{\centering\let\newline\\\arraybackslash\hspace{0pt}}m{#1}}
\newcolumntype{R}[1]{>{\raggedleft\let\newline\\\arraybackslash\hspace{0pt}}m{#1}}
\DeclarePairedDelimiter{\ceil}{\lceil}{\rceil}
\DeclarePairedDelimiter{\norm}{\lVert}{\rVert}
\newcommand{\beq}{\begin{equation}}
\newcommand{\eeq}{\end{equation}}
\newcommand{\bitm}{\begin{itemize}}
\newcommand{\ba}{\begin{array}}
\newcommand{\ea}{\end{array}}
\newcommand{\eitm}{\end{itemize}}
\newcommand{\beqn}{\begin{eqnarray}}
\newcommand{\eeqn}{\end{eqnarray}}
\newcommand{\beqno}{\begin{eqnarray*}}
\newcommand{\eeqno}{\end{eqnarray*}}
\newcommand{\bma}{\begin{displaymath}}
\newcommand{\ema}{\end{displaymath}}
\newcommand{\bnu}{\begin{enumerate}}
\newcommand{\enu}{\end{enumerate}}
\newcommand{\bce}{\begin{center}}
\newcommand{\ece}{\end{center}}
\newcommand{\btb}{\begin{tabular}}
\newcommand{\etb}{\end{tabular}}
\newtheorem{theorem}{\textbf{\textsc{Theorem}}}
\begin{document}
\title{\huge Dynamic Network Service Selection in Intelligent Reflecting
Surface-Enabled Wireless Systems: Game Theory Approaches}

\author{ 
\IEEEauthorblockN{Nguyen Thi Thanh Van, Nguyen Cong Luong,
Feng Shaohan,
Huy T. Nguyen,
Kun Zhu, Thien Van Luong, and Dusit Niyato
}

}

\maketitle
\begin{abstract}
In this paper, we address dynamic network selection problems of mobile users in an Intelligent Reflecting Surface (IRS)-enabled wireless network. In particular, the users dynamically select different Service Providers (SPs) and network services over time. The network services are composed of IRS resources and transmit power resources. To formulate the SP and network service selection, we adopt an evolutionary game in which the users are able to adapt their network selections depending on the utilities that they achieve. For this, the replicator dynamics is used to model the service selection adaptation of the users. To allow the users to take their past service experiences into account their decisions, we further adopt an enhanced version of the evolutionary game, namely fractional evolutionary game, to study the SP and network service selection. The fractional evolutionary game incorporates the memory effect that captures the users' memory on their decisions. We theoretically prove that both the game approaches have a unique equilibrium. Finally, we provide numerical results to demonstrate the effectiveness of our proposed game approaches. In particular, we have reveal some important finding, for instance, with the memory effect, the users can achieve the utility higher than that without the memory effect.
\end{abstract}
\begin{IEEEkeywords}
Intelligent reflecting surface, next-generation wireless network, evolutionary game, fractional game, dynamic network service selection.
\end{IEEEkeywords}

\section{Introduction}

Intelligent Reflecting Surface (IRS) is an emerging technology for the development of the next-generation wireless networks~\cite{wu2019towards}, \cite{di2019smart}. IRS consists of passive elements that can reflect incident signals by intelligently adjusting their phase-shifts corresponding to wireless channels. The signals reflected by the IRS are added constructively with non-reflected signals, i.e., the Line-of-Sight (LoS) signals, at the user receivers to boost the received signal power and enhance the data rate at the users. As a result, IRS has recently been proposed to be integrated with next-generation wireless
technologies such as terahertz (THz) communications. The reason is that THz communication is able to provide data transmission rate up to terabit per second (Tbps), but this technology is limited in distance due to the fact that the THz waves are vulnerability to blockage and have severe path attenuation. For this, network service providers (SPs) deploy multiple IRSs in the THz networks to extend the network coverage and enhance the Quality of Service (QoS) of the mobile users. This results in a high density of the IRSs and base stations (BSs) in the THz networks, and the mobile users will more frequently handover among IRSs, BSs, and even SPs to achieve their desired QoS with low cost. In this case, the dynamic network selection of the mobile users in the THz networks becomes critical. 

Although there are some works, i.e.,~\cite{chen2019sum},~\cite{ma2020intelligent}, \cite{ma2020joint}, \cite{hao2020robust}, and~\cite{Ma2020archive}, that have recently investigated the IRS-enabled THz networks, they do not focus on the network selection of the mobile users. In particular, the work in \cite{ma2020joint} is proposed to determine phase shifts of IRS to maximize the data rate. Extending the work in \cite{ma2020joint}, the work in~\cite{chen2019sum} aims to jointly optimize the IRS phase shifts, beamforming at the BS, and spectrum allocation to maximize the data rate. Similar to~\cite{chen2019sum}, the work in \cite{hao2020robust} aims to maximize the overall network throughput by jointly optimizing the phase-shits of the IRS and beamforming at the BS.  


Also, there are some works that have recently investigated the network service selection. However, these works are not considered under the THz networks, and they do not account fir the dynamics of the network service selection. In particular, the authors in~\cite{gao2020stackelberg} and \cite {gao2020resource} consider an IRS-enabled network in which the BS is owned by an SP and the IRS belongs to a different SP. The Stackelberg game is then adopted to maximize the individual utilities of the SPs. Accordingly, the SP of IRS as the leader offers reflection modules as network resources and decides their prices. Note that the authors consider the allocation of reflection modules, i.e. instead of all the reflection elements, to the users since triggering all the reflection elements frequently results in an increased latency of adjusting phase-shift as well as the implementation complexity. Given the price, the SP of BS as the follower selects the best trigger reflection modules and determines their phase shifts and the transmit beamforming at the BS. Although the proposed game approach is demonstrated by the simulation results to be effective, the dynamics of the network service selections are not modeled in the work. Therefore, a more effective approach needs to be adopted to study the dynamic network service selection.

Evolutionary game~\cite{hofbauer2003evolutionary} as an effective tool can be adopted to study the dynamic selection and adaptation decision of a population of agents or players. This game has significant advantages~\cite{han2012game},\cite{quijano2017role} compared with the traditional games. In particular, the traditional games allow players, e.g. the mobile users in this work, to choose the desired solution immediately, while in the evolutionary game, the players are able to gradually adjust their strategies until they achieve a refined equilibrium solution. Especially, the evolutionary game is able to track and capture the strategy dynamics of the players as well as the strategy trends and behaviors of the players over time. Therefore, the evolutionary game has been widely used to study the dynamics of selection behaviors of users. The authors in \cite{liu2018evolutionary} investigate the mining pool selection of miners in a blockchain system. Accordingly, the miners compete to solve a crypto-puzzle to win the reward of mining new blocks. Due to the difficulty of the crypto-puzzle, the miners are willing to select mining pools for their secure stable profits. To study the dynamic selection of mining pool of the miners, the evolutionary game with the replicator dynamics is adopted for modeling the strategy evolution of the miners. It is demonstrated by both theory analysis and simulation results that in the case of two mining pools, the evolutionary game approach exists a unique Nash equilibrium at which no miner has an incentive to switch its pool selection since this will undermine some other miner's utility.

 Next-generation wireless networks are expected to deploy different wireless access technologies, and the wireless network technology selection of the mobile users is crucial that impacts their QoS. For this, the evolutionary game is adopted to effectively study the dynamic network selection of the mobile users as proposed in~\cite{niyato2008dynamics}. The evolutionary game is also adopted to model the network service selection of secondary transmitters in a backscatter-based cognitive network~\cite{gao2019dynamic}.~In particular, the system model includes multiple access points serving multiple secondary transmitters. Each access point provides three network services, namely harvest-then-transmit (HTT), backscatter, and HTT-backscatter, to the mobile users. The secondary transmitters receive different utilities when choosing network services from different access points. To model the access point and service adaptation of the secondary transmitters, a series of ordinary differential equations is used to formulate the replicator dynamic process. Both the theory and numerical results show that the evolutionary game approach exists a unique equilibrium at which the secondary transmitters achieve the same utility even if they select different access points and network services. This demonstrates that with the evolutionary game, the users can adapt their selections gradually to reach the equilibrium. Especially, the complexity of algorithm to implement the evolutionary game is low. In particular, as analyzed in~~\cite{gao2019dynamic}, the complexity of strategy adaptation at each secondary transmitter is $O(1)$, that is suitable for the dynamic strategy selections of the users. 
 

Given the aforementioned advantages, in this paper, we adopt the evolutionary game theory to study the dynamic service selection strategies of mobile users in the IRS-enabled terahertz network. The considered network consists of multiple SPs that deploy BSs along with multiple IRSs to provide network services to multiple mobile users. In particular, the SPs offer combinations of IRS and transmit power resources as network services that the users can select for their data transmissions. The network is thus considered to be a user-centric network. To satisfy different QoS requirements of the mobile users, similar to \cite{gao2020reflection},~\cite{gao2020stackelberg}, we assume that the SP divides its IRSs into reflection modules. Furthermore, the SPs have different transmit power levels that the mobile users can select. To model the SP and service adaptation of the users in the network, we leverage the replicator dynamic process that is expressed as a series of ordinary differential equations. Note that with the classical evolutionary game, the users only consider the instantaneous utility for their decision-making, i.e., the SP and service adaptation. This is not natural and practical due to the fact that the user is typically aware of its past network service experience when making the network selection. In other words, the awareness of the users' memory needs to be accounted. To address the limitation of the classical evolutionary game, we further adopt the fractional evolutionary game as a memory-aware economic process. The fractional evolutionary game enables the users to incorporate the instantaneous and past experiences of the users for their decisions. Both theoretical analysis and simulation results show the effectiveness of the proposed game approaches.







The main contributions of the paper include the followings:
\begin{itemize}
\item We consider the IRS-enabled terahertz network in which multiple SPs deploy IRSs to serve the mobile users. The SP offers IRS and transmit power resources as network resources to the mobile users. Different combinations of the network resources constitute different network services provided by the SPs. The network is a user-centric network in which the users can select and adapt the network services provided by the SPs over time to achieve their desired utility. The IRS-enabled terahertz network introduces new transmission scheme that makes the utility function of the users more complicated, and a new solution is required to model the network service adaptation of the users. To model the network service adaptation, we adopt the evolutionary game. 
\item We consider the scenario in which the users use the delayed information for their decisions. Such a delay can cause instability in the decision making process. In this scenario, the delayed replicator dynamics is adopted to model the SP and network service adaptation. We analyze the equilibrium region of the delayed replicator dynamics and show in the simulation results that the evolutionary game approach still reach an equilibrium with a small delay. 
\item To capture the users' memory on their decision-making, we incorporate the users' memory effect to reformulate the SP and network service selection problem into a fractional evolutionary game. We then compare the network selection strategies of the users between the classical game and fractional game. 
\item We theoretically prove that the fractional evolutionary game processes a unique equilibrium. Then, the simulation results with the direction field of the replicator dynamics are provided to verify the stability of the equilibrium. 
\item We provide performance evaluation to demonstrate the consistency with the analytical results and to validate both the proposed game approaches. The performance comparison between the two game approaches are also discussed and analyzed. 
\end{itemize}

The rest of the paper is organized as follows. In Section~\ref{system_model}, we present the IRS-enabled terahertz system and utility functions of the mobile users. In Section~\ref{classical_evol_game}, we formulate the dynamic SP and network service selection problem as the classical evolutionary game and analytically derive the stability region of the delayed replicator dynamics. In Section~\ref{frac_game}, we reformulate the dynamic SP and network service selection problem as a fractional evolutionary game, followed by the proofs of the existence, uniqueness, the stability of the equilibrium of the game. The simulation results and discussions are presented in Section~\ref{perform_eval}, and the conclusions are given in Section~\ref{conclu}.

\section{System Model}
\label{system_model}
This section presents the system model, channel models, and utility functions of the users in the network. Typical notations used in this paper are summarized in Table~\ref{table:major_notation}. 
\subsection{Network Model}

\begin{table}[h!]
\caption{List of frequency symbols used in this paper.}
\label{table:major_notation}
\footnotesize
\centering
\begin{tabular}{l|l}
\hline\hline
\textbf{Notation} & \textbf{Description}   \\ [1ex]
\hline
 $M,N$ & Number of SPs, number of users           \\ 
  \hline
 $I_m, J_{m,j}$ & Number of IRS of SP $m$, power level $j$ offered by SP $m$        \\ 
   \hline
 $K_m$ & Number of reflection elements of each IRS of SP $m$        \\ 
\hline
 $Q_m$ & Number of modules of each IRS of SP $m$     \\ 
 \hline
 $\theta_{m,l,k,e}$ & Phase shift of element $e$ of subset $k$ in IRS $l$ of SP $m$      \\ 
  \hline
 $N_{m,l,k,j}$ & Number of users selecting power level $J_{m,j}$ and subset $k$ in IRS $l$ of SP $m$    \\ 
\hline
 $p_{m,l,k,j,i}$ & Probability that user $i$ selects power level $J_{m,j}$ and subset $k$ in IRS $l$ of SP $m$     \\ 
  \hline
 $\mathbf{h}_{m,i}$ & Channel from BS $m$ to user $i$  \\ 
   \hline
 $\mathbf{h}_{m,l,k,i}^{\rm{IU}}$ & Channel from subset $k$ in IRS $l$ of SP $m$ to user $i$   \\ 
    \hline
 $\mathbf{G}_{m,l,k}$ & Channel from BS $m$ to subset $k$ in IRS $l$ of SP $m$   \\ 
 \hline
 $\mathbf{w}_{m,j,i}$ & Beamforming vector associated with user $i$ that selects power level $J_{m,j}$ of SP $m$   \\ 
 \hline
 $f, \mu,\beta$ & Carrier frequency, learning rate, order of the Caputo fractional derivative     \\ 
  \hline
 $\gamma_m^I, \gamma_m^P$ & Price per IRS element, price per power unit   \\ 
\hline
\end{tabular}
\label{table:parameters}
\end{table}

\begin{figure}[h]
 \centering
\includegraphics[width=0.6\linewidth]{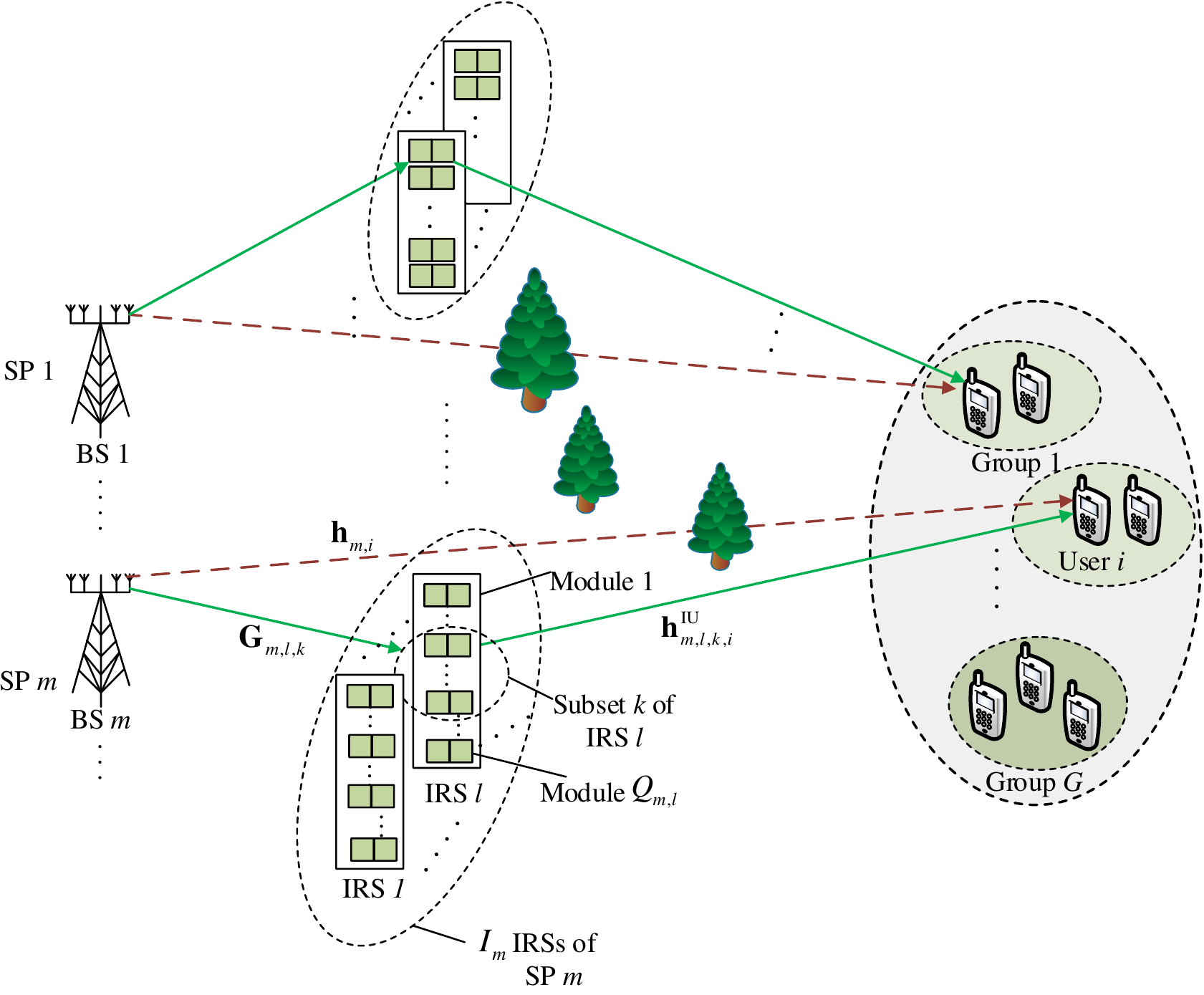}
 \caption{A system model with multiple SPs with multiple IRSs serving multiple users.}
  \label{IRS_model_selection}
\end{figure}
The system model is an IRS-enabled terahertz MIMO system as shown in~Fig.~\ref{IRS_model_selection}. The system model consists of a set $\mathcal{M}$ of $M$ SPs and a set $\mathcal{N}$ of $N$ single-antenna users. Without loss of generality, each SP $m \in \mathcal{M}$ deploys a BS, i.e., BS $m$, that is equipped with $L_m$ antennas. Denote $B_m$ as the bandwidth allocated to SP $m$, i.e., BS $m$. To provide flexible services to the users, BS $m$ has a set $\mathcal{P}_m$ of $P_m$ power levels, denoted by $\{J_{m,1},\ldots,J_{m,P_m}\}$, that the users can select for their transmissions. Note that the assumption of the discrete power levels is reasonable since in real networks, transmit power control algorithms choose steps of power increment/decrement~\cite{wu2001distributed}. Also, the number of power levels can be increased to match with the real system implementation without causing much more complexity. We also assume that $J_{m,1} <J_{m,2}< \dots < J_{m,P_m}$, where $ J_{m,P_m}$ is the maximum power of BS $m$.~To improve the QoS for the users, SP $m$ deploys a set $\mathcal{I}_m$ of $I_m$ IRSs. Let $l \in \mathbb{Z} $ denote the index of IRS, and $1\leq l \leq \max_{m} \{I_m\}$. We assume that IRSs belonging to the same SP have the same size, and IRS $l$ of SP $m$ has $K_{m,l}= K_{m}$ reflection elements. IRS $l$ of SP $m$ is divided into $Q_{m,l}=Q_m$ modules that are controlled by parallel switches. Each module in IRS $l$ consists of $E_{m,l}=E_m$ elements. Note that during a time slot, one BS-IRS pair of the corresponding SP can serve multiple users, but the user is associated with one BS-IRS pair. Moreover, the user can select one or multiple modules, i.e., a subset of modules, of the selected IRS. A network service is defined as a combination of a power level and a subset of modules. In general, the data throughput achieved by the user, say user $i$, depends on 1) the power level that the user selects, 2) the bandwidth allocated to the BS that the user selects, 3) the location of the selected IRS, 4) the number of modules of the selected IRS, and 5) the interference caused by other users selecting the same BS with user $i$. Note that the data throughput does not depend on indexes of the modules of the selected IRS. As such, each IRS $l$ of SP $m$ has a set $\mathcal{Q}_{m,l}$ including $Q_{m,l}$ of potential subsets of modules that the user can select, and subset $k, 1 \leq k \leq Q_{m,l},$ of the IRS has $k$ modules. Denote $\boldsymbol{\Theta}_{m,l,k}$ as the phase-shift matrix corresponding to the subset that the user selects, i.e., subset $k$ of IRS $l$ of SP $m$. Then, $\boldsymbol{\Theta}_{m,l,k}$ is a diagonal matrix in which its main diagonal consists of phase-shifts of $kE_{m,l}$ reflection elements of IRS $l$ of SP $m$. In particular, we have $\boldsymbol{\Theta}_{m,l,k}=\text{diag}(\theta_{m,l,k,1},\ldots,\theta_{m,l,k, kE_{m,l}})$, where $\theta_{m,l,k,e}$ is the phase-shift of reflection element $e$ of subset $k$ in IRS $l$ of SP $m$, $\theta_{m,l,k,e}=e^{j\varphi_{m,l,k,e}}, \varphi_{m,l,k,e} \in [0, 2\pi), e=\{1,\ldots,kE_{m,l}\}$. With the assistance of subset $k$ of IRS $l$ of SP $m$, the signal received at each user $i$ is the sum of 1) the received signal via the direct link, 2) the received signal via the IRS-assisted link, and 3) the intra-interference caused by other users that select the same BS with user $i$.~Let $\mathcal{N}_m$ denote the set of $N_m$ users selecting SP $m$, i.e., and also BS $m$. Then, the received signal at user $i$ when selecting subset $k$ of IRS $l$ and power level $J_{m,j}$ offered by SP $m$ is determined as follows:
\begin{align}
\notag
y_{i}=&\big{(}\mathbf{h}^{\text{H}}_{m,i} +   (\mathbf{h}^{\rm{IU}}_{m,l,k,i})^{\text{H}}\boldsymbol{\Theta}^{\text{H}}_{m,l,k}\mathbf{G}_{m,l,k}\big{)}\mathbf{w}_{m,j,i}s_{i}  + \\ 
&+  \sum_{n\in \mathcal{N}_m, n \neq i} \big{(}\mathbf{h}^{\text{H}}_{m,i} +   (\mathbf{h}^{\rm{IU}}_{m,l,k,i})^{\text{H}}\boldsymbol{\Theta}^{\text{H}}_{m,l,k}\mathbf{G}_{m,l,k}\big{)}\mathbf{w}_{m,j,n}s_{n}      +  \omega_i, 
\label{received_signal_user}
\end{align}
where $s_{i}$ is the data symbol intended to user $i$, $\mathbf{w}_{m,j,i}  \in \mathbb{C}^{L_m\times1}$ is the beamforming vector associated with $s_{i}$ containing power level $J_{m,j}$ that the user selects, $\mathbf{h}_{m,i} \in \mathbb{C}^{L_m\times1}$ is the channel from BS $m$ to user $i$, $\mathbf{h}^{\rm{IU}}_{m,l,k,i} \in \mathbb{C}^{K_{m,l}\times1}$ is the channel from subset $k$ of IRS $l$ of SP $m$ to user $i$, $\mathbf{G}_{m,l,k} \in \mathbb{C}^{K_{m,l} \times L_m}$ is the vector of channels from BS $m$ to subset $k$ of IRS $l$, and $\omega_i$ is the Gaussian noise at user $i$, $\omega_i \sim \mathcal{CN}(0,\sigma_0^2)$, where $\sigma_0^2$ is the variance. Similar to \cite{wu2019intelligent} and~\cite{zhou2020intelligent}, we assume that each BS $m$ has a perfect knowledge of channel state information (CSI) of the channels. Note that we aim to model network selection strategies of the users which is not influenced by this assumption. In fact, CSI estimation algorithms such as \cite{liaskos2019joint} can be used to obtain the full CSI at all the BSs with low training overhead. Since IRSs are typically deployed in static environments, we can also assume that the quasi-static flat-fading model or even static flat-fading model is applied for all channels~\cite{wu2019intelligent}. The signal-to-interference-plus-noise ratio (SINR) over bandwidth $B_m$ of user $i$ is defined as follows:
\begin{align}
\eta_{m,l,k,j,i} = \frac{\left |(\mathbf{h}^{\text{H}}_{m,i} + (\mathbf{h}^{\rm{IU}}_{m,l,k,i})^{\text{H}}\boldsymbol{\Theta}^{\text{H}}_{m,l,k}\mathbf{G}_{m,l,k})\mathbf{w}_{m,j,i}\right|^2} {\left |\sum_{n\in \mathcal{N}_m, n \neq i} \big{(}\mathbf{h}^{\text{H}}_{m,i} +   (\mathbf{h}^{\rm{IU}}_{m,l,k,i})^{\text{H}}\boldsymbol{\Theta}^{\text{H}}_{m,l,k}\mathbf{G}_{m,l,k}\big{)}\mathbf{w}_{m,j,n}\right|^2+  B_m\sigma_0^2}.
\label{SINR_user_i}
\end{align}

To remove the intra-interference among the users, the BSs can use time-division multiple access for their users. In this case, $\eta_{m,l,k,i}$ can be expressed by
\begin{equation}
\eta_{m,l,k,j,i}=\frac{\left|(\mathbf{h}^{\text{H}}_{m,i} +(\mathbf{h}^{\rm{IU}}_{m,l,k,i})^{\text{H}}\boldsymbol{\Theta}^{\text{H}}_{m,l,k}\mathbf{G}_{m,l,k})\mathbf{w}_{m,j,i}\right |^2} {B_m\sigma_0^2}.
\label{SINR_user_ii}
\end{equation}
\subsection{Channel Model}
In this section, we describe the channel models for the IRS-enabled THz network. The channel models for each user include the channel between the BS that the user selects and the user, and the cascaded channel of the IRS-aided link. Here, the cascaded channel of the IRS-aided link includes (1) the channel between the BS and the subset of IRS modules that the user selects and (2) the channel between the subset of IRS modules and the user. To model the channels in the THz network, we adopt the Saleh-Valenzuela channel model~\cite{lin2015indoor}. Without loss of generality, we model the channels between user $i$ when it selects BS $m$, subset $k$ of IRS $l$ of SP $m$. In particular, we determine models of channels $\mathbf{h}_{m,i}$, $\mathbf{h}^{\rm{IU}}_{m,l,k,i}$, and $\mathbf{G}_{m,l,k}$ that are given in (\ref{SINR_user_i}). For an ease of presentation, we remove the indices $m, l, k, i$ from the channels, and thus $\mathbf{h}_{m,i}$, $\mathbf{h}^{\rm{IU}}_{m,l,k,i}$, and $\mathbf{G}_{m,l,k}$ can be expressed by $\mathbf{h}$, $\mathbf{h}^{\rm{IU}}$, and $\mathbf{G}$, respectively. Also, we assume that BS $m$ has $L$ antennas, subset $k$ of IRS $l$ that the user selects has $K$ reflection elements. 

\subsubsection{BS-user channel}
The channel between the BS and the user is expressed by

\begin{equation}
\mathbf{h} = \kappa^{(0)}\mathbf{a}(\Phi^{(0)})+ \sum_{l=1}^{\mathcal{L}} \kappa^{(l)}\mathbf{a}(\Phi^{(l)}),
\label{Saleh_channel}
\end{equation}
where $\kappa^{(0)}\mathbf{a}(\Phi^{(0)})$ is the LoS element in which $\kappa^{(0)}$ is the gain and $\mathbf{a}(\Phi^{(0)})$ is the spatial direction, and $\kappa^{(l)}\mathbf{a}(\Phi^{(l)}), 1 \leq l \leq \mathcal{L}$, is one of $\mathcal{L}$ non-LoS (NLoS) elements. $\mathbf{a}(\Phi^{(l)})$ is the $L \times 1$ array steering vector corresponding to th-$l$ element that is defined as follows:
 \begin{equation}
\mathbf{a}(\Phi^{(l)})= \frac{1}{\sqrt{L}} \big{[} e^{-j2\pi\Phi^{(l)}(-\frac{N-1}{2})},\ldots, e^{-j2\pi\Phi^{(l)}(\frac{N-1}{2})}  \big{]},
 \end{equation}
where $\Phi^{(l)}$ is the spatial direction of the signal corresponding to component $l$, that is defined as $\Phi^{(l)}= \frac{d}{\lambda}\sin\xi^{(l)} $, where $\xi^{(l)} \in [ -\pi/2, \pi/2]$ is the angle-of-departure (AoD) of path $l$ corresponding to the BS and user, $\lambda$ is the signal wavelength, and $d$ is the distance between adjacent antennas of the BS or the distance between adjacent IRS elements of the IRS that is typically defined as $d=\lambda/2$. 

\subsubsection{BS-IRS-user channel}
The channel between the BS and the subset of modules of IRS that the user selects can be modeled as 
 \begin{equation}
\mathbf{G}= \sqrt{  \frac{LK}{\mathcal{L}_{\text{BS-I}}} }\sum_{l=1}^{\mathcal{L}_{\text{BS-I}}}\kappa_{\text{BS-I}}^{(l)}\mathbf{a}_{\text{I}}(\Phi^{(l)}_{\text{AOA}})\mathbf{a}_{\text{BS}}^{\text{H}}(\Phi^{(l)}_{\text{AOD}}),
 \end{equation}  
 where $\mathcal{L}_{\text{BS-I}}$ denotes the scattering paths between the BS and the subset of IRS that the user selects, $\kappa^{(l)}_{\text{BS-I}}$ is the complex gain of path $l$, and $\Phi^{(l)}_{\text{AOD}}$ and $\Phi^{(l)}_{\text{AOA}}$ are the spatial directions of path $l$ corresponding to the BS and the subset of IRS, respectively. We consider the BS's antennas and the IRS's reflection elements as uniform linear arrays (ULAs), and thus we can determine $\mathbf{a}_{\text{BS}}(\Phi^{(l)}_{\text{AoD}}) \in \mathcal{C}^{L}$ and $\mathbf{a}_\text{I}(\Phi^{(l)}_{\text{AOA}}) \in \mathcal{C}^{K}$ as follows: $\mathbf{a}_{\text{BS}}(\Phi^{(l)}_{\text{AOD}}) = \frac{1}{\sqrt{L}}\big{[} e^{-j2\pi\Phi^{(l)}_{\text{BS}}  (-\frac{L-1}{2}) },\ldots, e^{-j2\pi\Phi^{(l)}_{\text{BS}}(\frac{L-1}{2})}  \big{]}$, and $\mathbf{a}_{\text{I}}(\Phi^{(l)}_{\text{AOA}}) = \frac{1}{\sqrt{K}}\big{[} e^{-j2\pi\Phi^{(l)}_{\text{I}}  (-\frac{K-1}{2}) },\ldots, e^{-j2\pi\Phi^{(l)}_{\text{I}}(\frac{K-1}{2})}  \big{]}$. Here, $\Phi^{(l)}_{\text{I}}= \frac{d}{\lambda}\sin\xi^{(l)}_{\text{I}}$ and $\Phi^{(l)}_{\text{BS}}= \frac{d}{\lambda}\sin\xi^{(l)}_{\text{BS}} $, where $\xi^{(l)}_{\text{BS}} \in [ -\pi/2, \pi/2]$ and $\xi^{(l)}_{\text{I}} \in [ -\pi/2, \pi/2]$ are the AoD and the angle-of-arrival (AoA) of path $l$ corresponding to the BS and the subset of IRS, respectively.
 
 Similarly, we can determine the channel between the subset of IRS and the user as follows:
 
 \begin{equation}
\mathbf{h}^\text{IU} = \sum_{l=0}^{\mathcal{L}_\text{I-U}} \kappa^{(l)}_{\text{I-U}}\mathbf{a}_{\text{I-U}}(\Phi^{(l)}_{\text{I-U}}),
\label{Saleh_channel}
\end{equation}
where $\mathcal{L}_{\text{I-U}}$ denotes the scattering paths between the subset of IRS and the user, $\kappa^{(l)}_{\text{I-U}}$ is the complex gain of path $l$, and $\mathbf{a}_{\text{I}}(\Phi^{(l)}_{\text{I-U}}) = \frac{1}{\sqrt{K}}\big{[} e^{-j2\pi\Phi^{(l)}_{\text{I-U}}  (-\frac{K-1}{2}) },\ldots, e^{-j2\pi\Phi^{(l)}_{\text{I-U}}(\frac{K-1}{2})}  \big{]}$, where $\Phi^{(l)}_{\text{I-U}}= \frac{d}{\lambda}\sin\xi^{(l)}_{\text{I-U}} $ with $\xi^{(l)}_{\text{I-U}} \in [ -\pi/2, \pi/2]$ being the AoD of path $l$ corresponding to the subset of IRS.

\subsubsection{Path loss}
In THz communication systems, the non-LoS elements is proved to be much weaker than the LoS element, i.e., lower than $20$ dB~\cite{han2014multi}. Therefore, in the IRS-enabled THz network, we consider the LoS elements of the involved channels. Without loss of generality, we calculate the path loss of the LoS element between the BS and the user. This can be applied to calculating the path loss of the LoS elements between the BS and the subset of IRS as well as that between the subset of IRS and the user. The path loss of the LoS element, denoted by $\chi_\text{LoS}$, is a function of spreading loss and molecular absorption loss, denoted by $\chi_\text{abs}$. Then, the path loss is determined as~\cite{han2014multi}
\begin{equation}
\chi_\text{LoS}(f)= \chi_\text{spr}(f)\chi_\text{abs}(f)e^{-j2\pi f\tau_{\text{LoS}}},
\end{equation}
where $f$ is the carrier frequency, and $\tau_{\text{LoS}}=r/c$ is the LoS propagation time, $r$ is the distance between the BS and the user, and $c$ is the speed of light. $\chi_\text{spr}$ is the spreading loss that is determined by
\begin{equation}
\chi_\text{spr}(f)= \frac{c}{4\pi f r}.
\end{equation}

While, $\chi_\text{abs}(f)$ is the molecular absorption loss that is determined as follows:
\begin{equation}
\chi_\text{abs}(f)= e^{-0.5\zeta(f)r},
\end{equation}
where $\zeta(f)$ is the medium absorption coefficient that depends on carrier frequency and the
composition of the transmission medium at a molecular level. For example, given $f=2$ THz,  the molecular absorption coefficient $\zeta=2.3\times10^{-5}$ m$^{-1}$ for oxygen (O$_{2}$)~\cite{jornet2011channel}.

\subsection{Utility Functions}

This section presents the utility functions of the users when they select different SP and network services. There are totally $N$ users, $M$ BSs, $P_m$ power levels, and $\sum_{m=1}^MI_m$ IRSs in the network. Users selecting the same SP, the same subset of the IRS, and the same power level are grouped into a group. Thus, there are totally $G$ groups in the network, where $G=\sum_{m=1}^M\sum_{l=1}^{I_m}P_mI_mQ_{m,l}$.~Without loss of generality, we can assume that group $g,1 \leq g \leq G$, consists of users that select SP $m$, subset $k$ of IRS $l$, i.e., the corresponding phase-shift matrix $\boldsymbol{\Theta}_{m,l,k}$, and power level $J_{m,j}$. For this, we can denote $g$ as the combination of indexes $(m,l,k,j)$ for the expression simplification. Let $\mathcal{N}_{m,l,k,j}$, i.e., $\mathcal{N}_{g}$, be a set of $N_{m,l,k,j}$, i.e., $N_{g}$, of users in group $g$. We have $\sum_{m=1}^M\sum_{l=1}^{I_m}\sum_{k=1}^{Q_{m,l}}\sum_{j=1}^{P_m}N_{m, l,k,j}=N$, and user $i \in \mathcal{N}_{g}$ selects SP $m$, IRS $l$, $\boldsymbol{\Theta}_{m,l,k}$, and $J_{m,j}$ at a probability of $p_{m,l,k,j,i}=N_{m,l,k,j}/N$. As the expected number of the users selecting SP $m$, IRS $l$, $\boldsymbol{\Theta}_{m,l,k}$, and $J_{m,j}$ is $p_{m,l,k,j,i}N$ and BS $m$ adopts the time-division multiple access, each user in the group will access the channel with a probability of
$\frac{1}{p_{m,l,k,j,i}N}$ for every time slot. Therefore, the expected data rate that the user in the group can achieve is 
 \begin{equation}
 \overline{R}_{m,l,k,j,i}=\frac{B_m}{p_{m,l,k,j,i}N}
 \log_2\big{(}1+\eta_{m,l,k,j,i}\big{)},
 \label{expected_data_rate}
 \end{equation}
 where $\eta_{m,l,k,j,i}$ is given in~(\ref{SINR_user_ii}). Let $v_{m,l,k,j,i}$ denote the value of unit data to user $i$ in group $g$ when selecting SP $m$, subset $k$ of IRS $l$, and power level $J_{m,j}$. Denote $\gamma_m^I$ as the price per element in IRSs of SP $m$ and $\gamma_m^P$ as the price per unit power. The prices, i.e., $\gamma_m^I$ and $\gamma_m^P$, are set by SP $m$ that are constant. Since users in each group share the same resources, they should share the resource cost. Then, the utility of the user is given by
\begin{equation}
 u_{m,l,k,j,i} = v_{m,l,k,j,i} \overline{R}_{m,l,k,j,i} - \frac{\gamma_m^I \norm{\boldsymbol{\Theta}_{m,l,k}}_0 - \gamma_m^P J_{m,j}} {p_{m,l,k,j,i}N},
 \label{utility_user}
\end{equation}
where the $l_0$-norm is used to count the number of non-zero elements of a diagonal matrix that here refers to the number of active reflection elements of IRS $l$ of SP $m$ that the user selects.

\section{Evolutionary Game Formulation}
\label{classical_evol_game}
In this section, we leverage the evolutionary game to model the dynamic SP and network service selection
of the users. We prove that the game can achieve the evolutionary equilibrium at which no user has an incentive to change their network service strategy. 

\subsection{Game Formulation}
\label{classical_evol_game_form}
Each user in the network is able to adapt their network selection over time, and it can achieve different utility at different time points. Thus, by taking the SP and network service selection strategies, the expected or average utility of user $i$ at time $t$ is
\begin{equation}
\overline{u}_{i}=\sum_{m=1}^M\sum_{l=1}^{I_m}\sum_{k=1}^{Q_{m,l}}\sum_{j=1}^{P_m}p_{m, l,k,j,i} u_{m,l, k,j,i}.
 \label{utility_user_average}
\end{equation} 

To model the SP and service adaptation of the users, we leverage the replicator dynamic process that is expressed as a series of ordinary differential equations as follows~\cite{gao2019dynamic}, \cite{feng2020dynamic}:
\begin{align}
\notag
\dot{p}_{m,l, k,j,i}(t) =&\mu p_{m, l,k,j,i}(t)\left [u_{m, l,k,j,i}(t)-\overline{u}_{i}(t)\right ], \\
& m \in \mathcal{M}, l \in \mathcal{I}_m, k \in \mathcal{Q}_{m,l}, j \in \mathcal{P}_m, i \in \mathcal{N}_g, \forall t,
\label{replicator_evolu}
\end{align}
where $\dot{p}_{m, l,k,j,i}(t)$ represents the first derivative of $p_{m,l, k,j,i}$ with respect to $t$, and $p_{m,l, k,j,i}(t_0)=p^0_{m, l,k,j,i}$ is the initial strategy of the user in group $g$ at $t_0$. The factor $\mu$ is the learning rate of the users that evaluates the strategy adaptation rate. The replicator dynamics process given in (\ref{replicator_evolu}) represents the population strategy evolution of the users in the network. That is, the population of users evolves over time, and the game converges to the evolutionary equilibrium. This means that the users select an SP and its service with higher utility over time, and the evolutionary equilibrium can be defined as the set of stable fixed points of the replicator dynamics. 

To show the existence of the evolutionary equilibrium of the game as defined in (\ref{replicator_evolu}), we use the following theorem. First, we let $f_{g,i}(t,p_{g,i})=\mu p_{g,i}(t)\left[u_{g,i}(t)-\overline{u}_i(t)\right]$, where $g$ denotes as the combination of indexes $(m,l,k,j)$, and then we rewrite equation (\ref{replicator_evolu}) as follows:
\begin{equation}
\dot{p}_{g,i}(t)  =f_{g,i}(t,p_{g,i}), \text{ } p_{g,i}(t_0)=p^0_{g,i}, \text{ } g \in \{1,\ldots,G\}, i \in \mathcal{N}_g.
\label{replicator_evolu_2}
\end{equation}

\begin{theorem}~\cite{picard_theorem}
Suppose that functions $f_{g,i}(t,p_{g,i})$ $\frac{\partial f_{g,i}}{\partial p_{g,i}}(t,p_{g,i})$ are continuous in some open rectangle $\left\{ (t,p_{g,i}): 0 \leq  t \leq  \tau, 0 < p_{g,i} \leq 1\right\}$ that contains point $(t_0,p^0_{g,i})$. Then, the problem in (\ref{replicator_evolu_2}) has a unique solution in the interval of $I=\left[t_0-\varsigma, t_0+\varsigma\right]$, where $\varsigma> 0$. Moreover, the Picard iteration defined by 
\begin{equation}
p^{<z+1>}_{g,i}(t)  =p^0_{g,i} + \int_{t_0}^t f_{g,i}(t,p^{<z>}_{g,i}(t) ) \;\mathrm{d}t
\label{Picard_iteration}
\end{equation}
produces a sequence of functions ${ p^{<z>}_{g,i}(t)}$ that converges to the solution uniformly on $I$.
\label{theorem_picard}
\end{theorem}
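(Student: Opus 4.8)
The plan is to recognize system~(\ref{replicator_evolu_2}) as a standard initial value problem and to prove the statement by the Picard--Lindel\"of argument, i.e.\ by exhibiting the solution as the unique fixed point of the Picard operator via the Banach fixed point theorem. First I would fix a closed rectangle $\mathcal{R}=\{(t,p):|t-t_0|\le a,\ |p-p^0_{g,i}|\le b\}$ lying inside the open rectangle on which $f_{g,i}$ and $\partial f_{g,i}/\partial p_{g,i}$ are continuous (note the hypotheses genuinely hold here, since $f_{g,i}(t,p)=\mu p\,[u_{g,i}(t)-\overline u_i(t)]$ is polynomial in the strategy variables with coefficients assembled from the continuous utilities). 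Compactness of $\mathcal{R}$ together with continuity of $f_{g,i}$ yields a bound $|f_{g,i}|\le \mathsf{M}_0$ on $\mathcal{R}$, and continuity of $\partial f_{g,i}/\partial p_{g,i}$ yields $|\partial f_{g,i}/\partial p_{g,i}|\le \mathsf{L}_0$; the mean value theorem then gives the Lipschitz estimate $|f_{g,i}(t,p)-f_{g,i}(t,\tilde p)|\le \mathsf{L}_0\,|p-\tilde p|$ for all $(t,p),(t,\tilde p)\in\mathcal{R}$.

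Next I would pass to the equivalent integral formulation: a continuous $p_{g,i}$ solves~(\ref{replicator_evolu_2}) near $t_0$ if and only if it is a fixed point of the operator $(T\phi)(t)=p^0_{g,i}+\int_{t_0}^{t} f_{g,i}(s,\phi(s))\,\mathrm{d}s$, which is exactly the Picard iteration~(\ref{Picard_iteration}). Setting $\varsigma=\min\{a,\,b/\mathsf{M}_0\}$ and $I=[t_0-\varsigma,t_0+\varsigma]$, let $X$ be the complete metric space of continuous functions $\phi:I\to[p^0_{g,i}-b,\,p^0_{g,i}+b]$ under the sup norm. The choice of $\varsigma$ makes $T$ map $X$ into itself, because $|(T\phi)(t)-p^0_{g,i}|\le \mathsf{M}_0\,|t-t_0|\le \mathsf{M}_0\,\varsigma\le b$. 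For the contraction property, the direct estimate gives $\norm{T\phi-T\psi}_\infty\le \mathsf{L}_0\,\varsigma\,\norm{\phi-\psi}_\infty$, a contraction once $\varsigma$ is shrunk so that $\mathsf{L}_0\,\varsigma<1$; alternatively, iterating the kernel estimate yields $\norm{T^{n}\phi-T^{n}\psi}_\infty\le (\mathsf{L}_0\varsigma)^{n}/n!\;\norm{\phi-\psi}_\infty$, so some power of $T$ is a contraction without further shrinking $\varsigma$. The Banach fixed point theorem then delivers a unique fixed point $p^\star_{g,i}\in X$, which is the unique solution of the IVP on $I$, and the successive approximations $p^{<z+1>}_{g,i}=T p^{<z>}_{g,i}$ started from the constant $p^0_{g,i}$ converge to $p^\star_{g,i}$ in the sup norm, i.e.\ uniformly on $I$. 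To upgrade uniqueness from ``within $X$'' to ``among all solutions,'' I would close with a Gr\"onwall argument: any two solutions $p,\tilde p$ satisfy $|p(t)-\tilde p(t)|\le \mathsf{L}_0\int_{t_0}^{t}|p(s)-\tilde p(s)|\,\mathrm{d}s$, forcing $p\equiv\tilde p$ on $I$.

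The main obstacle is not any one computation but the simultaneous bookkeeping of the radius $\varsigma$: it must be chosen small enough to (i) keep $\mathcal{R}$ inside the open rectangle where the hypotheses are available, (ii) keep the image of $T$ inside the ball of radius $b$ so the iteration stays in the region where $f_{g,i}$ is controlled, and (iii) produce a genuine contraction (or, more elegantly, sidestep this last shrink via the $T^{n}$ estimate). Verifying the self-mapping property (ii) --- that the Picard iterates do not escape $\mathcal{R}$ --- is the step that deserves the most care; the remainder is the routine Banach fixed point machinery. Since the statement is quoted from~\cite{picard_theorem}, an alternative is simply to cite it and verify only that $f_{g,i}$ and $\partial f_{g,i}/\partial p_{g,i}$ meet the continuity hypothesis, which, as noted above, is immediate from the polynomial form of $f_{g,i}$ and the continuity of the utilities $u_{g,i}$ and $\overline u_i$.
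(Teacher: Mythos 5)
Your proposal is correct and follows essentially the same route as the paper: both reduce the claim to verifying that $f_{g,i}$ and $\partial f_{g,i}/\partial p_{g,i}$ are continuous (hence Lipschitz in $p_{g,i}$ on a compact sub-rectangle) and then invoke the Banach fixed point theorem on the Picard integral operator. The only difference is one of completeness --- the paper verifies the continuity hypotheses and then defers the fixed-point machinery to the cited reference, whereas you carry out the contraction argument, the choice of $\varsigma$, and the Gr\"onwall uniqueness upgrade explicitly.
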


\begin{proof}
Theorem \ref{theorem_picard} means that problem in (\ref{replicator_evolu_2}) converges to a unique solution, i.e., the equilibrium of the game defined in~(\ref{replicator_evolu_2}), given that function $f_{g,i}(t,p_{g,i})$ and its derivative are continuous with respect to time $t$. Therefore, we first show that $f_{g,i}(t,p_{g,i})$ and $\frac{\partial f_{g,i}}{\partial p_{g,i}}(t,p_{g,i})$ are continuous functions in the rectange $\left\{ (t,p_{g,i}): 0 \leq  t \leq  \tau, 0\leq p_{g,i} \leq  1\right \}$. Indeed, it is clear that function $p_{g,i}(t)=\frac{N_{g,i}(t)}{N}$ is continuous at every $t_0 \in [ 0, \tau]$. Moreover, due to the static flat-fading channel model, variables $\mathbf{h}_{m,i}(t)$, $\mathbf{h}_{g,i}^{\rm{IU}}(t)$, and $\mathbf{G}_{m,l,k}(t)$ are constant and thereby continuous at every $t_0 \in [ 0, \tau]$. Correspondingly, $\eta_{g,i}(t)$ is continuous at every $t_0 \in [ 0, \tau]$, and functions $\overline{R}_{g,i}(t), u_{g,i}(t)$, and $\overline{u}_i(t)$ are also continuous at every $t_0 \in [ 0, \tau]$ if $p_{g,i}(t_0)\neq 0$. Since $f_{g,i}(t,p_{g,i})=\mu p_{g,i}(t)[u_{g,i}(t)-\overline{u}_i(t)]$ and $\frac{\partial f_{g,i}}{\partial p_{g,i}}(t,p_{g,i})=\mu\left[u_{g,i}(t)-\overline{u}_i(t)\right]$, then $f_{g,i}(t,p_{g,i})$ and $\frac{\partial f_{g,i}}{\partial p_{g,i}}(t,p_{g,i})$ are continuous functions in the open rectangle $\left\{ (t,p_{g,i}): 0 \leq  t \leq  \tau, 0 < p_{g,i} \leq 1\right\}$. 

Given that the continuity of $f_{g,i}(t,p_{g,i})$ and its derivative, there are many ways to prove Theorem~\ref{theorem_picard}. One of them is leveraging the Banach Fixed Point Theorem (BFPT)~\cite{ciesielski2007stefan} to approximate a solution, i.e., a fixed point, to (\ref{replicator_evolu_2}) by constructing a sequence of functions that converges to a unique solution. The proof of Theorem~\ref{theorem_picard} using the BFPT is well explained and presented in~\cite{picard_theorem}. The  unique solution refers to the game equilibrium at which 1) all the users achieve the same utility and 2) no user has an incentive to change its network service selection.
\end{proof}

The overall process of the network service selection is summarized as follows. Initially, each user randomly selects an SP and a service of the SP. Given the user selection, the SP determines the optimal phase-shift and beamforming for its associated users according to Algorithm~\ref{phase_shift}. The user computes its utility according to (\ref{utility_user}) and transmits the utility information to the SP. The user compares its utility and the expected utility determined by (\ref{utility_user_average}) and can change its network service selection to achieve a higher utility value. After all the users achieve the same utility by choosing any strategies, then no user has an incentive to change its network service selection and the game converges to the evolutionary equilibrium. 

\begin{algorithm}
\footnotesize
        \caption{\footnotesize Optimizing phase-shift and beamforming for user $i$ when selecting SP $m$, subset $k$ of IRS $l$, and power level $J_{m,j}$~\cite{yu2019miso} (in this algorithm, function $\rm{unt}(\mathbf{a})$ is defined as $\rm{unt}(\mathbf{a})=[a_1/|a_1|,\ldots, a_n/|a_n|]$)}\label{phase_shift}
        \begin{algorithmic}[1]
	\State Output: $\mathbf{w}_{m,j}$ and $\mathbf{\Theta}_{m,l,k}$ for each user;
            \State Initialize: $t=0$, $\epsilon_1=0$, $\mathbf{\Theta}_{m,l,k}^0$;
            \State Calculate $\mathbf{R}_{11}=\text{diag}(({\mathbf{h}^{\rm{IU}}_{m,l,k,i})^{\rm{H}})\mathbf{G}_{m,l,k}\mathbf{G}_{m,l,k}^{\rm{H}}\text{diag}(\mathbf{h}^{\rm{IU}}_{m,l,k,i}}) $;
	 \State Construct $\mathbf{R}$ = \[
 \begin{bmatrix}
\mathbf{R}_{11}&  \text{diag}(({\mathbf{h}^{\rm{IU}}_{m,l,k,i}})^{\rm{H}})\mathbf{G}_{m,l,k}\mathbf{h}_{m,l,k,i} \\
 \mathbf{h}_{m,i}^{\rm{H}}\mathbf{G}_{m,l,k}^{\rm{H}}\text{diag}(\mathbf{h}^{\rm{IU}}_{m,l,k,i}) & 0 
 \end{bmatrix}
\]
            \Repeat 
            \State $\mathbf{v}^{(t)}=[\rm{diag}(\mathbf{\Theta}_{m,l,k}), t]^{\top}$;
		\State Calculate $\mathbf{v}^{(t+1)}$  according by $\mathbf{v}^{(t+1)}$=unt$(\mathbf{Rv}^{(t)})$;
		\State $t \gets t+1$;
		\Until{$\parallel{\mathbf{Rv}^{(t+1)}\parallel_{1}-\parallel\mathbf{Rv}^{(t)}\parallel_{1}} \leq \epsilon$};
           \State Take first $kE_{m,l}$ elements of ${(\mathbf{v}^{t+1})}^{*}$ as the main diagonal of $\mathbf{\Theta}_{m,l,k}$;
	\State Compute $\mathbf{w}_{m,j,i}=\sqrt{J_{m,j}}\frac{\mathbf{G}_{m,l,k}^{\rm{H}}\rm{diag}(\mathbf{h}^{\rm{IU}}_{m,l,k,i})\mathbf{\Theta}^{\rm{H}}_{m,l,k}+\mathbf{h}_{m,l,k,i}}{\parallel\mathbf{G}_{m,l,k}^{\rm{H}}\rm{diag}(\mathbf{h}^{\rm{IU}}_{m,l,k,i})\mathbf{\Theta}^{\rm{H}}_{m,l,k}+\mathbf{h}_{m,i}\parallel}$.
       \end{algorithmic}
    \end{algorithm}

The computational complexity of the algorithm is mainly caused from 1) the phase-shift and beamforming optimization implemented at the SPs (BSs) side and 2) the utility computation implemented at the user side. When the users select a network service of SP $m$, the SP optimizes the phase-shift matrix and beamforming for each user using Algorithm~\ref{phase_shift} that requires $5K_m^2L_m + K_m^2(5L_m+2) + K_m(L_m+2)+ 2L_m+1$ multiplications and additions. When the size increases to infinite, the complexity of the algorithm for each user is $\mathcal{O}(n^3)$. Each user calculates its own utility based on the prices and network services that it selects. Thus, the computational complexity of the user does not increase with the total numbers of users and the SPs. Thus, the complexity of the algorithm implemented at each user is $\mathcal{O}(1)$. This implies that the game approach is computationally efficient and highly scalable. 

\subsection{Delay in Replicator Dynamics}
In the game model discussed in the previous section, to make the decision on SP and service selections, the users need information about the average utility, i.e., $\overline{u}_{i}$, and the proportion of users choosing different strategies, i.e., $p_{m,l,k,j,i}$, from the BSs. However, the up-to-date information may not be available at the users due to the communication latency. Thus, at time instance $t$, the users may need to use the information at time $t-\delta$, i.e., delay for $\delta$ time units, to make the SP and service selections. Thus, the delayed replicator dynamic process is expressed as
\begin{align}
\label{replicator_evolu_delay}
\notag
\dot{p}_{m, l,k,j,i}(t) =&\mu p_{m, l,k,j,i}(t-\delta)\left[u_{m, l,k,j,i}(t-\delta)-\overline{u}_i(t-\delta)\right], \\
& m \in \mathcal{M}, l \in \mathcal{L}_m, k \in \mathcal{Q}_{m,l}, j \in \mathcal{P}_m, i \in \mathcal{N}_g ,\forall t.
\end{align}

Note that as delay $\delta$ is large, the decisions of the users based on the outdated information tend to be inaccurate. In this case, the SP and service selections may not converge. How to determine $\delta^*$ such that the selections converge is challenging. As an example, consider a simple scenario with $M=2$, and SP $m$ offers one service including subset $\boldsymbol{\Theta}_{m}$ and power level $J_m$:
\begin{theorem}
The evolutionary game can converge to a stable equilibrium if the value of $\delta$ is satisfied the following condition:
\begin{equation}
\delta^* < \frac{\pi}{2\mu \sum_{m\in \mathcal{M}} \frac{v_mB_m\log_2(1 + \eta_m) -\gamma_m^I||\boldsymbol{\Theta}_{m}||_0-\gamma_m^PJ_m }{N} }.
\label{stability}
\end{equation}
\label{theorem_stability}
\end{theorem}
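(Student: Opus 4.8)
The plan is to specialize the delayed replicator dynamics (\ref{replicator_evolu_delay}) to the stated two–SP, single–service instance, reduce it to a scalar linear delay differential equation, and then invoke the classical characteristic–root criterion for such equations. First I would collapse the state to a single variable: with $M=2$ and one service per SP, write $p(t)$ for the fraction of the user population attached to SP~$1$, so that the fraction on SP~$2$ is $1-p(t)$ and only one of the two replicator equations is independent. Substituting the utility expression (\ref{utility_user}) together with (\ref{expected_data_rate}), each $u_m$ has the form $u_m = c_m/(p_m N)$ with $c_m = v_m B_m\log_2(1+\eta_m)-\gamma_m^I\|\boldsymbol{\Theta}_m\|_0-\gamma_m^P J_m$ a constant independent of the population state — crucially, under the TDMA model (\ref{SINR_user_ii}) the SINR $\eta_m$ does not depend on how many users chose SP~$m$. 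Since $\overline{u}=p\,u_1+(1-p)u_2$ gives $u_1-\overline{u}=(1-p)(u_1-u_2)$, the product $p(1-p)(u_1-u_2)$ simplifies to an \emph{affine} function of $p$, so (\ref{replicator_evolu_delay}) becomes exactly
\begin{equation}
\dot{p}(t)=\frac{\mu}{N}\bigl(c_1-(c_1+c_2)\,p(t-\delta)\bigr).
\label{reduced_dde}
\end{equation}

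Next I would identify the equilibrium and normalize. The unique interior rest point of (\ref{reduced_dde}) is $p^{\star}=c_1/(c_1+c_2)$, equivalently the state at which $u_1=u_2$; it lies in $(0,1)$ provided $c_1,c_2>0$, which is the natural assumption that both services yield positive value. Writing $x(t)=p(t)-p^{\star}$ turns (\ref{reduced_dde}) into the scalar linear DDE $\dot{x}(t)=-\lambda\,x(t-\delta)$ with $\lambda=\mu(c_1+c_2)/N=\mu\sum_{m\in\mathcal{M}}c_m/N$, and one checks directly that the right–hand side of (\ref{stability}) equals $\pi/(2\lambda)$. So the theorem is equivalent to the statement that the zero solution of $\dot{x}(t)=-\lambda x(t-\delta)$ is asymptotically stable whenever $\lambda\delta<\pi/2$.

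Finally I would establish that stability criterion. Substituting $x(t)=e^{st}$ yields the characteristic equation $s+\lambda e^{-s\delta}=0$. At $\delta=0$ the only root is $s=-\lambda<0$, so the system is stable; as $\delta$ increases the roots move continuously, so instability can first occur only when a root crosses the imaginary axis. Setting $s=i\omega$ and separating real and imaginary parts gives $\cos(\omega\delta)=0$ and $\omega=\lambda\sin(\omega\delta)$, whose smallest positive solution is $\omega=\lambda$, $\delta=\pi/(2\lambda)$. Hence for every $0\le\delta<\pi/(2\lambda)$ all characteristic roots have negative real part, so $x(t)\to0$ and $p(t)\to p^{\star}$; this is precisely $\delta^{\star}<\pi\big/\bigl(2\mu\sum_{m\in\mathcal{M}}c_m/N\bigr)$, i.e. (\ref{stability}).

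The main obstacle is the last step: rigorously ruling out right–half–plane roots for all $\delta$ below the critical value. The cleanest route is to cite the standard Hayes/Pontryagin stability result for the quasi-polynomial $s+a e^{-s\delta}$ (asymptotic stability iff $0<a\delta<\pi/2$) rather than re-deriving it, or alternatively to give a short Lyapunov–Razumikhin argument. Two minor points also need attention: verifying $c_1,c_2>0$ so that $p^{\star}$ is a genuine interior equilibrium and $\lambda>0$, and confirming the affine collapse in Step~1 — that is, that $\eta_m$ and therefore $c_m$ are independent of the strategy fractions $p_m$, which is exactly what the TDMA form (\ref{SINR_user_ii}) provides.
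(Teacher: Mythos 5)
Your proof is correct and follows essentially the same route as the paper: both reduce the delayed replicator dynamics to the affine scalar delay equation $\dot{p}(t) = -\kappa\, p(t-\delta) + c$ with $\kappa = \mu\sum_{m}c_m/N$ and then invoke the characteristic-root (Hayes) criterion $\kappa\delta < \pi/2$. In fact you supply two steps the paper merely asserts — the explicit affine collapse of $\mu p(1-p)(u_1-u_2)$ using the $1/(p_mN)$ form of the utilities, and the imaginary-axis crossing computation giving the critical delay $\pi/(2\kappa)$ — whereas the paper writes down the matrix form $\dot{\mathbf{p}} = -\kappa\mathbf{I}\,\mathbf{p}(t-\delta)+\mathbf{c}$ without derivation and cites a stability reference (even stating the condition as $\kappa\delta<\pi$ in the text before concluding $\delta^*<\pi/(2\kappa)$), so your version is the more complete argument.
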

\begin{proof}
The delayed replicator dynamics in (\ref{replicator_evolu_delay}) can be rewritten as 
\begin{equation}
\mathbf{\dot{p}}(t) =\mathbf{A} \mathbf{p}(t-\delta)+\mathbf{c}, 
\end{equation}
where $\mathbf{\dot{p}}(t) ={[{\dot{p}_{1}}(t),\ldots, {\dot{p}_{M}}(t)]}^{\top}$, $\mathbf{p}(t) ={[{p}_{1}(t-\delta), \ldots, {p}_{M}(t-\delta)]}^{\top}$, 
$\mathbf{c} ={[\frac{\mu a_{1}}{N}, \ldots, \frac{\mu a_{M}}{N}]}^{\top}$ with
$a_{m} = \frac{v_mB_m\log_2(1 + \eta_m) -\gamma_m^I||\boldsymbol{\Theta}_{m}||_0-\gamma_m^PJ_m }{N}$, and $\mathbf{A}=-\kappa\mathbf{I}$.  Here, $I$ is the identity matrix of size $M$, and $\kappa$ is defined as
\begin{align}
\notag
\kappa=\mu \sum_{m\in \mathcal{M}} \frac{v_mB_m\log_2(1 + \eta_m) -\gamma_m^I||\boldsymbol{\Theta}_{m}||_0-\gamma_m^PJ_m }{N}.
\end{align}

Otherwise, the evolutionary game with the delayed replicator dynamics can converge to a stable equilibrium if the real parts of all the roots are negative ~\cite{gopalsamy2013stability}. This is equivalently the condition $\kappa\delta<\pi$, and thus we have
\begin{align}
\notag
\delta^* < \frac{\pi}{2\kappa} = \frac{\pi}{2\mu \sum_{m\in \mathcal{M}} \frac{v_mB_m\log_2(1 + \eta_m) -\gamma_m^I||\boldsymbol{\Theta}_{m}||_0-\gamma_m^PJ_m }{N} }.
\end{align}
\end{proof}

 Theorem~\ref{theorem_stability} means that the evolutionary game is guaranteed to converge to the equilibrium as the users use information at $t< \delta^*$ for their decisions.

\section{Fractional Evolutionary Game Formulation}
\label{frac_game}
In this section, we discuss the use of the fractional evolutionary game to model the SP and network service selection of the users with memory effect in the IRS-enabled terahertz system. In particular, we first present the concept of memory-aware economic process. Then, we present how to cast the evolutionary game that describes the SP and network service selection into a fractional evolutionary game by using the memory-aware economic processes. Finally, we analyze the equilibrium of the game. 
 
 \subsection{Memory-aware Economic Process}
With the classical evolutionary game as presented in Section~\ref{classical_evol_game}, each user, say user $i$, decides on the SP and network service selection according to its instantaneously achievable utility functions, i.e., functions $u_{m, l,k,j,i}(t)$ and $\overline{u}_{i}(t)$ at time instant $t$. In reality, the users take into account their memory, i.e., of service experience, on their strategy decisions. Specifically, the selection decision of the users at time $t$ is based not only on the information about the state of the process at time $t$, but also on the information about the process states at previous time instants $\tau \in [0,t]$.~This is considered to be a memory-aware economic process~\cite{tarasova2018concept},~\cite{tarasova2016generalization}. To describe the memory-aware economic process, we consider a typical economic model with two variables, namely \textit{exogenous variable} and \textit{endogenous variable}. The exogenous variable and endogenous variable are the input and output of the economic model, respectively. This means that the endogenous variable depends on the exogenous variable, and they are similar to the independent and dependent variables, respectively. Denote $X(t)$ as the exogenous variable and $Y(t)$ as the endogenous variable variable, in which the exogenous variable changes depends on the changes of the endogenous variable. Then, the economic process is typically expressed by $Y(t)=F^t_0(X(\tau)) + Y_0$, where $\tau \in [0,t]$, $Y_0$ is the initial state of the output of the process, and $F^t_0$ is an operator. To enable the memory awareness of the economic process, the operator is defined as $F^t_0(X(\tau)):= \int_0^tM_{\beta}(t-\tau)X(\tau)\rm{d}t$, where $M_{\beta}(t-\tau)$ is the weight function that represents how the input $X(\tau)$ at time $\tau$ impacts on the output $Y(t)$ at time $t$. In general, function $M_{\beta}(t-\tau)$ changes with respect to $\tau$ so as to capture the dynamic characteristic of the memory. Furthermore, by taking the time derivative of $Y(t)$, we have $\frac{\rm{d}}{\rm{d}t}Y(t)= M_{\beta}(t)X(0)+  \int_0^tM_{\beta}(t-\tau) [\frac{\rm{d}}{\rm{d}t}Y(t) ]  X(\tau)\rm{d}\tau$ that depends on both $X(t)$ and $X(\tau)$ with $\tau \in [0,t)$. The formulation of $M_{\beta}(t-\tau)$ is $M_{\beta}(t-\tau)=\frac{1}{\Gamma(\beta)(t-\tau)^{1-\beta}}$, where $\Gamma(\cdot)$ is the gamma function that is defined by $\Gamma(\beta)= \int_0^{+\infty}x^{\beta-1}e^{-x}\rm{d}x$.

Since $Y(t)$ depends on both $X(t)$ and $X(\tau)$, the economic process is namely \textit{memory-aware economic process} that can be expressed in the fractional equation by taking the derivation of $Y (t)$ at the order of $\beta$ through the left-sided Caputo fractional derivative as follows:
\begin{equation}
{}^C_0D^{\beta}_{t} Y(t)=X(t),
\end{equation}
where $Y (0) = Y_0$ is the initial state, and ${}^C_0D^{\beta}_{t} Y(t)$ is the left-sided Caputo fractional derivative~\cite{tarasova2017logistic} of $Y (t)$ at the order of $\beta$ that is given by:
\begin{equation}
{}^C_0D^{\beta}_{t} Y(t)=\frac{1}{\Gamma(\ceil{\beta}-\beta)}\int^t_0\frac{Y^{(\ceil{\beta})}(\tau)}{(t-\tau)^{\beta+1-\ceil{\beta}}}\rm{d}\tau,
\label{fractional_derivative}
\end{equation}
where $\ceil{\cdot}$ is the ceiling function.

The memory-aware economic process given in (\ref{fractional_derivative}) has two key properties. First, the past experiences of the user at different time instances have different impacts on its decision-making so as to capture dynamically the memory of the user. Second, the user is affected by the experience within the memory rather than that at the current time, and consequently the memory-aware users can make network selection decisions differently from the memory-unaware users. Given the properties, we incorporate the memory awareness of the economic process when modeling the SP and network service selection of the users. The memory-aware economic process can be modeled as the fractional evolutionary game that is presented in the next section.

 \subsection{Fractional Game Formulation}
 For convenience, we rewrite the replicator dynamic process of the users in the classical evolutionary game as expressed in (\ref{replicator_evolu}) as follows:
\begin{align}
\notag
\dot{p}_{m,l, k,j,i}(t) =&\mu p_{m, l,k,j,i}(t)\left[u_{m, l,k,j,i}(t)-\overline{u}_{i}(t)\right], \\
& m \in \mathcal{M}, l \in \mathcal{I}_m, k \in \mathcal{Q}_{m,l}, j \in \mathcal{P}_m, i \in \mathcal{N}_g, \forall t.
\label{replicator_evolu_re}
\end{align}

Then, given the utility functions and the average utility of the users defined in~(\ref{utility_user}) and~(\ref{utility_user_average}), respectively, and by incorporating the memory characteristic of the users, we can formulate the fractional evolutionary game as follows:
\begin{align}
\label{replicator_evolu_fractional}
\notag
{}^C_0D^{\beta}_{t} p_{m, l,k,j,i}(t) =&\mu p_{m, l,k,j,i}(t)[u_{m, l,k,j,i}(t)-\overline{u}_i(t)], \\
& m \in \mathcal{M}, l \in \mathcal{L}_m, k \in \mathcal{Q}_{m,l}, j \in \mathcal{P}_m, i \in \mathcal{N}_g, \forall t,
\end{align}
where $\beta \in (0,2), \beta \neq 1$ is the order of the Caputo fractional derivative, and it is called memory effect coefficient.~The physical meaning of the left-sided Caputo is further explained and discussed in Section~\ref{perform_eval}. The equilibrium analysis of the fractional evolutionary game is presented in the next section.


 \subsection{Equilibrium Analysis}
 In this section, we theoretically discuss the existence and the uniqueness of the equilibrium, and the unique and stable equilibrium is admitted as the solution of the fractional evolutionary game defined in~(\ref{replicator_evolu_fractional}). The specific steps are as follows. First, we transfer the fractional game defined in~(\ref{replicator_evolu_fractional}) into an equivalent problem, i.e,~(\ref{Picard_iteration_fractional_game_equilibrium}), and the equivalence between which is verified in Theorem~\ref{th:equilivent_problem_FEG}. Then, to prove the existence and uniqueness of equilibrium of the game defined in~(\ref{replicator_evolu_fractional}), we provide the proof of the uniqueness of the solution to the equivalent problem defined in ~(\ref{Picard_iteration_fractional_game_equilibrium}).

For the ease of presentation, we let $\mathbf{P}(t)=[p_{m,l,k,j,i}\left(t\right)]_{ m \in \mathcal{M}, l \in \mathcal{L}_m, k \in \mathcal{Q}_{m,l}, j \in \mathcal{P}_m, i \in \mathcal{N}_g}$ and $\mathbf{E}(\mathbf{P}(t))= \big{[}\mu p_{m, l,k,j,i}(t)[u_{m, l,k,j,i}(t)-\overline{u}_i(t)]\big{]}_{ m \in \mathcal{M}, l \in \mathcal{L}_m, k \in \mathcal{Q}_{m,l}, j \in \mathcal{P}_m, i \in \mathcal{N}_g}$, and reorganize the fractional evolutionary game defined in~(\ref{replicator_evolu_fractional}) as follows:
 \begin{equation}
\label{replicator_evolu_delay_power_simplied}
{}^C_0D^{\beta}_{t} \mathbf{P}(t)=\mathbf{E}(\mathbf{P}(t)),
\end{equation}
with the initial strategy $\mathbf{P}(0)=\mathbf{P}^0=[p^0_{m,l,k,j,i}]_{ m \in \mathcal{M}, l \in \mathcal{L}_m, k \in \mathcal{Q}_{m,l}, j \in \mathcal{P}_m, i \in \mathcal{N}_g}$ and the time horizon $\mathcal{T}=[0,T]$.

\begin{theorem}\label{th:equilivent_problem_FEG}
If all the elements of vector $\mathbf{E}$ in~(\ref{replicator_evolu_delay_power_simplied}), i.e., $e_n$ (element $n$ of vector $\mathbf{E}$) for all $n \in \left\{\left. \left(m,l,k,j,i\right) \right| m \in \mathcal{M}, l \in \mathcal{L}_m, k \in \mathcal{Q}_{m,l}, j \in \mathcal{P}_m, i \in \mathcal{N}_g \right\}$, can satisfy the following two conditions:
\begin{itemize}
\item $e_n \in {\cal{C}}^2$ with ${\cal{C}}^2$ being the set of the twice differentiable functions;

\item $\frac{\partial} {\partial p_{m,l,k,j,i} }e_n$ exists and is bounded for all $m \in \mathcal{M}, l \in \mathcal{L}_m, k \in \mathcal{Q}_{m,l}, j \in \mathcal{P}_m,i \in \mathcal{N}_g$.
\end{itemize}
Then,~(\ref{replicator_evolu_delay_power_simplied}) can be equivalently transformed into the following problem
\begin{equation}
\mathbf{P}(t)= \mathbf{P}^0 +   {}_0\rm{I}^\beta_t\mathbf{E}(\mathbf{P}(t)), \forall t\in \mathcal{T}.
\label{Picard_iteration_fractional_game_equilibrium}
\end{equation}
The second condition means that for all $n \in \left\{\left. \left(m,l,k,j,i\right) \right| m \in \mathcal{M}, l \in \mathcal{L}_m, k \in \mathcal{Q}_{m,l}, j \in \mathcal{P}_m, i \in \mathcal{N}_g \right\}$, there exists $L \in \mathbb{R}^+$ such that $|e_n({\hat{\mathbf{P}}}(t)) - e_n({\tilde{\mathbf{P}}}(t))| < L||({\hat{\mathbf{P}}}(t)) - {\tilde{\mathbf{P}}}(t)||_{\mathcal{L}_1}$, which implies the satisfaction of the Lipschitz condition. 

\end{theorem}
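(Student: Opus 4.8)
The plan is to argue the equivalence in the standard way for a Caputo fractional initial value problem: apply the Riemann--Liouville fractional integral ${}_0\mathrm{I}^{\beta}_{t}$ to both sides of~(\ref{replicator_evolu_delay_power_simplied}) to obtain~(\ref{Picard_iteration_fractional_game_equilibrium}), and conversely apply the Caputo derivative ${}^{C}_{0}D^{\beta}_{t}$ to both sides of~(\ref{Picard_iteration_fractional_game_equilibrium}) to recover~(\ref{replicator_evolu_delay_power_simplied}) together with the initial data. The engine of both directions is the pair of composition rules
\begin{align*}
{}_0\mathrm{I}^{\beta}_{t}\,{}^{C}_{0}D^{\beta}_{t}\mathbf{Y}(t) &= \mathbf{Y}(t) - \sum_{k=0}^{\lceil\beta\rceil-1}\frac{\mathbf{Y}^{(k)}(0)}{k!}\,t^{k}, \\
{}^{C}_{0}D^{\beta}_{t}\,{}_0\mathrm{I}^{\beta}_{t}\mathbf{Z}(t) &= \mathbf{Z}(t),
\end{align*}
the first valid when $\mathbf{Y}^{(\lceil\beta\rceil)}$ is integrable on $\mathcal{T}$ and the second valid for continuous $\mathbf{Z}$. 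This is precisely where the two hypotheses on $\mathbf{E}$ enter: since $e_n\in{\cal{C}}^{2}$ and the partials $\partial e_n/\partial p_{m,l,k,j,i}$ are bounded, $\mathbf{E}(\mathbf{P}(\cdot))$ is continuous (indeed Lipschitz-composed-with-continuous) along any admissible $\mathbf{P}$, so ${}_0\mathrm{I}^{\beta}_{t}\mathbf{E}(\mathbf{P}(t))$ is well defined, and a short bootstrap through the fractional equation shows that any solution $\mathbf{P}$ is regular enough (here $\lceil\beta\rceil\le 2$ because $\beta\in(0,2)$) for the first rule to apply.

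For the forward implication I would start from ${}^{C}_{0}D^{\beta}_{t}\mathbf{P}(t)=\mathbf{E}(\mathbf{P}(t))$ and apply ${}_0\mathrm{I}^{\beta}_{t}$ componentwise; the left side becomes $\mathbf{P}(t)$ minus its Taylor polynomial at $0$ of degree $\lceil\beta\rceil-1$. When $\beta\in(0,1)$ this polynomial is just $\mathbf{P}(0)=\mathbf{P}^{0}$ and~(\ref{Picard_iteration_fractional_game_equilibrium}) follows at once; when $\beta\in(1,2)$ the polynomial is $\mathbf{P}^{0}+\dot{\mathbf{P}}(0)\,t$, so the argument additionally uses that the population shares start from rest, $\dot{\mathbf{P}}(0)=\mathbf{0}$, consistent with $\mathbf{P}$ being fixed by the initial random assignment of users to services, to eliminate the linear term. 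For the converse I would apply ${}^{C}_{0}D^{\beta}_{t}$ to~(\ref{Picard_iteration_fractional_game_equilibrium}), use that the Caputo derivative annihilates the constant $\mathbf{P}^{0}$ (and the linear term when $\beta\in(1,2)$), and invoke the second composition rule on ${}_0\mathrm{I}^{\beta}_{t}\mathbf{E}(\mathbf{P}(t))$ to recover~(\ref{replicator_evolu_delay_power_simplied}); evaluating~(\ref{Picard_iteration_fractional_game_equilibrium}) at $t=0$ --- and, for $\beta>1$, differentiating once and letting $t\to 0$ --- returns the initial data, since ${}_0\mathrm{I}^{\beta}_{t}$ of a bounded integrand vanishes at the origin. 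The Lipschitz claim at the end of the theorem is then a coordinatewise mean value theorem along the segment joining $\hat{\mathbf{P}}(t)$ and $\tilde{\mathbf{P}}(t)$: it gives $|e_n(\hat{\mathbf{P}}(t))-e_n(\tilde{\mathbf{P}}(t))|\le L\,\|\hat{\mathbf{P}}(t)-\tilde{\mathbf{P}}(t)\|_{\mathcal{L}_1}$ with $L$ any uniform bound on $\sum_{m,l,k,j,i}\big|\partial e_n/\partial p_{m,l,k,j,i}\big|$, which exists by the second hypothesis.

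The step I expect to be the main obstacle is not the formal manipulation but the rigorous justification of the first composition rule: establishing ${}_0\mathrm{I}^{\beta}_{t}\,{}^{C}_{0}D^{\beta}_{t}\mathbf{Y}=\mathbf{Y}-(\text{Taylor polynomial})$ requires expanding the nested fractional integrals, interchanging the order of integration (a Fubini-type step), reducing the inner integral to a Beta function, and integrating by parts $\lceil\beta\rceil$ times, all of which rely on $\mathbf{Y}^{(\lceil\beta\rceil)}$ being integrable --- which is exactly why the ${\cal{C}}^{2}$ hypothesis and the bound $\beta<2$ are imposed. A secondary point that must be handled explicitly is the case split at $\beta=1$ and, for $\beta\in(1,2)$, the need to pin down $\dot{\mathbf{P}}(0)$; unless this is declared as part of the model's initialization, problem~(\ref{Picard_iteration_fractional_game_equilibrium}) would carry a spurious $t$-linear term and the two formulations would fail to be equivalent.
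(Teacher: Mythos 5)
Your proposal is correct in substance but follows a genuinely different route from the paper. You argue equivalence in the textbook way, via the two composition identities ${}_0\mathrm{I}^{\beta}_{t}\,{}^{C}_{0}D^{\beta}_{t}\mathbf{Y}=\mathbf{Y}-\sum_{k=0}^{\lceil\beta\rceil-1}\frac{\mathbf{Y}^{(k)}(0)}{k!}t^{k}$ and ${}^{C}_{0}D^{\beta}_{t}\,{}_0\mathrm{I}^{\beta}_{t}\mathbf{Z}=\mathbf{Z}$, which handles both implications symmetrically and needs nothing beyond continuity of $\mathbf{E}(\mathbf{P}(\cdot))$ and integrability of $\mathbf{P}^{(\lceil\beta\rceil)}$. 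The paper instead proves only the direction from the integral equation to the fractional differential equation: it differentiates $\mathbf{P}(t)=\mathbf{P}^0+{}_0\mathrm{I}^{\beta}_{t}\mathbf{E}(\mathbf{P}(t))$ a total of $\lceil\beta\rceil$ times, splits off the boundary term $\frac{t^{\beta-\lceil\beta\rceil}}{\Gamma(1-\lceil\beta\rceil+\beta)}\mathbf{E}(\mathbf{P}^0)$, and then controls $\Vert \frac{\mathrm{d}^{\lceil\beta\rceil}}{\mathrm{d}t^{\lceil\beta\rceil}}\mathbf{P}\Vert_{\mathcal{T}}$ in an exponentially weighted (Bielecki-type) norm using the Lipschitz constant $L$, which forces the side condition $AL/\mu^{\beta}<1$ (not stated in the theorem) before it can apply ${}_0\mathrm{I}^{\lceil\beta\rceil-\beta}_{t}$ and identify the Caputo derivative with $\mathbf{E}(\mathbf{P}(t))$. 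What the paper's computation buys is exactly the regularity of $\mathbf{P}$ that your first composition rule presupposes, so your ``short bootstrap through the fractional equation'' is where the bulk of the paper's work actually lives; if you flesh out your proof you should either reproduce such an estimate or cite a regularity result for Caputo initial value problems. Your observation about $\beta\in(1,2)$ is a genuine catch the paper glosses over: since $\frac{\mathrm{d}}{\mathrm{d}t}\,{}_0\mathrm{I}^{\beta}_{t}\mathbf{E}(\mathbf{P}(t))={}_0\mathrm{I}^{\beta-1}_{t}\mathbf{E}(\mathbf{P}(t))$ vanishes at $t=0$ for a bounded integrand, the integral equation silently imposes $\dot{\mathbf{P}}(0)=\mathbf{0}$, and the claimed equivalence for $\beta>1$ only holds once that extra initial condition is adjoined to the Caputo problem --- your explicit handling of this case split is an improvement on the paper's argument.
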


\begin{proof}
According to~(\ref{Picard_iteration_fractional_game_equilibrium}), the $\ceil{\beta}$-th derivative of $\mathbf{P}(t)$ with respect to $t$ is as follows:
\begin{equation}
\frac{\rm{d}^{\ceil{\beta}}}{\rm{d}t^{\ceil{\beta}}}  \mathbf{P}(t)= \frac{\rm{d}^{\ceil{\beta}}}{\rm{d}t^{\ceil{\beta}}} \big{[} {}_0\rm{I}^\beta_t\mathbf{E}(\mathbf{P}(t)) \big{]} = {}^{RL}_0D^{\beta - \ceil{\beta}}_{t} {\mathbf{E}}\left(\mathbf{P}(t)\right),
\label{Picard_iteration_fractional_game_equilibrium_proof_1}
\end{equation}
with ${}^{RL}_0D^{\beta}_{t} {\mathbf{E}}\left(\mathbf{P}(t)\right)$ being defined as the left-sided Riemann-Liouville fractional derivative with respect to $t$, and the following derivation is satisfied
\begin{equation}\label{eq:RL_upper_bound}
\begin{aligned}
{}^{RL}_0D^{\ceil{\beta} - \beta}_{t} {\mathbf{E}}\left(\mathbf{P}(t)\right) &= \frac{\rm{d}}{\rm{d}t} \Big{[} \frac{1}{\Gamma(1-\ceil{\beta}+\beta)} \int^t_0 \frac{\mathbf{E}(\mathbf{P}(\tau))} { (t-\tau)^{\ceil{\beta}-\beta} } \rm{d}\tau \Big{]}\\
&=  \frac{1}{\Gamma(1-\ceil{\beta}+\beta)} \frac{\rm{d}}{\rm{d}t} \int^t_0\theta^{\beta -\ceil{\beta}} \mathbf{E}(\mathbf{P}(t-\theta)) \rm{d}\theta\\
&=  \frac{1}{\Gamma(1-\ceil{\beta}+\beta)}\Big{[}t^{\beta-\ceil{\beta}} \mathbf{E}(\mathbf{P}^0) + \int^t_0\theta^{\beta-\ceil{\beta}} \frac{\rm{d}}{\rm{d}t} \mathbf{E}(\mathbf{P}(t-\theta)) \rm{d}\theta \Big{]}\\
&=  \frac{1}{\Gamma(1-\ceil{\beta}+\beta)}\Big{[}t^{\beta-\ceil{\beta}} \mathbf{E}(\mathbf{P}^0) + \int^t_0(t-\tau)^{\beta-\ceil{\beta}} \frac{\rm{d}}{\rm{d}\tau} \mathbf{E}(\mathbf{P}(\tau)) \rm{d} \tau \Big{]} \\
&= \frac{  t^{\beta-\ceil{\beta}}  }{ \Gamma(1-\ceil{\beta}+\beta) }\mathbf{E}(\mathbf{P}^0) + {}_0\rm{I}^{\beta}_t \Big{[}\frac{\rm{d}^{\ceil{\beta}} }{\rm{d}t^{\ceil{\beta}} }\mathbf{E}(\mathbf{P}(t))    \Big{]}.
\end{aligned}
\end{equation}

Let $\sigma \in (0,t)$, the $\mathcal{L}^1$ norm of $\frac{  t^{\beta-\ceil{\beta}}  }{ \Gamma(1-\ceil{\beta}+\beta) }\mathbf{E}(\mathbf{P}^0)$ exists an upper bound that is derived as follows:
\begin{equation}\label{eq:RL_residual_term}
\Vert \frac{  t^{\beta-\ceil{\beta}}  }{ \Gamma(1-\ceil{\beta}+\beta) }\mathbf{E}(\mathbf{P}^0) \Vert_{\mathcal{L}^1} \le \Vert \frac{  \sigma^{\beta-\ceil{\beta}}  }{ \Gamma(1-\ceil{\beta}+\beta) }\mathbf{E}(\mathbf{P}^0)   \Vert_{\mathcal{L}^1}.
\end{equation}

Using the condition in Theorem~\ref{th:equilivent_problem_FEG} that $\frac{\delta} {\delta p_{m,l,k,j,i} }e_n$ exists and is bounded for all $m \in \mathcal{M}, l \in \mathcal{L}_m, k \in \mathcal{Q}_{m,l}, j \in \mathcal{P}_m,i \in \mathcal{N}_g$, and~(\ref{Picard_iteration_fractional_game_equilibrium_proof_1}) and~(\ref{eq:RL_upper_bound}) as well as~(\ref{eq:RL_residual_term}), we have 
\begin{equation} \label{eq:integer_derivative_P}
 \begin{aligned}
\Big{\Vert} \frac{\rm{d}^{\ceil{\beta}}}{\rm{d}t^{\ceil{\beta}}}  \mathbf{P}(t) \Big{\Vert}_{\mathcal{T}} &< \Vert \frac{  \sigma^{\beta-\ceil{\beta}}  }{ \Gamma(1-\ceil{\beta}+\beta) }\mathbf{E}(\mathbf{P}^0)   \Vert_{\mathcal{L}^1} +  \Big{\Vert} {}_0\rm{I}_t^\beta \frac{\rm{d}^{\ceil{\beta}}} {\rm{d}t^{\ceil{\beta}}} \mathbf{E}(\mathbf{P}(t)) \Big{\Vert}_{\mathcal{T}}\\
&< \Vert \frac{  \sigma^{\beta-\ceil{\beta}}  }{ \Gamma(1-\ceil{\beta}+\beta) }\mathbf{E}(\mathbf{P}^0)   \Vert_{\mathcal{L}^1} +  \Big{\Vert} {}_0\rm{I}_t^\beta \frac{\rm{d}^{\ceil{\beta}}} {\rm{d}t^{\ceil{\beta}}} \mathbf{P}(t) \Big{\Vert}_{\mathcal{T}} A L, 
 \end{aligned}
\end{equation}
 where $\Vert z \Vert_{\mathcal{T}} = \int_{\mathcal{T}} \exp\left(-\mu t\right)\Vert z\Vert_{\mathcal{L}^1}\rm{d}t $ and $A$ is the cardinality of $\left\{\left. \left(m,l,k,j,i\right) \right| m \in \mathcal{M}, l \in \mathcal{L}_m, k \in \mathcal{Q}_{m,l},\right.$ $\left. j \in \mathcal{P}_m, i \in \mathcal{N}_g \right\}$. For the last term in~(\ref{eq:integer_derivative_P}), we have 
\begin{equation}\label{eq:fractional_integral_integer_derivative_P}
\begin{aligned}
  \Big{\Vert} {}_0\rm{I}_t^\beta \frac{\rm{d}^{\ceil{\beta}}} {\rm{d}t^{\ceil{\beta}}} \mathbf{P}(t) \Big{\Vert}_{\mathcal{T}}&= \int^T_0 \exp\left(-\mu t\right) \Big{\Vert} {}_0\rm{I}_t^\beta \frac{\rm{d}^{\ceil{\beta}}} {\rm{d}t^{\ceil{\beta}}} \mathbf{P}(t) \Big{\Vert}  \rm{d}t \leq \int_0^T \exp\left(-\mu t\right) \int^t_0 \frac{1}{\Gamma(\beta)} \frac{ \Vert  \frac{ \rm{d}^{\ceil{\beta}} }{\rm{d}s^{\ceil{\beta}}} \mathbf{P}(s) \Vert }  {(t-s)^{1-\beta}}\rm{d}s\rm{d}t\\
  &= \int_0^T \frac{\exp\left(-\mu s\right)} {\Gamma(\beta)} \Vert  \frac{ \rm{d}^{\ceil{\beta}} }{\rm{d}s^{\ceil{\beta}}} \mathbf{P}(s) \Vert \int^T_s \exp\left(-\mu(t-s)\right)(t-s)^{\beta-1}\rm{d}t\rm{d}s\\
  &= \int_0^T \frac{\exp\left(-\mu s\right)} {\Gamma(\beta)} \Vert  \frac{ \rm{d}^{\ceil{\beta}} }{\rm{d}s^{\ceil{\beta}}} \mathbf{P}(s) \Vert \int^{\mu(T-s)}_0 \exp\left(-\Psi\right) (\frac{\Psi}{\mu})^{\beta-1} \rm{d}(\frac{\Psi}{\mu})  \rm{d}s\\
& < \frac{1} {\mu^\beta \Gamma(\beta)}\int_0^T \exp\left(-\mu s\right) \Vert  \frac{ \rm{d}^{\ceil{\beta}} }{\rm{d}s^{\ceil{\beta}}} \mathbf{P}(s) \Vert \rm{d}s \int^{+\infty}_0 \exp\left(-\Psi\right) \Psi^{\beta-1} \rm{d}\Psi  \\
& = \frac{1} {\mu^\beta}\Vert  \frac{ \rm{d}^{\ceil{\beta}} }{\rm{d}s^{\ceil{\beta}}} \mathbf{P}(t) \Vert_{\cal{T}}.
\end{aligned}
\end{equation}
Then, we substitute~(\ref{eq:fractional_integral_integer_derivative_P}) into~(\ref{eq:integer_derivative_P}) as follows
\begin{equation}
 \begin{aligned}
&\Big{\Vert} \frac{\rm{d}^{\ceil{\beta}}}{\rm{d}t^{\ceil{\beta}}}  \mathbf{P}(t) \Big{\Vert}_{\mathcal{T}} < \Vert \frac{  \sigma^{\beta-\ceil{\beta}}  }{ \Gamma(1-\ceil{\beta}+\beta) }\mathbf{E}(\mathbf{P}^0)   \Vert_{\mathcal{L}^1} +  \frac{AL} {\mu^\beta}\Vert  \frac{ \rm{d}^{\ceil{\beta}} }{\rm{d}s^{\ceil{\beta}}} \mathbf{P}(t) \Vert_{\cal{T}}\\
\Leftrightarrow & \Big{\Vert} \frac{\rm{d}^{\ceil{\beta}}}{\rm{d}t^{\ceil{\beta}}}  \mathbf{P}(t) \Big{\Vert}_{\mathcal{T}} < \frac{1}{1 - \frac{AL} {\mu^\beta}}\Vert \frac{  \sigma^{\beta-\ceil{\beta}}  }{ \Gamma(1-\ceil{\beta}+\beta) }\mathbf{E}(\mathbf{P}^0)   \Vert_{\mathcal{L}^1},
 \end{aligned}
\end{equation}
which implies that if $\mu$ is sufficiently large such that $\frac{AL} {\mu^\beta} < 1$, $\Big{\Vert} \frac{\rm{d}^{\ceil{\beta}}}{\rm{d}t^{\ceil{\beta}}}  \mathbf{P}(t) \Big{\Vert}_{\mathcal{T}}$ has an upper bound. In this case, the fractional derivative of $\mathbf{P}(t)$ with the order of $\beta\in \left(0,1\right)\cup\left(1,2\right)$ exists and can be obtained in the following:
\begin{equation}
{}^C_0D^{\beta}_{t} \mathbf{P}(t) = {}_0\rm{I}^{{\ceil{\beta}} - \beta}_t \frac{\rm{d}^{\ceil{\beta}}}{\rm{d}t^{\ceil{\beta}}}  \mathbf{P}(t) = {}_0\rm{I}^{{\ceil{\beta}} - \beta}_t \left\{\frac{  t^{\beta-\ceil{\beta}}  }{ \Gamma(1-\ceil{\beta}+\beta) }\mathbf{E}(\mathbf{P}^0) + {}_0\rm{I}^{\beta}_t \Big{[}\frac{\rm{d}^{\ceil{\beta}} }{\rm{d}t^{\ceil{\beta}} }\mathbf{E}(\mathbf{P}(t))    \Big{]} \right\} = \mathbf{E}(\mathbf{P}(t)) ,
\end{equation}
\end{proof}
and hence the equivalence between~(\ref{replicator_evolu_delay_power_simplied}) and~(\ref{Picard_iteration_fractional_game_equilibrium}) has been verified, which completes the proof. 

\begin{theorem} \label{th:existence_unique_equlivent_problem_FEG}
Given the conditions in Theorem~\ref{th:equilivent_problem_FEG}, the uniqueness of the solution to the problem defined in~(\ref{Picard_iteration_fractional_game_equilibrium}) can be guaranteed.
\label{admit_solution}
\end{theorem}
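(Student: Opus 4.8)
The plan is to recast the equivalent integral equation~(\ref{Picard_iteration_fractional_game_equilibrium}) as a fixed-point problem $\mathbf{P}=\Lambda\mathbf{P}$ for the operator
\[
(\Lambda\mathbf{P})(t) := \mathbf{P}^0 + {}_0\mathrm{I}^\beta_t\mathbf{E}(\mathbf{P}(t)),\qquad t\in\mathcal{T},
\]
and to invoke the Banach Fixed Point Theorem. For this I would work on a complete space $\mathcal{X}$ of $\mathbb{R}^A$-valued trajectories on the compact horizon $\mathcal{T}$ carrying the weighted norm $\Vert\mathbf{P}\Vert_{\mathcal{T}}=\int_{\mathcal{T}}\exp(-\mu t)\Vert\mathbf{P}(t)\Vert_{\mathcal{L}^1}\,\mathrm{d}t$ already introduced in the proof of Theorem~\ref{th:equilivent_problem_FEG} (here $A$ is the cardinality of the index set $\{(m,l,k,j,i)\}$). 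First I would check that $\Lambda$ is well defined and maps $\mathcal{X}$ into itself: since every $e_n\in\mathcal{C}^2$, the map $t\mapsto\mathbf{E}(\mathbf{P}(t))$ is continuous whenever $\mathbf{P}$ is, the Riemann--Liouville integral ${}_0\mathrm{I}^\beta_t$ of a continuous function is again continuous on $\mathcal{T}$ and vanishes at $t=0$, so $(\Lambda\mathbf{P})(0)=\mathbf{P}^0$ and $\Lambda\mathbf{P}$ stays in $\mathcal{X}$.

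The core step is the contraction estimate. Given two trajectories $\hat{\mathbf{P}},\tilde{\mathbf{P}}\in\mathcal{X}$, the second hypothesis of Theorem~\ref{th:equilivent_problem_FEG} gives, componentwise, the Lipschitz bound $|e_n(\hat{\mathbf{P}}(t))-e_n(\tilde{\mathbf{P}}(t))|<L\Vert\hat{\mathbf{P}}(t)-\tilde{\mathbf{P}}(t)\Vert_{\mathcal{L}_1}$, hence $\Vert\mathbf{E}(\hat{\mathbf{P}}(t))-\mathbf{E}(\tilde{\mathbf{P}}(t))\Vert_{\mathcal{L}^1}<AL\,\Vert\hat{\mathbf{P}}(t)-\tilde{\mathbf{P}}(t)\Vert_{\mathcal{L}^1}$. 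Applying ${}_0\mathrm{I}^\beta_t$ and then repeating \emph{verbatim} the Fubini interchange of the $t$- and $s$-integrals and the incomplete-Gamma bound performed in~(\ref{eq:fractional_integral_integer_derivative_P}) --- now with $\mathbf{E}(\hat{\mathbf{P}}(s))-\mathbf{E}(\tilde{\mathbf{P}}(s))$ in place of $\tfrac{\mathrm{d}^{\ceil{\beta}}}{\mathrm{d}s^{\ceil{\beta}}}\mathbf{P}(s)$ --- yields
\[
\Vert\Lambda\hat{\mathbf{P}}-\Lambda\tilde{\mathbf{P}}\Vert_{\mathcal{T}} = \Big\Vert{}_0\mathrm{I}^\beta_t\big[\mathbf{E}(\hat{\mathbf{P}}(t))-\mathbf{E}(\tilde{\mathbf{P}}(t))\big]\Big\Vert_{\mathcal{T}} < \frac{AL}{\mu^\beta}\,\Vert\hat{\mathbf{P}}-\tilde{\mathbf{P}}\Vert_{\mathcal{T}}.
\]
Taking the learning rate $\mu$ large enough that $AL/\mu^\beta<1$ --- the very smallness condition that appeared in Theorem~\ref{th:equilivent_problem_FEG} --- makes $\Lambda$ a contraction, so the Banach Fixed Point Theorem~\cite{ciesielski2007stefan} produces a unique fixed point $\mathbf{P}^\star\in\mathcal{X}$, i.e.\ a unique solution of~(\ref{Picard_iteration_fractional_game_equilibrium}) on $\mathcal{T}$, realized as the limit of the Picard iterates $\mathbf{P}^{(z+1)}=\Lambda\mathbf{P}^{(z)}$. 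Combined with the equivalence from Theorem~\ref{th:equilivent_problem_FEG}, $\mathbf{P}^\star$ is the unique equilibrium of the fractional evolutionary game~(\ref{replicator_evolu_fractional}).

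I expect the main obstacle to be the contraction estimate, not the fixed-point step: one has to justify the Fubini interchange against the exponential weight $\exp(-\mu t)$, control the singular kernel $(t-\tau)^{\beta-1}$ near $\tau=t$ (integrable precisely because $\beta>0$), and dominate the resulting incomplete Gamma integral $\int_0^{\mu(T-s)}e^{-\Psi}\Psi^{\beta-1}\,\mathrm{d}\Psi$ by the full $\Gamma(\beta)$ --- exactly the chain of manipulations in~(\ref{eq:fractional_integral_integer_derivative_P}), and one should double-check it still works when $\ceil{\beta}=2$, i.e.\ $\beta\in(1,2)$. A secondary subtlety is the choice of $\mathcal{X}$: since $\Vert\cdot\Vert_{\mathcal{T}}$ is an $L^1$-type norm, the natural complete space is a weighted $L^1$ space (or, to stay with continuous trajectories, the analogous $\sup$-type Bielecki norm), and the contraction argument by itself does not preserve the normalization $\sum_{(m,l,k,j)}p_{m,l,k,j,i}=1$ or the range $[0,1]$ of the probabilities --- so one either restricts $\Lambda$ to the closed subset where these constraints hold, using that the replicator field $\mathbf{E}$ is tangent to the product of simplices, or verifies feasibility of $\mathbf{P}^\star$ a posteriori.
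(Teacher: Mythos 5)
Your proposal matches the paper's proof essentially step for step: the paper likewise defines the operator $\Lambda$, establishes the contraction estimate $\Vert\Lambda\hat{\mathbf{P}}-\Lambda\tilde{\mathbf{P}}\Vert_{\mathcal{T}} < \frac{AL}{\mu^\beta}\Vert\hat{\mathbf{P}}-\tilde{\mathbf{P}}\Vert_{\mathcal{T}}$ via the same Lipschitz bound, Fubini interchange, and incomplete-Gamma domination, and concludes via the fixed point theorem under $\mu^\beta \ge AL$. Your additional remarks on well-definedness, the choice of complete space, and preservation of the probability simplex are refinements the paper does not address, but the core argument is the same.
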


\begin{proof}
First, by defining an operator $\Lambda: \mathcal{D}_{\mathbf{P}} \mapsto  \mathcal{D}_{\mathbf{P}}$, where $\mathcal{D}_{\mathbf{P}}$ is the feasible domain of ${\mathbf{P}}$, there exists an inequality expression as follows:
\begin{equation}
 \Vert \Lambda {\hat{\mathbf{P}}}(t) - \Lambda {\tilde{\mathbf{P}}}(t)\Vert_{\mathcal{T}} < \frac{AL}{\mu^\beta}\Vert {\hat{\mathbf{P}}}(t) - {\tilde{\mathbf{P}}}(t)\Vert_{\mathcal{T}},
 \label{inequality_operator}
\end{equation}
and the specific derivation of which has been shown as follows
\begin{equation}\label{eq:bounded_operator}
\begin{aligned}
& \Vert\Lambda {\hat{\mathbf{P}}}(t) - \Lambda {\tilde{\mathbf{P}}}(t)\Vert_{\mathcal{T}} = \int^T_0 \exp\left(-\mu t\right)\Vert {}_0\mathbf{I}^\beta_t \mathbf{E}({\hat{\mathbf{P}}}(t))-{}_0\mathbf{I}^\beta_t \mathbf{E}({\tilde{\mathbf{P}}}(t)) \Vert\rm{dt} \\
&< AL \Big{[}  \int^T_0 \exp\left(-\mu t\right)\Vert {}_0\mathbf{I}^\beta_t {\hat{\mathbf{P}}}(t)-{}_0\mathbf{I}^\beta_t {\tilde{\mathbf{P}}}(t) \Vert\rm{dt}  \Big{]}\\
 &\leq AL \Big{[}  \int^T_0 \exp\left(-\mu t\right) \int^t_0 \frac{ \Vert {\hat{\mathbf{P}}}(s)- {\tilde{\mathbf{P}}}(s)\Vert} {\Gamma(\beta) (t-s)^{1-\beta} } {\rm{d}}s{\rm{d}}t \Big{]} \\
 &=  \frac{AL}{\Gamma(\beta)} \Big{[}  \int^T_0  \int^T_s \exp\left(-\mu t\right) \frac{ \Vert {\hat{\mathbf{P}}}(s)- {\tilde{\mathbf{P}}}(s)\Vert} { (t-s)^{1-\beta} } \rm{dtds} \Big{]}\\
 &=  \frac{AL}{\Gamma(\beta)} \Big{[}  \int^T_0  \exp\left(-\mu s\right)  \Vert {\hat{\mathbf{P}}}(s)- {\tilde{\mathbf{P}}}(s) \Vert  \int^{T}_s\frac{ \exp\left(-\mu (t- s)\right) } { (t-s)^{1-\beta} } {\rm{d}}s{\rm{d}}t \Big{]}\\
 &= \frac{AL}{\mu^\beta \Gamma(\beta)} \Big{[}  \int^T_0  \exp\left(-\mu s\right)  \Vert  {\hat{\mathbf{P}}}(s)- {\tilde{\mathbf{P}}}(s) \Vert  \int^{\mu(T-s)}_0 \exp\left(-\psi\right)\psi^{\beta-1} {\rm{d}}\psi {\rm{d}}s \Big{]}\\
 &< \frac{AL}{\mu^{\beta} \Gamma(\beta)}\Vert {\hat{\mathbf{P}}}(t)- {\tilde{\mathbf{P}}}(t) \Vert_\mathcal{T} \int^{+\infty}_0\exp\left(-\sigma\right)\sigma^{\beta -1}\rm{d}\sigma = \frac{AL}{\mu^\beta} \Vert {\hat{\mathbf{P}}}(t)- {\tilde{\mathbf{P}}}(t) \Vert_\mathcal{T}. 
\end{aligned}
\end{equation} 

Based on~(\ref{eq:bounded_operator}), we can conclude that $\Vert\Lambda {\hat{\mathbf{P}}}(t) - \Lambda {\tilde{\mathbf{P}}}(t)\Vert_{\mathcal{T}} < \Vert {\hat{\mathbf{P}}}(t)- {\tilde{\mathbf{P}}}(t) \Vert_\mathcal{T} $ if $\mu^\beta \ge AL$. In this case, the operator $\Lambda$ satisfies the fixed point theorem, which indicates the uniqueness of the solution to~(\ref{Picard_iteration_fractional_game_equilibrium}). By following this, there exists a unique solution to the fractional evolutionary game defined in~(\ref{replicator_evolu_delay_power_simplied}), which completes this proof. 

\end{proof}
   
\section{Performance Evaluation}
\label{perform_eval}
In this section, we present and discuss simulation results obtained by the proposed evolutionary game approaches. To evaluate the game approaches, we consider three cases, i.e., $\beta=1$ corresponding to the classical evolutionary game, and $\beta=1.1$ and $0.8$ corresponding to the fractional evolutionary games. For the comparison purpose, we consider a network that consists of two SPs, namely SP 1 and SP 2, and $100$ users. Each SP deploys a BS that is equipped with $4$ antennas. SP 1 deploys 2 IRSs, namely IRS $1_1$ and $1_2$, and SP 2 deploys 1 IRS, namely IRS $2$. SP 1 divides each IRS into two modules and offers $1$ power level, i.e., $J_{1,1}=30$ dBm. SP 2 does not divide its IRS and offers $2$ power levels, i.e., $J_{2,1}=25$ dBm and $J_{2,2}=35$ dBm. As such, SP 1 offers 4 services, and SP 2 offers 2 services that the users can select. Correspondingly, $100$ users are divided into $6$ groups. Note that the SPs can offer more services and our proposed game approaches are scalable since the complexity of the algorithm implemented at each user is $\mathcal{O}(1)$ as analyzed in Section~\ref{classical_evol_game_form}. The locations of the BSs, IRSs, and users are shown in Fig.~\ref{Location}. The simulation parameters are provided in Table~\ref{table:parameters}. In particular for the involved channels, the non-LoS elements is proved to be much weaker than the LoS element, i.e., lower than $20$ dB~\cite{han2014multi}, and thus similar to~\cite{gao2016fast},~we mainly consider the channel with only LoS element, i.e., $\mathcal{L}=\mathcal{L}_{\text{BS-I}}=\mathcal{L}_{\text{I-U}}=1$. 
\begin{figure}[h]
 \centering
\includegraphics[width=0.5\linewidth]{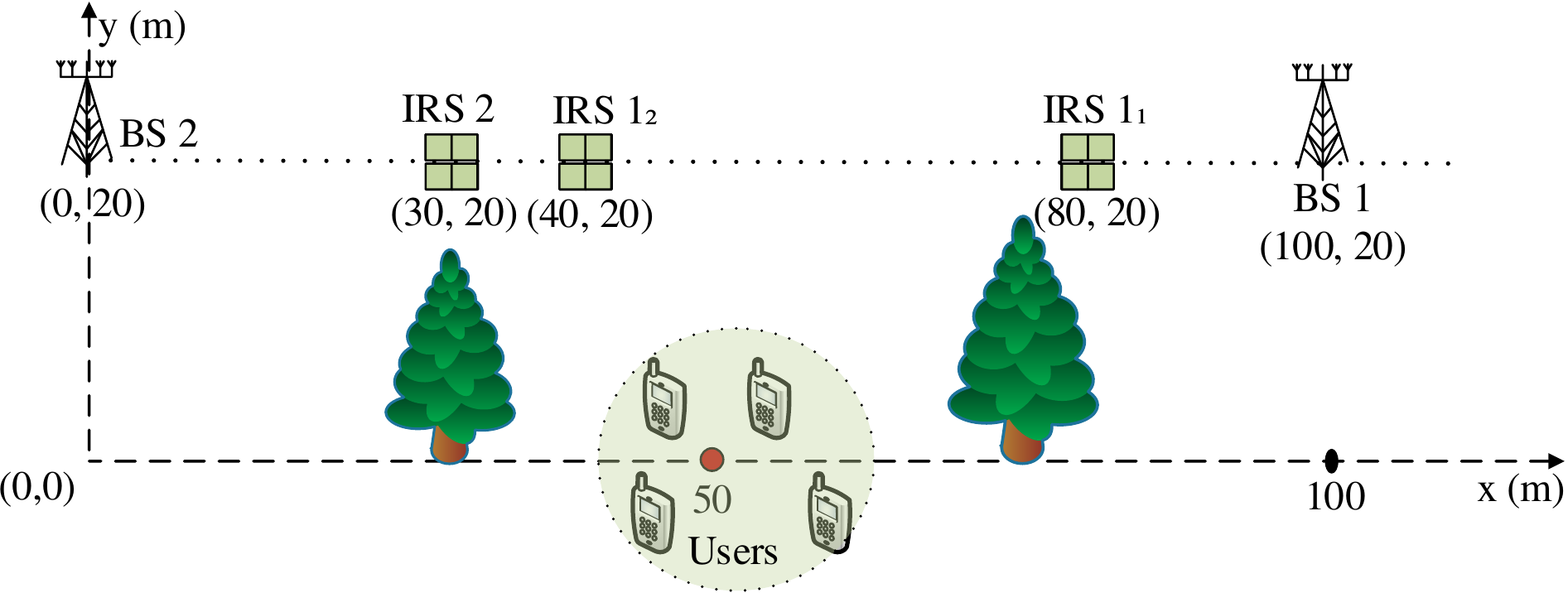}
 \caption{Coordinates of BSs, IRSs, and users.}
  \label{Location}
\end{figure}

\begin{table}[t]
\caption{\small Simulation parameters.}
\label{table:parameters_CRN}
\footnotesize 	
\centering
\begin{tabular}{lc|lc|lc|lc}
\hline\hline
{\em Parameters} & {\em Value} & {\em Parameters} & {\em Value}  & {\em Parameters} & {\em Value}&   {\em BSs, IRSs and user} & {\em Coordinate $(m)$ }\\ [1ex]
\hline
 $M$    & $2$           &    $B_1=B_2$ & $2$ MHz  &        $v_1$, $v_2$ &   $10^{-6}$ &    BS $1$   & [100 20]             \\ 
\hline
 $L_1=L_2$    & $4$          &       $f$& $0.3$ THz &     $\gamma^I_1$ &   $2.5\times10^{-4}$& BS 2    &  [0 20]           \\ 
\hline
$P_1$ & $1$              &  $c$& $3 \times 10^8$ m/s  &     $\gamma^I_2$ &   $10^{-3}$   &IRS $1_1$  &   [80 20]         \\ 
\hline
$P_2$ &  $ 2 $     &     $N$  & $100$  &   $\gamma^P_1$    & $0.05$ &IRS $1_2$  &   [40 20]         \\ 
\hline
$J_{1,1}$ & $30$ dBm    &    $Q_{1,1}=Q_{1,2}$ & $2 $ &    $\gamma^P_2$    & $0.03$&  IRS $2_1 $  &   [30 20]         \\ 
\hline 
$J_{2,1}$ & $25$ dBm       &  $Q_{2,1}$ & $1$&      $\gamma^P_2$    & $0.03$  &User group  & [50 0]      \\ 
\hline
$J_{2,2}$ & $35$ dBm       &    $E_{1,1}$, $E_{1,2}$, $E_{2,1}$ & $8$  &       $(\mu, \zeta)$ & $(e^{-2},0)$ &  &  \\ 
\hline
\end{tabular}
\label{table:parameters}
\end{table}

\begin{figure*}[t]
  \centering
  \subcaptionbox{}[.23\linewidth][c]{%
    \includegraphics[width=1\linewidth]{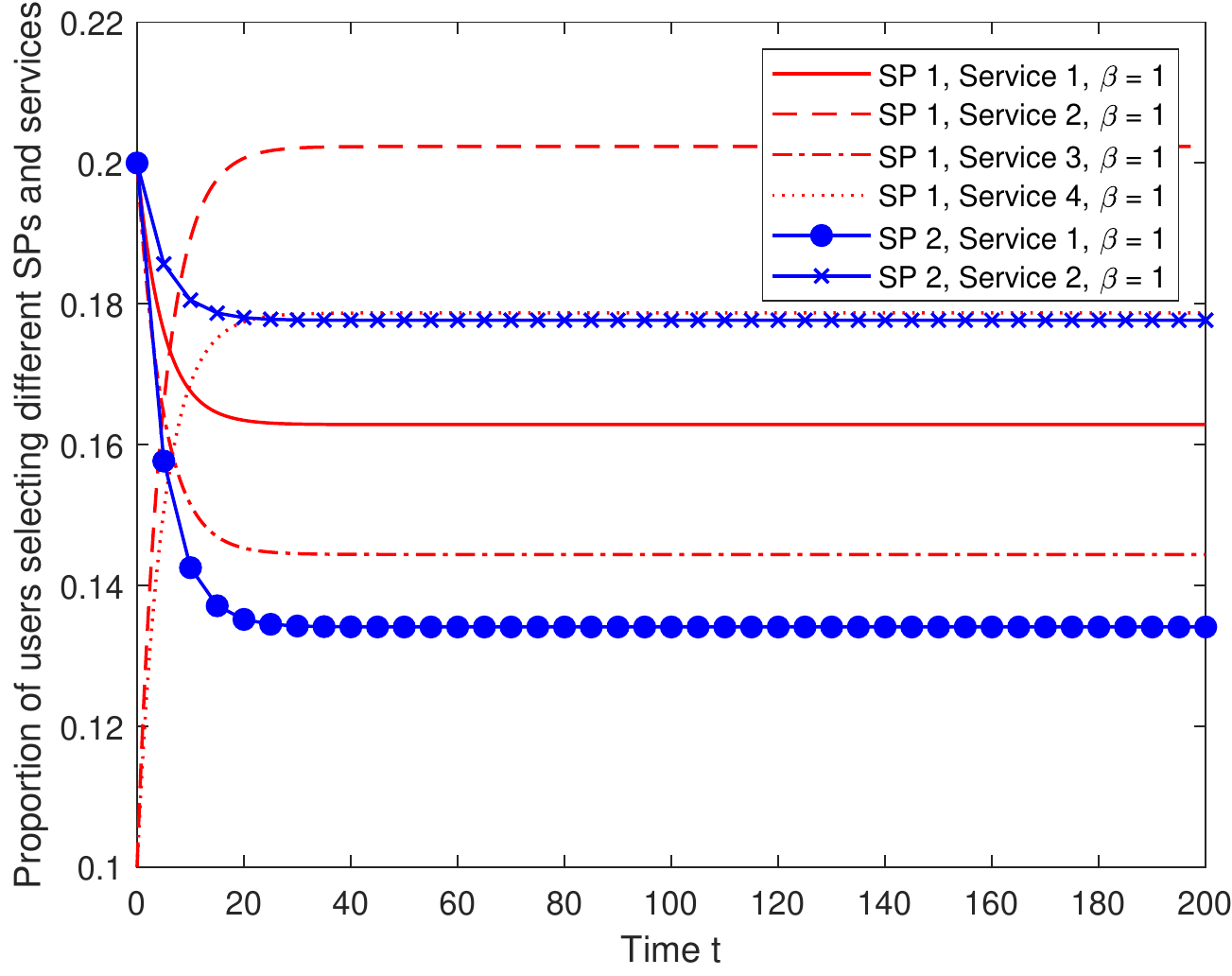}}\quad
  \subcaptionbox{}[.23\linewidth][c]{%
    \includegraphics[width=1\linewidth]{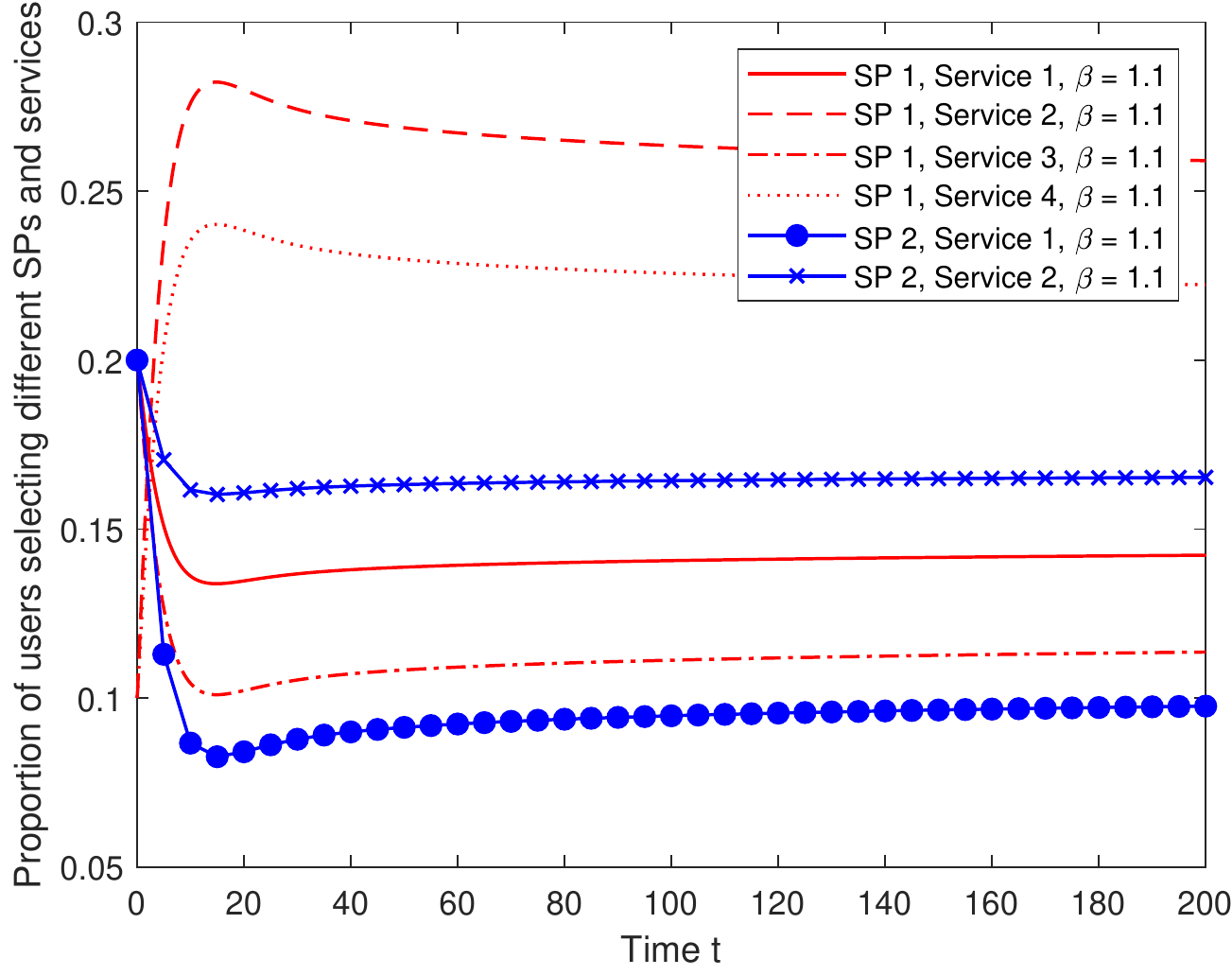}}\quad
  \subcaptionbox{}[.23\linewidth][c]{%
    \includegraphics[width=1\linewidth]{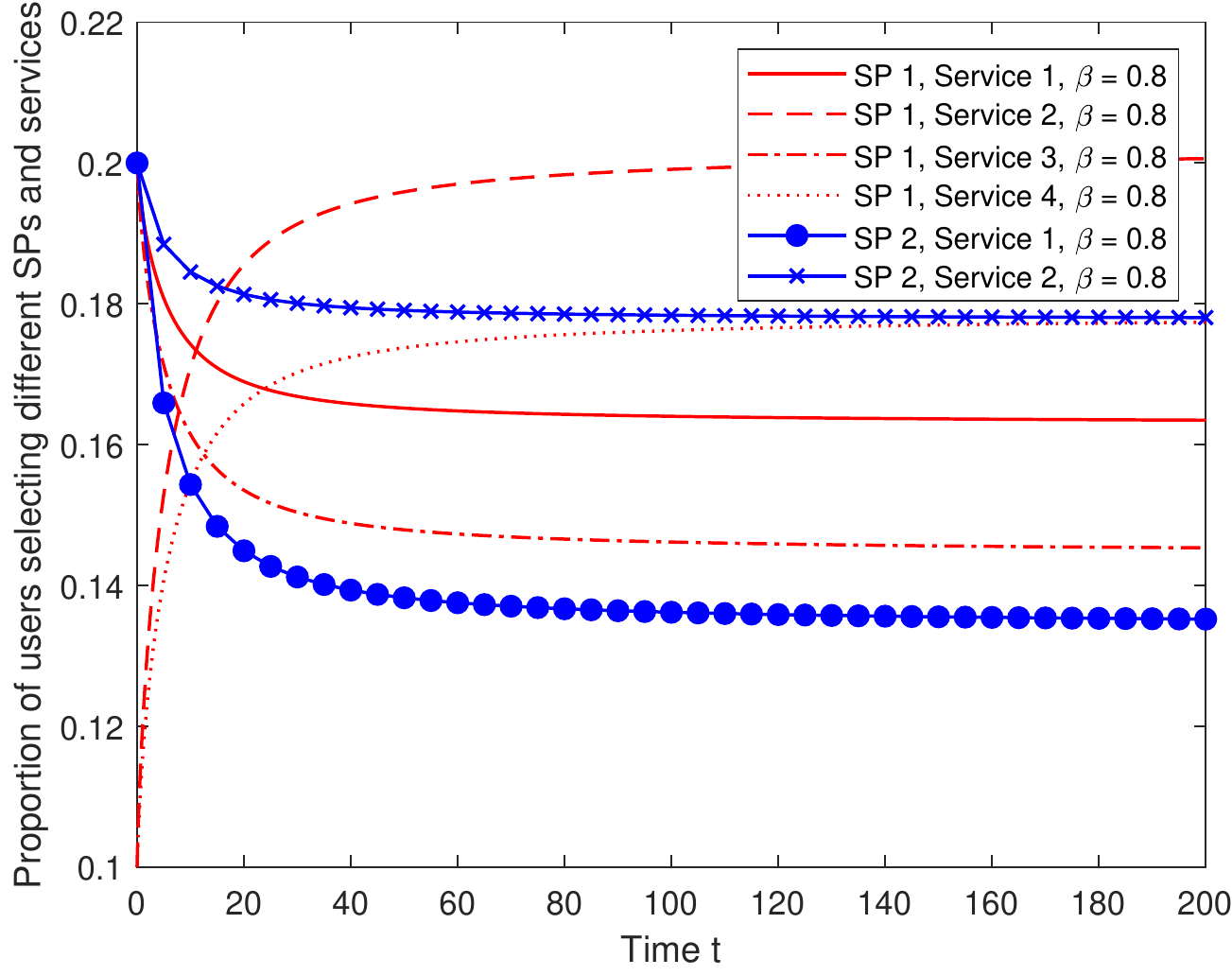}}
      \subcaptionbox{}[.23\linewidth][c]{%
    \includegraphics[width=1\linewidth]{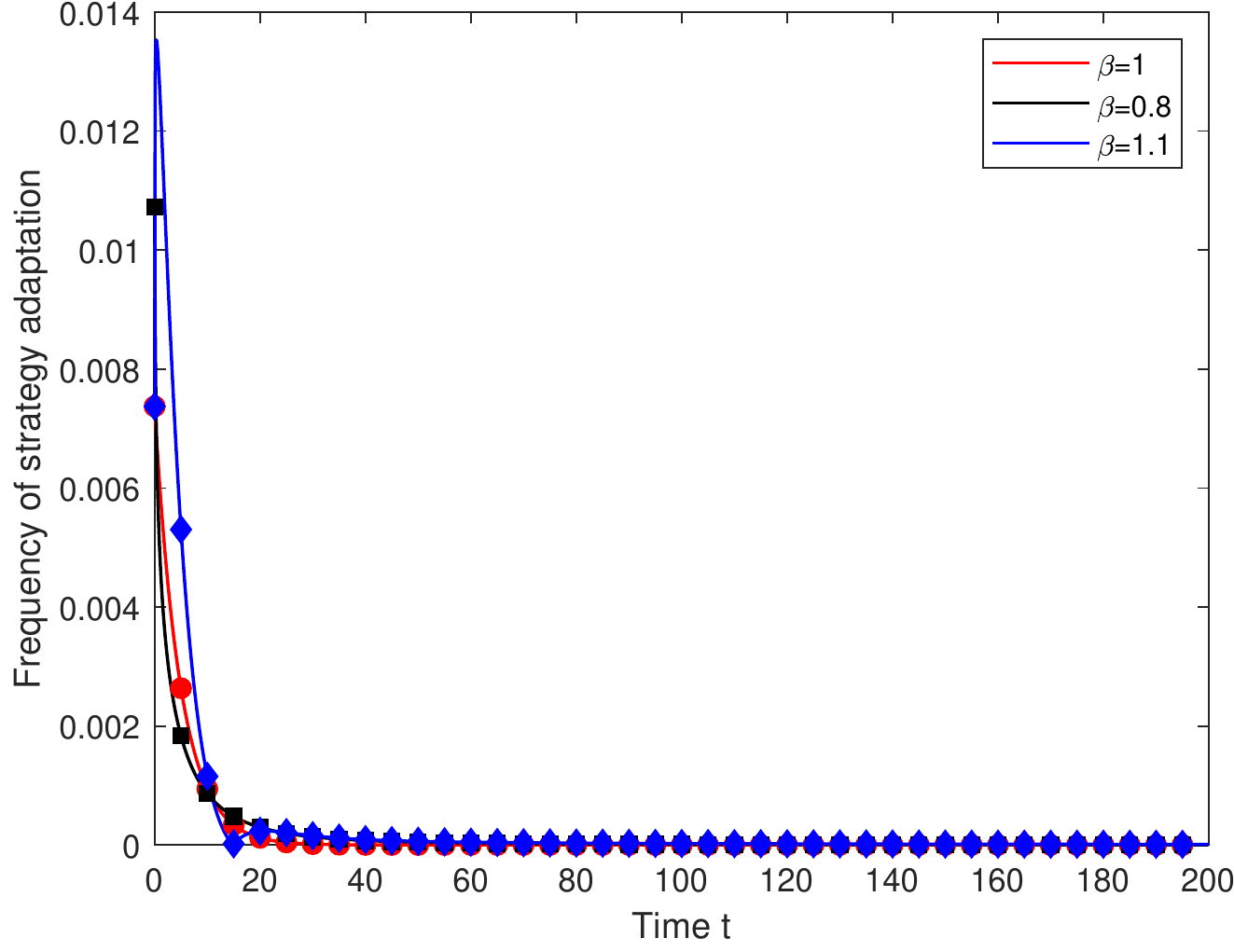}}
\caption{Proportion of users selecting different SPs and services.}
	\label{proportion1}
\end{figure*}


First, we discuss strategies that the users choose different SPs and services over evolutionary time. Figures~\ref{proportion1}(a), (b), and (c) illustrate the results for the evolutionary games with $\beta=1, 1.1$ and $0.8$, respectively. As seen, the strategies of the users choosing different SPs and services eventually converge to an equilibrium point over time. Moreover, during the initial phase, the users' strategies in the evolutionary games with $\beta=1$ and $0.8$ fluctuate in a range smaller than those in the evolutionary game with $\beta=1.1$. The results indicate that the adaptations of the users' strategies in the game with $\beta = 1.1$ are faster than those in the games with $\beta = 1$ and $0.8$. These results are further verified in Fig.~\ref{proportion1}(d) in which the strategy adaptation frequency of the users in the game with $\beta = 1.1$ is higher than those in the games with $\beta = 1$ and $0.8$. Note that as the users' strategies have a larger fluctuation, the convergence speed can be slower. Thus, as we can observe from Fig.~\ref{proportion1}(d), the game with $\beta = 1.1$ converges to the equilibrium more slowly than those in the games with $\beta=1$ and $0.8$. 

\begin{figure*}[t]
  \centering
  \subcaptionbox{}[.27\linewidth][c]{%
    \includegraphics[width=1\linewidth]{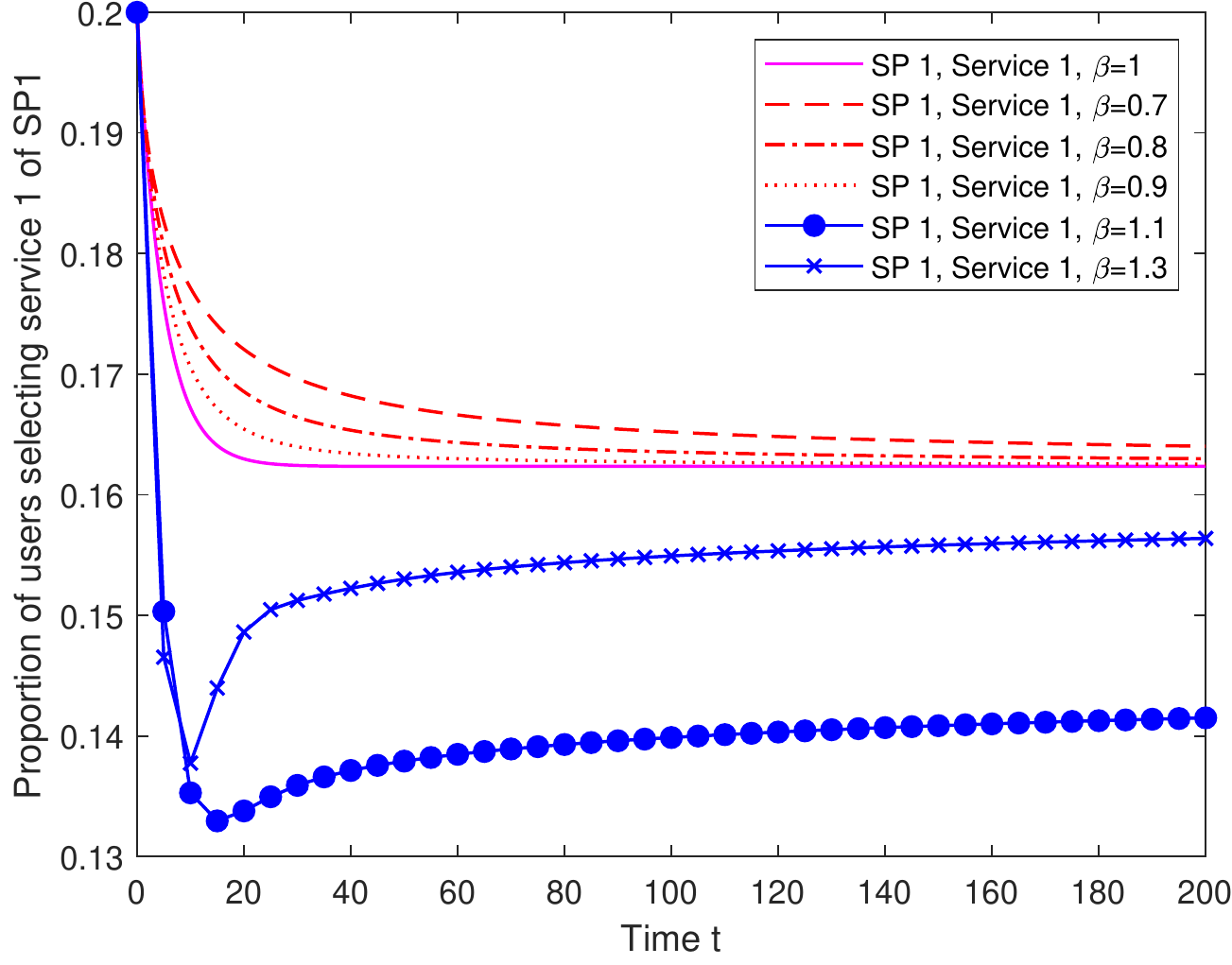}}\quad
  \subcaptionbox{}[.27\linewidth][c]{%
    \includegraphics[width=1\linewidth]{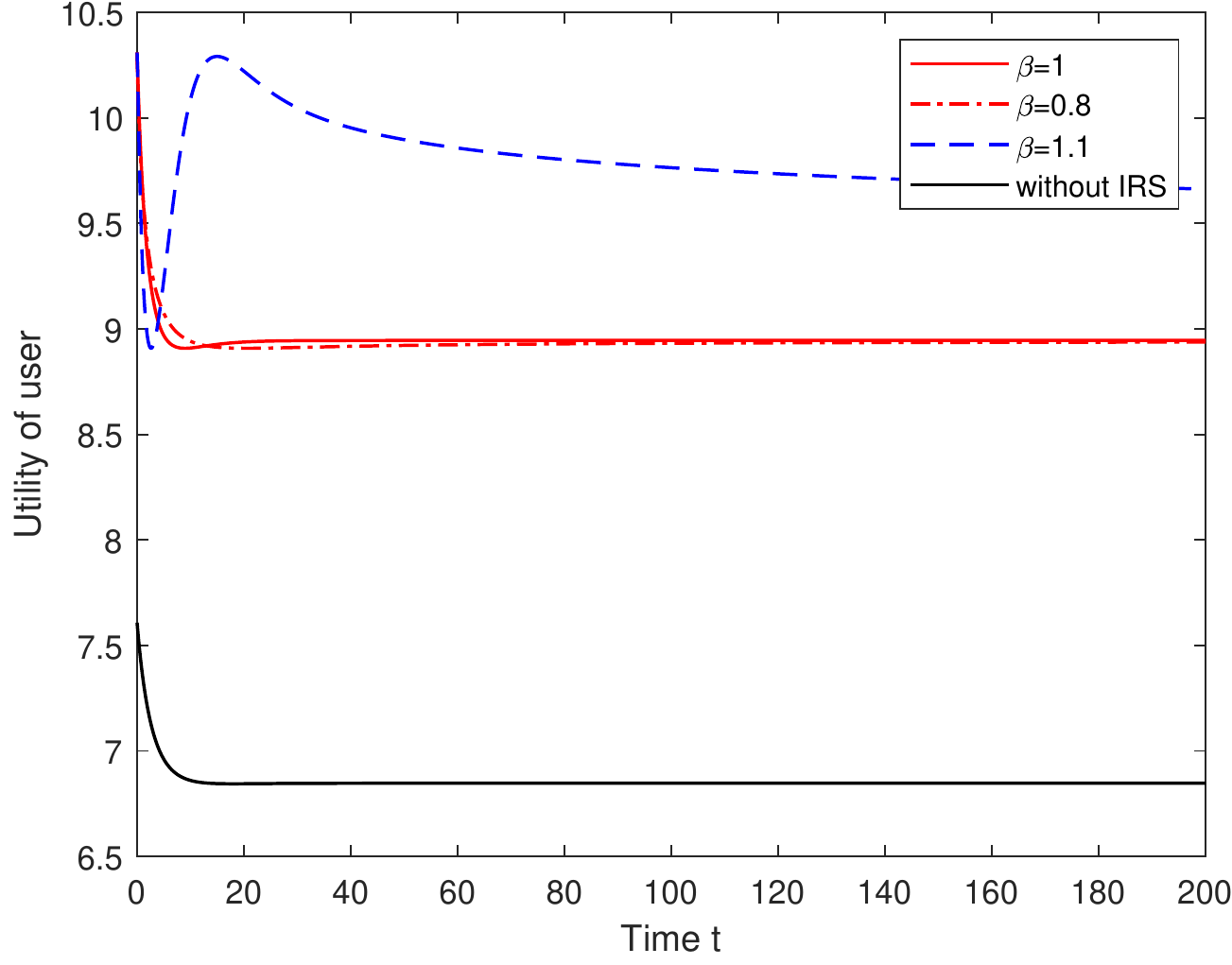}}

\caption{The impact of $\beta$ on the evolutionary games.}
	\label{impact_beta}
\end{figure*}




Now, we discuss how the memory effect coefficient, i.e., $\beta$, impacts on the evolutionary games. As shown in Fig.~\ref{impact_beta}(a), with $\beta < 1$ and $\beta > 1$, as $\beta$ increases, the convergence time of the user's strategy is shorter. This implies that as $\beta$ increases, the replicator dynamics converges faster. Moreover, as shown in Fig.~\ref{impact_beta}(b), as $\beta$ increases, the rate that the corresponding games converge to the vicinity of the equilibrium is faster. This means that the adaptation rate of the user's strategy increases with the increase of $\beta$. This results is also consistent with descriptions in Figs.~\ref{proportion1}(a), (b), (c), and (d). 

Next, it is important to show the utility that the users can achieve when different games are used. For this, we vary the values of $\beta$, i.e., the memory effect, and we evaluate the total utility the the users achieve. As shown in Fig.~\ref{impact_beta}(b), the utility for the users with $\beta=0.8 < 1$ is worse than that for the users with $\beta=1$. Meanwhile, with $ \beta = 1.1 > 1$, the users achieve higher utility values when the users have no memory effect, i.e., $\beta=1$. Since the users achieve higher utility values with $\beta = 1.1$, we can say that the memory effect with $\beta > 1$ is a positive effect. This further implies that to achieve a higher utility value, the users should incorporate both the past and instantaneously achievable experiences for their network selection. In addition, as seen from Fig.~\ref{impact_beta}(b), the total utility of users with IRSs in the games, i.e., $\beta = 0.8$, $1$, and $1.1$ is much higher than that of users without IRSs. This result demonstrates that deploying IRSs increases the throughput of the users.
\begin{figure*}[t]
  \centering
  \subcaptionbox{}[.3\linewidth][c]{%
    \includegraphics[width=1\linewidth]{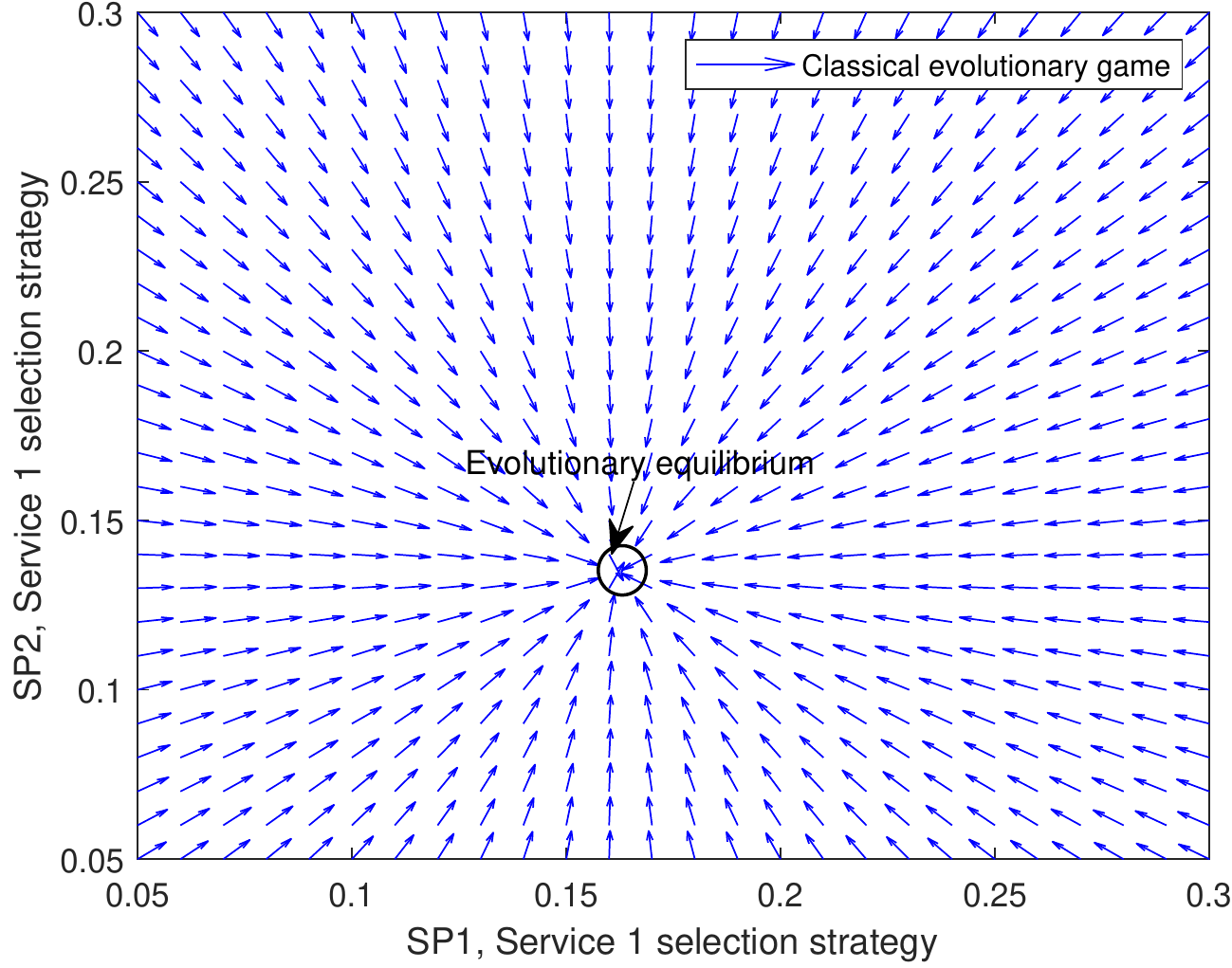}}\quad
  \subcaptionbox{}[.3\linewidth][c]{%
    \includegraphics[width=1\linewidth]{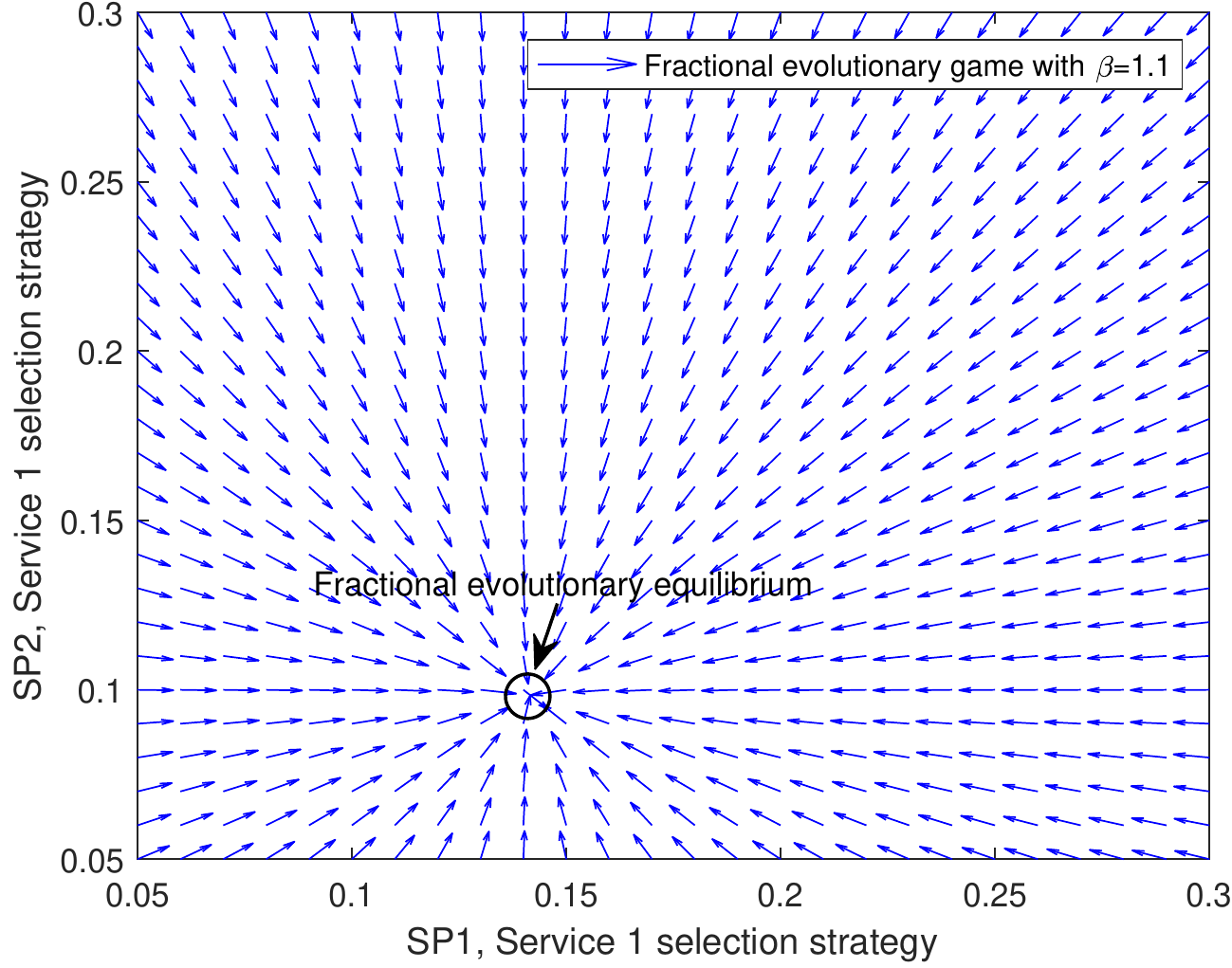}}\quad
  \subcaptionbox{}[.3\linewidth][c]{%
    \includegraphics[width=1\linewidth]{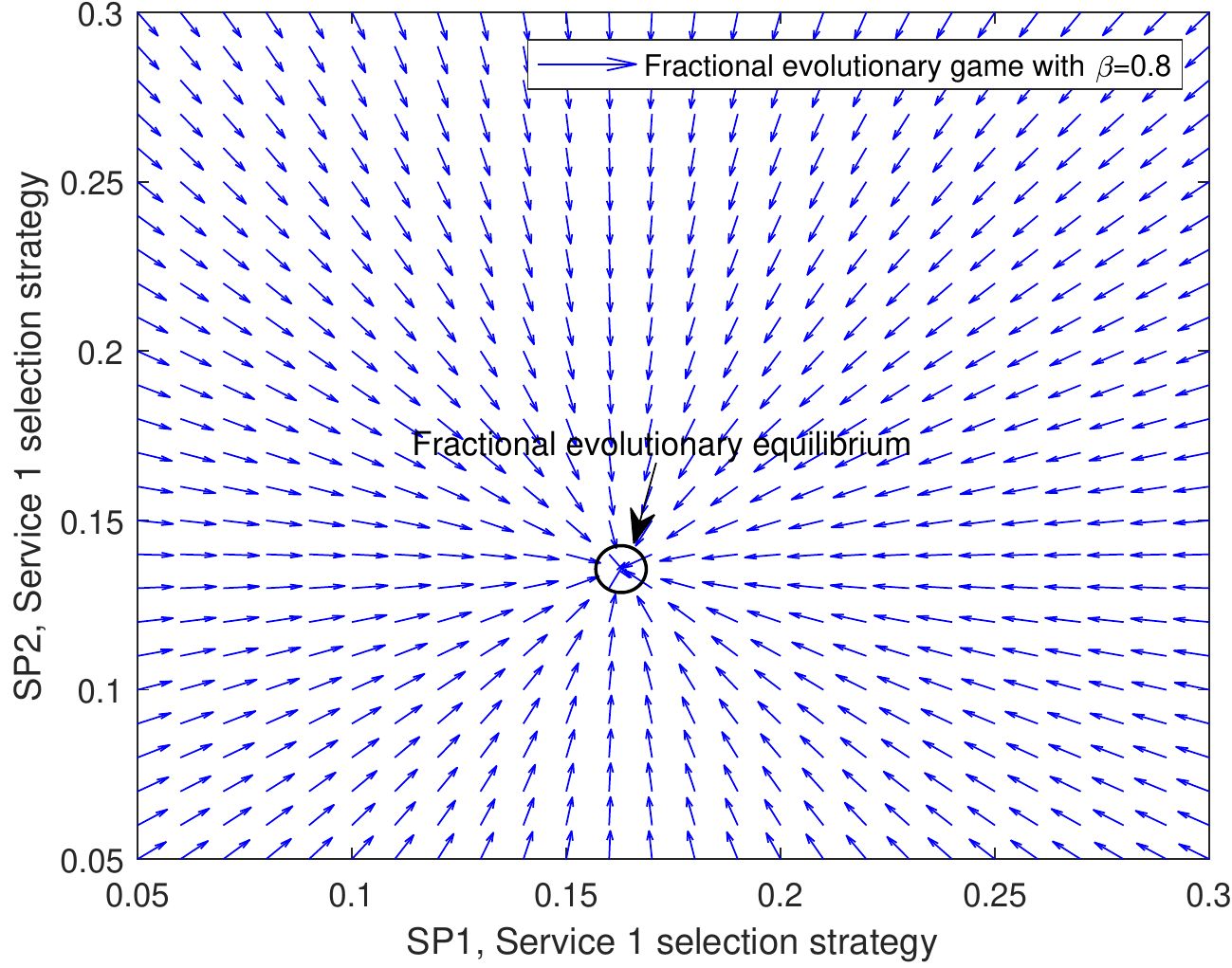}}

  \bigskip

  \subcaptionbox{}[.3\linewidth][c]{%
    \includegraphics[width=1\linewidth]{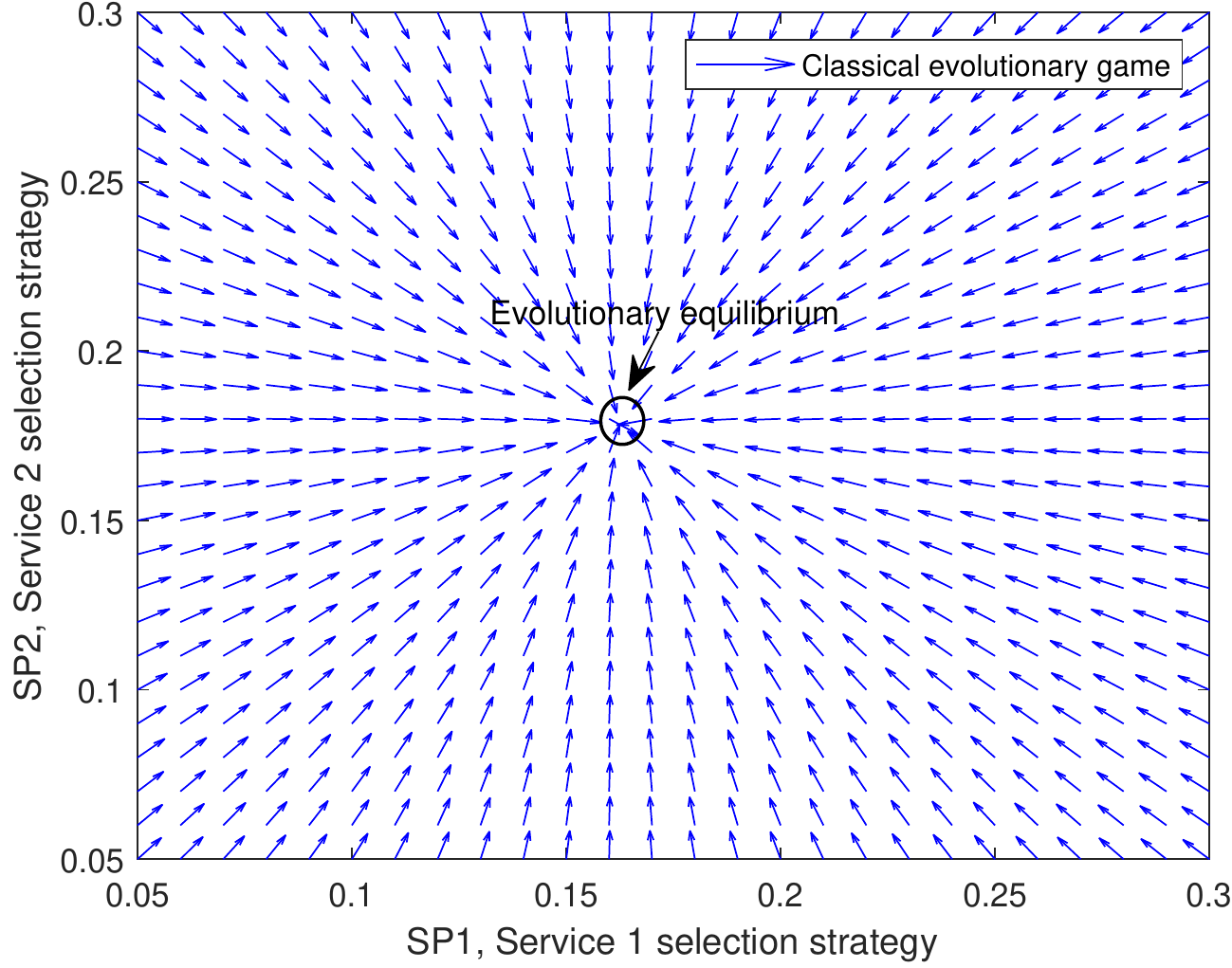}}\quad
  \subcaptionbox{}[.3\linewidth][c]{%
    \includegraphics[width=1\linewidth]{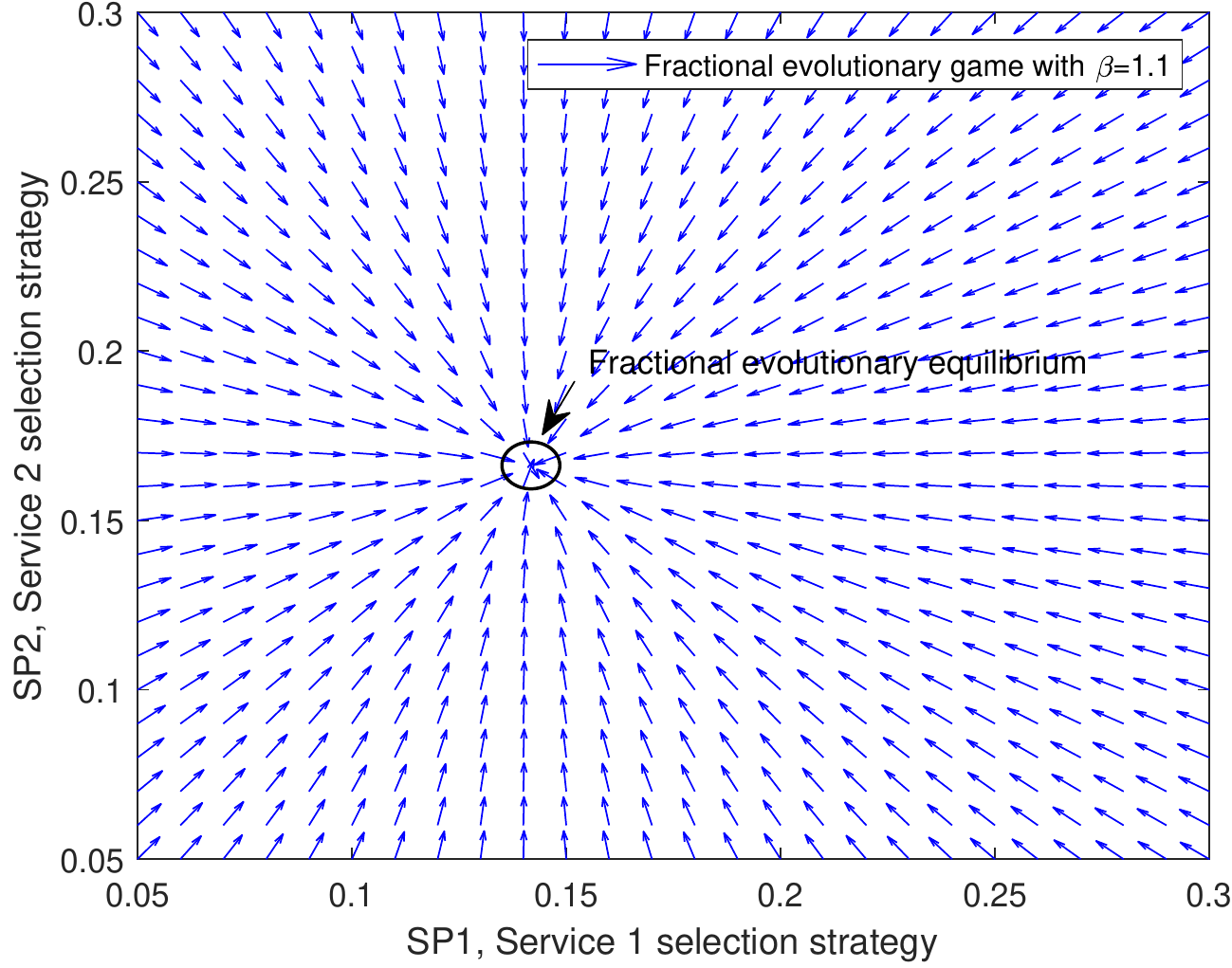}}\quad
  \subcaptionbox{}[.3\linewidth][c]{%
    \includegraphics[width=1\linewidth]{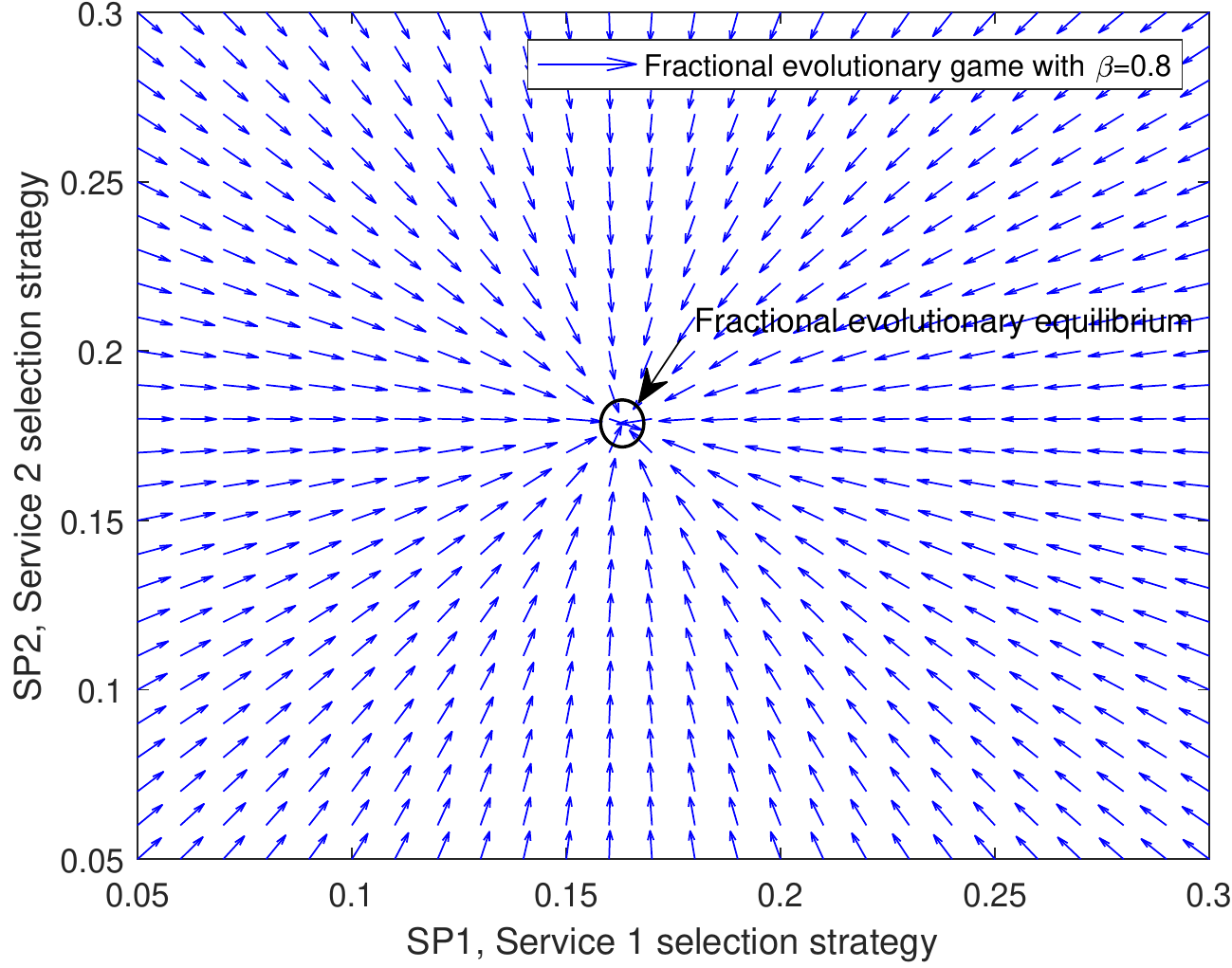}}
\caption{Direction field of the replicator dynamics when $t = 200$.}
	\label{direction_field}
		\vspace{-0.4cm}
\end{figure*}

To show that the users' strategies in the proposed game approaches can be stabilized at the equilibrium, we present the direction field of the replicator dynamics. As illustrated in Fig.~\ref{direction_field}, the strategies of the users eventually reach the equilibrium strategy after a certain time, i.e., $t = 200$, that is represented by the black circles. We can take the results shown in Fig~\ref{direction_field}(a) as an example. In the figure, we show the replicator dynamics of selection strategy of SP 1, service 1 and SP 2, service 1 $(p_{1,1}$ and $p_{2,1}$).  Assuming that the strategies the users select services provided by SP 2 (i.e., $p_{2,1}$ and $p_{2,2}$) achieve the equilibrium, and $p_{1,1}= 1 -  p_{1,2} - p_{1,3} - p_{1,4} - p_{2,1} - p_{2,2}$. As seen, the users are able to adapt their strategies by following the directions of the arrows. Furthermore, any initial strategy eventually reach the equilibrium that verifies the stability of our proposed game approaches. That is similar to Figs.~\ref{direction_field}(b), (c), (d), (e), and (f).


\begin{figure*}[t]
  \centering
  \subcaptionbox{}[.3\linewidth][c]{%
    \includegraphics[width=1\linewidth]{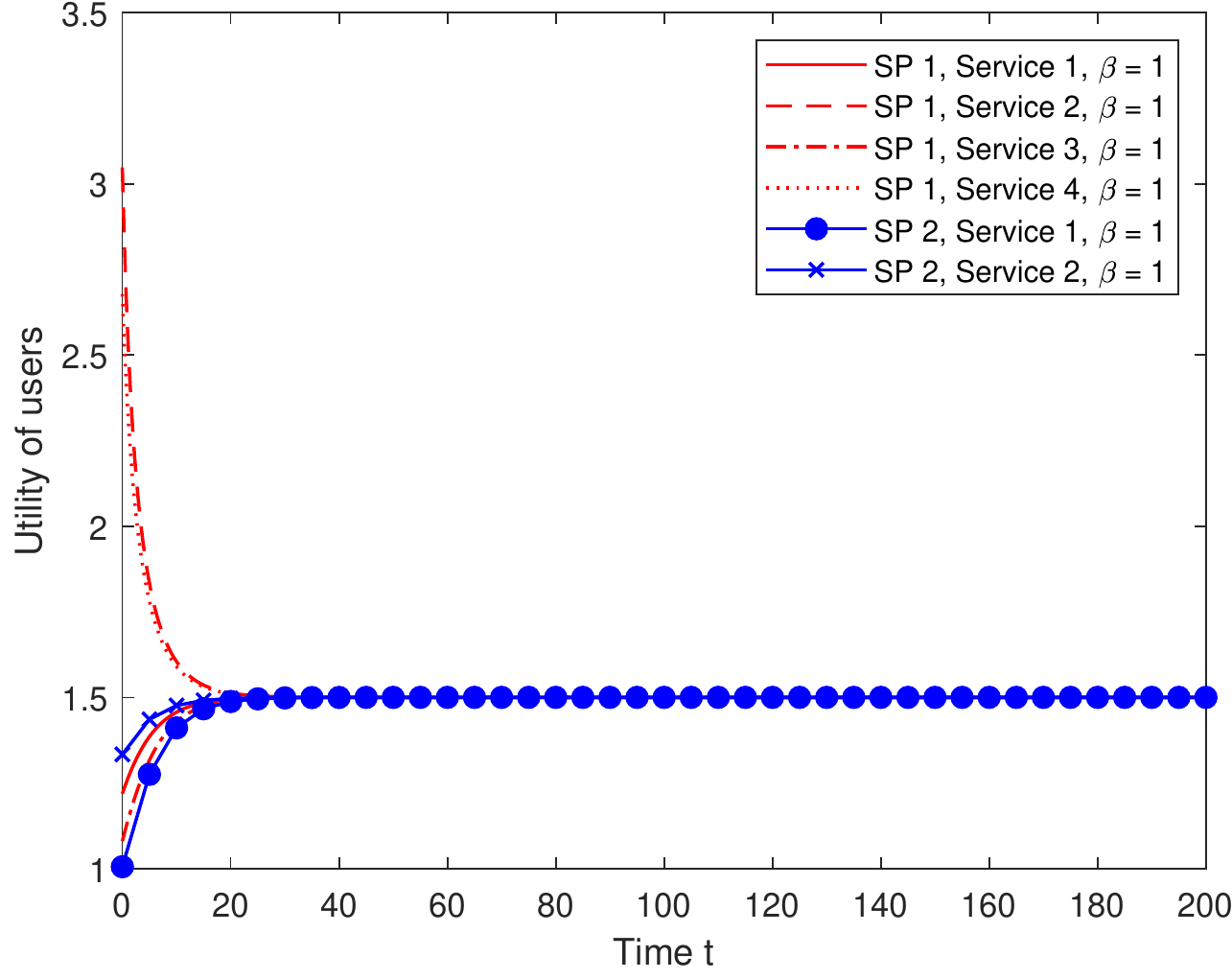}}\quad
  \subcaptionbox{}[.3\linewidth][c]{%
    \includegraphics[width=1\linewidth]{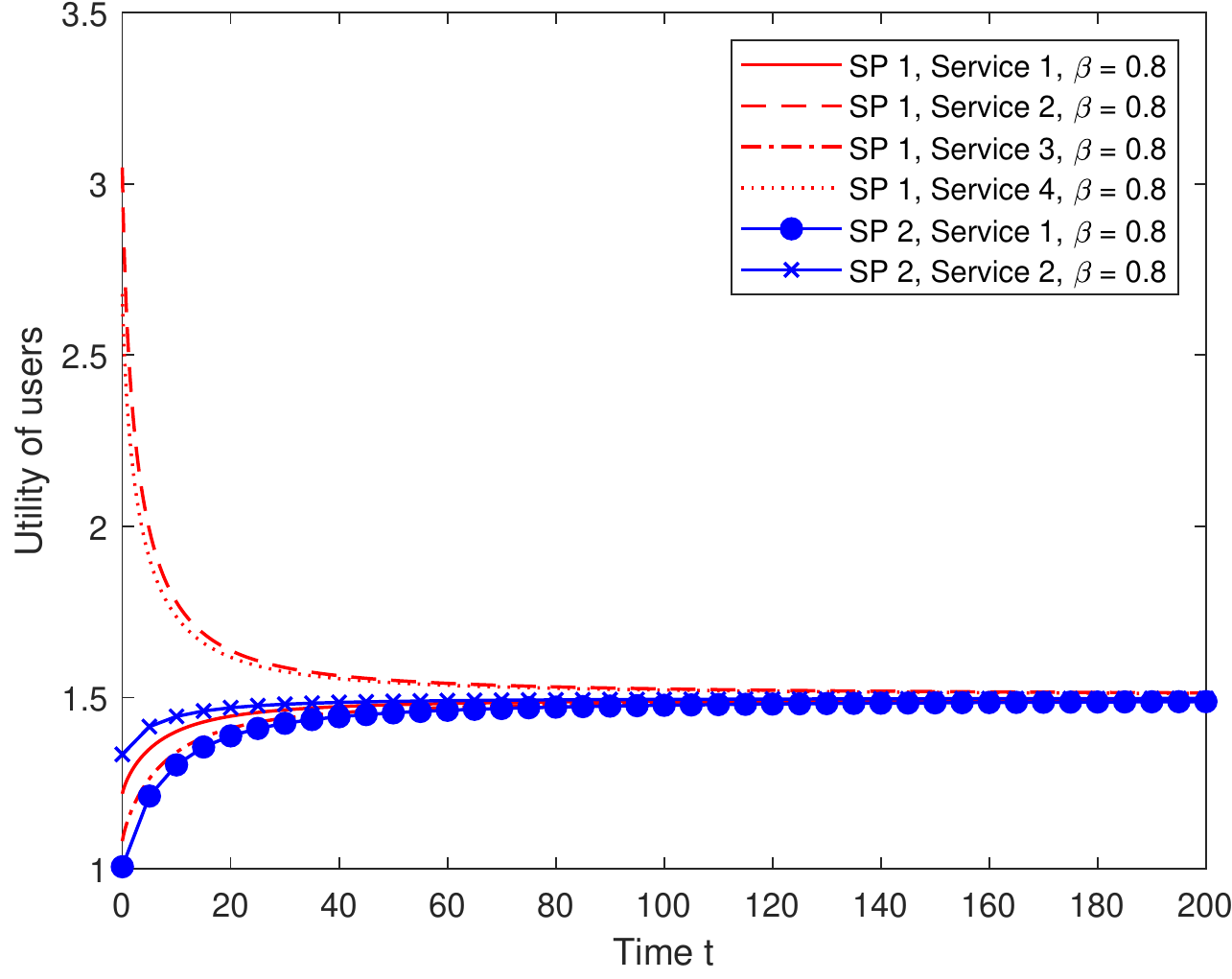}}\quad
  \subcaptionbox{}[.3\linewidth][c]{%
      \includegraphics[width=1\linewidth]{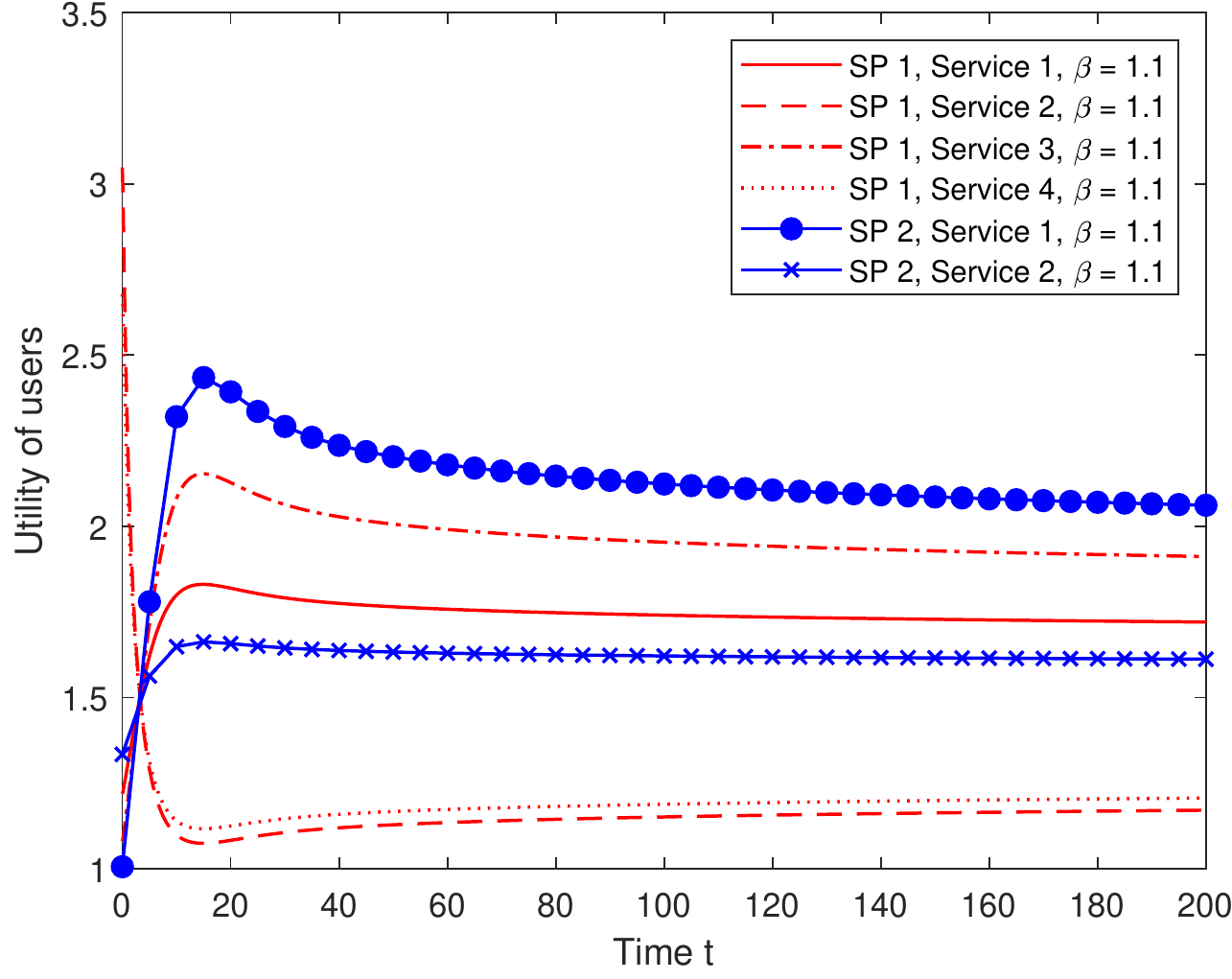}}

\caption{Utility of user groups.}
	\label{utility1}
		\vspace{-0.4cm}
\end{figure*}

Now, we discuss how the utilities of the users obtained at the equilibrium. Figures~\ref{utility1}(a), (b), and (c) show the utilities that the users achieve by selecting different SPs and network services over evolutionary time. As seen, the utilities of the users vary until the equilibrium is reached. At the equilibrium, the users have the same utility even if they select different SPs and services. The reason is that the evolutionary equilibrium is reached only when the utilities of the users choosing any SP and any service are equal to their expected utility.


\begin{figure*}[t]
  \centering
  \subcaptionbox{}[.3\linewidth][c]{%
    \includegraphics[width=1\linewidth]{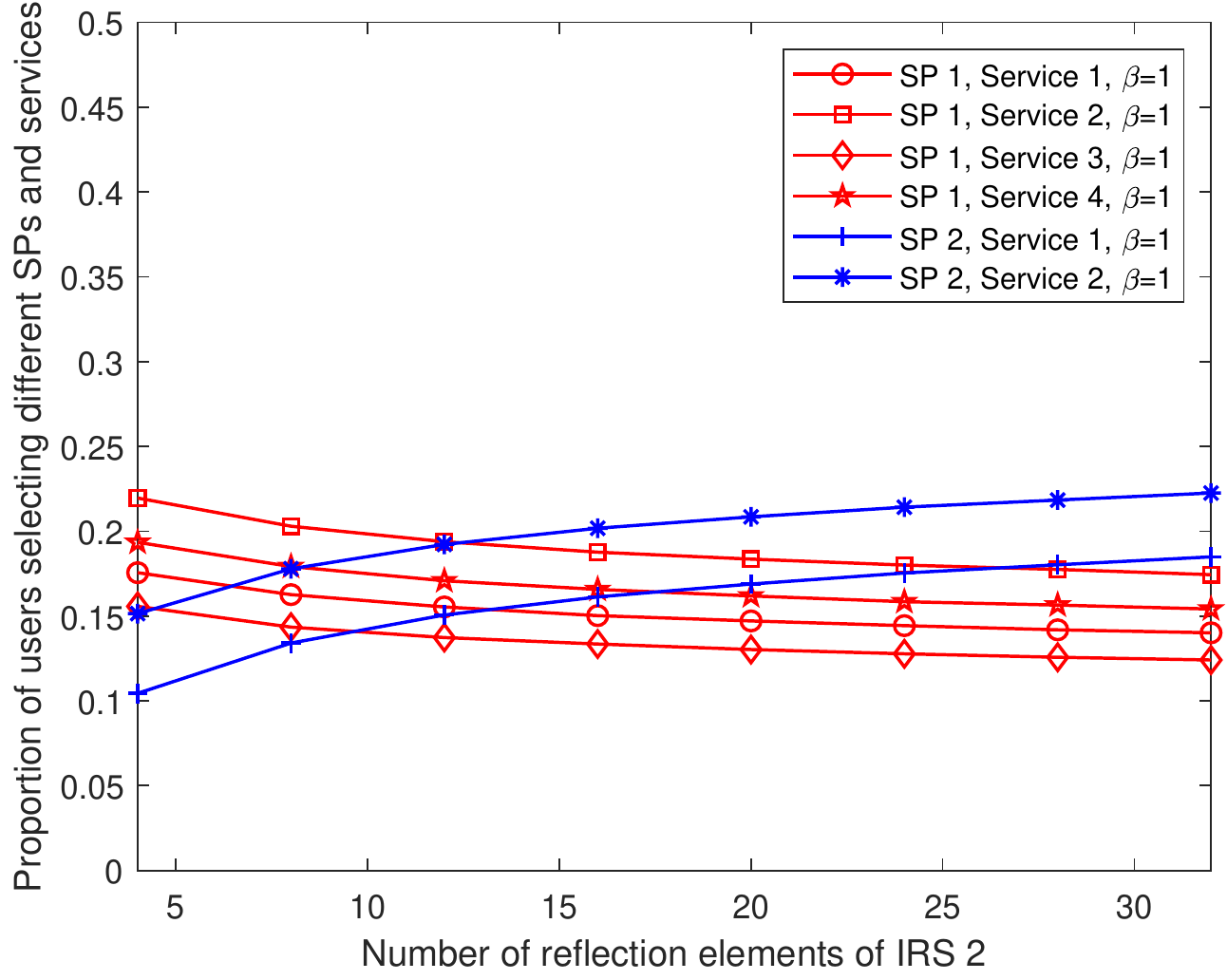}}\quad
  \subcaptionbox{}[.3\linewidth][c]{%
     \includegraphics[width=1\linewidth]{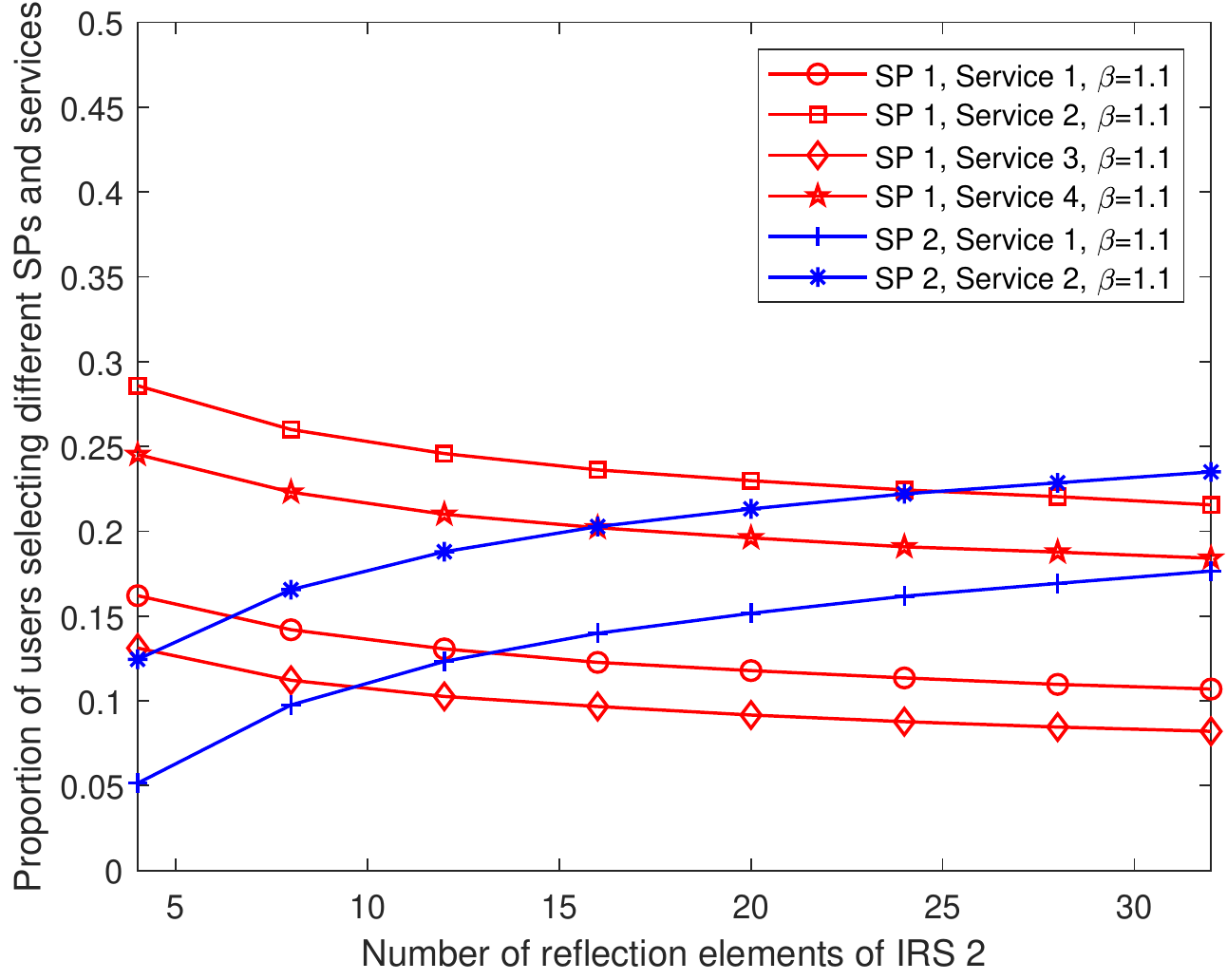}}\quad
  \subcaptionbox{}[.3\linewidth][c]{%
\includegraphics[width=1\linewidth]{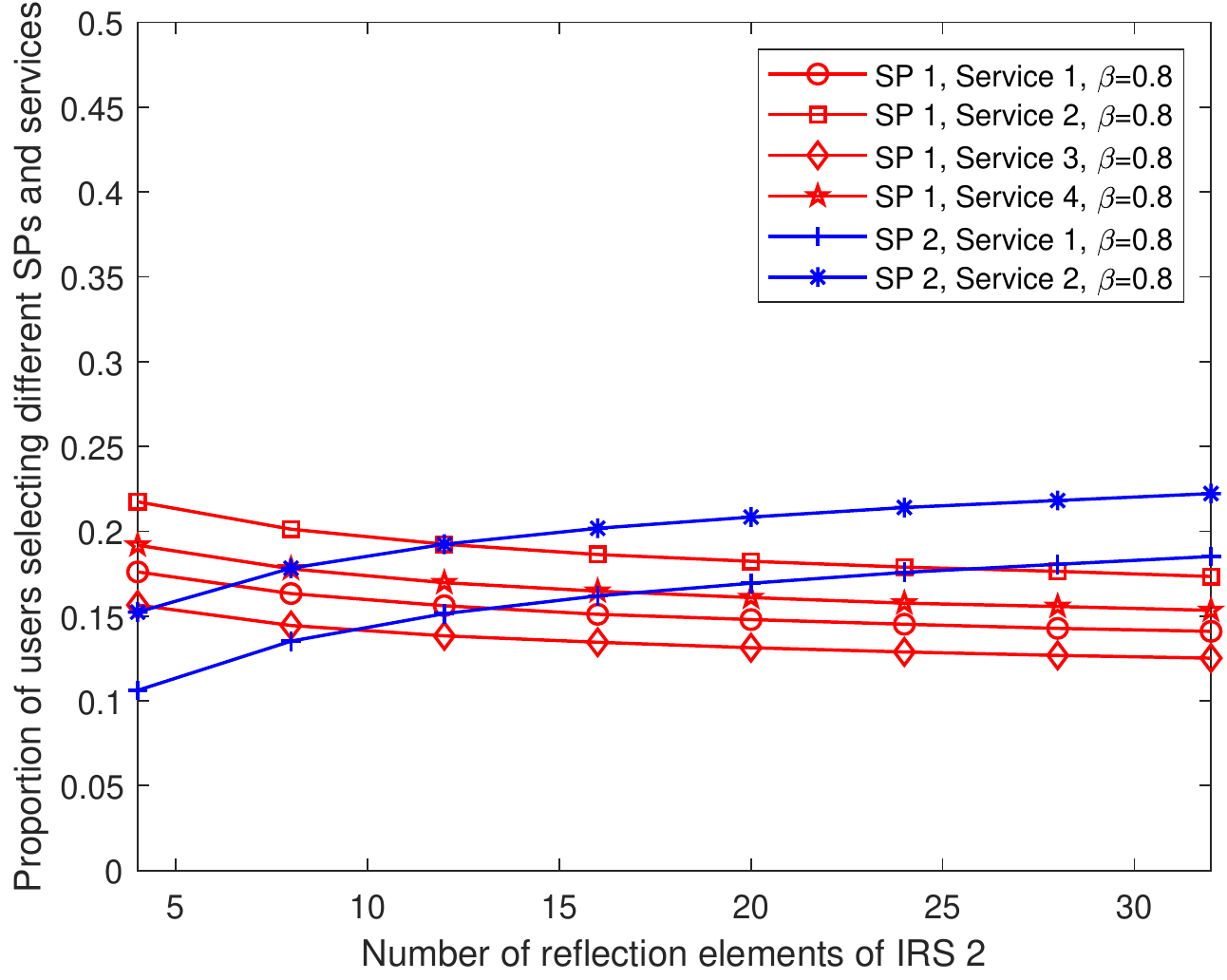}}

\caption{Proportion of users selecting different service as the size of IRS provided by SP 2 varies.}
	\label{proportioin_change_size_IRS}
\end{figure*}

Next, we investigate the impact of sizes of IRSs on the proportions of users selecting different SPs and services. Figures~\ref{proportioin_change_size_IRS} (a), (b), and (c) show the results obtained from the evolutionary game with $\beta=1, 1.1$ and $0.8$, respectively. In particular, we vary the size $K_{2}$ of IRS 2 of SP 2. As shown in Fig.~\ref{proportioin_change_size_IRS}(a), as $K_{2}$ increases, the proportions of users selecting services provided by SP 2 increase since the throughput and utility obtained by the users selecting services provided by SP 2 increase. However, as the size of IRS 2 is large, the increasing rate tends to be slower. This is because of that the users pay a very high resource cost if they select the services provided by SP 2, and thus they tend to select the services provided by SP 1. Figures~\ref{proportioin_change_size_IRS}(b) and (c) have the same pattern as Fig. 8(a), and the results can be explained in the same way.

\begin{figure*}[t]
  \centering
  \subcaptionbox{}[.3\linewidth][c]{%
    \includegraphics[width=1\linewidth]{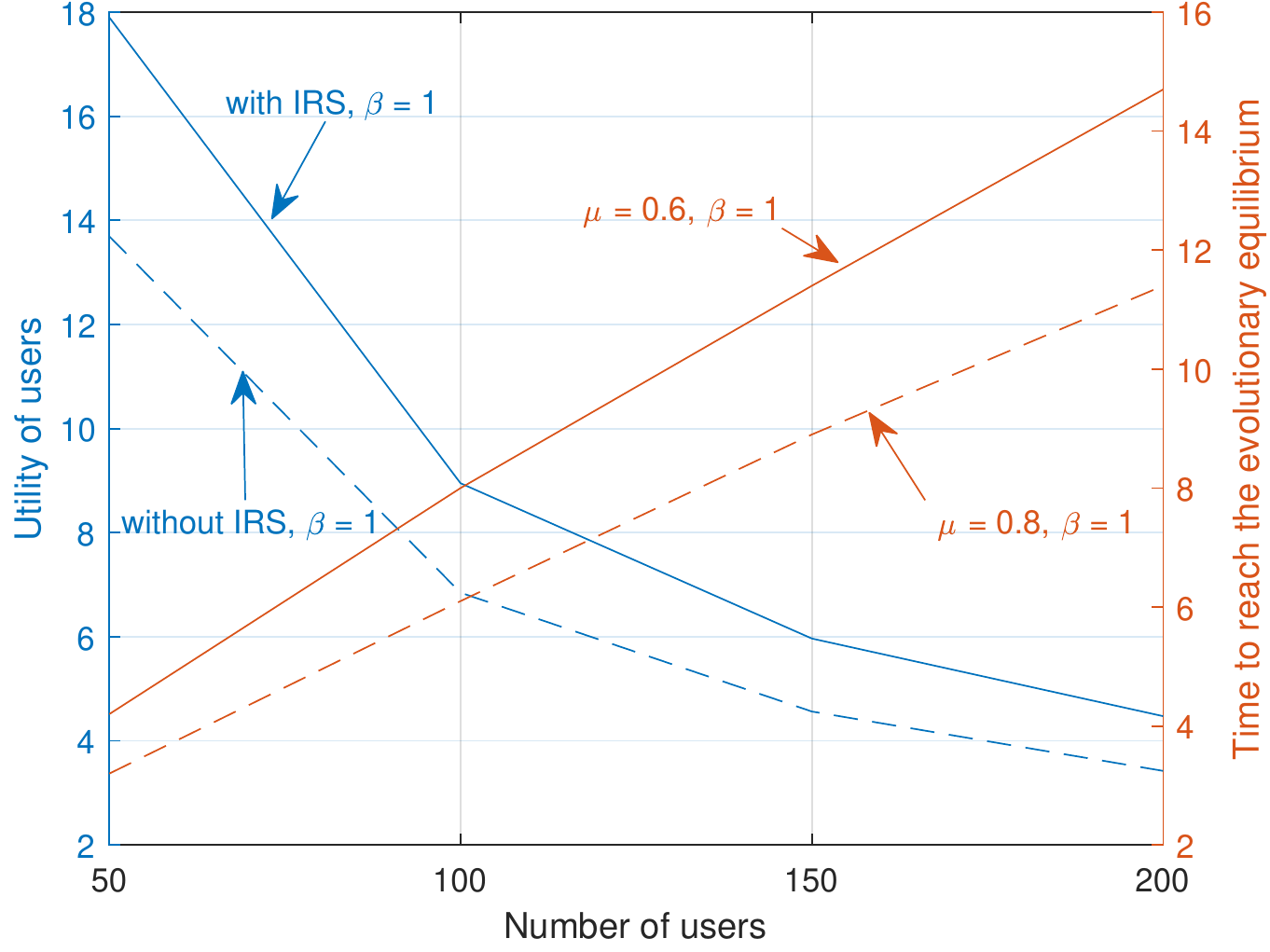}}\quad
  \subcaptionbox{}[.3\linewidth][c]{%
    \includegraphics[width=1\linewidth]{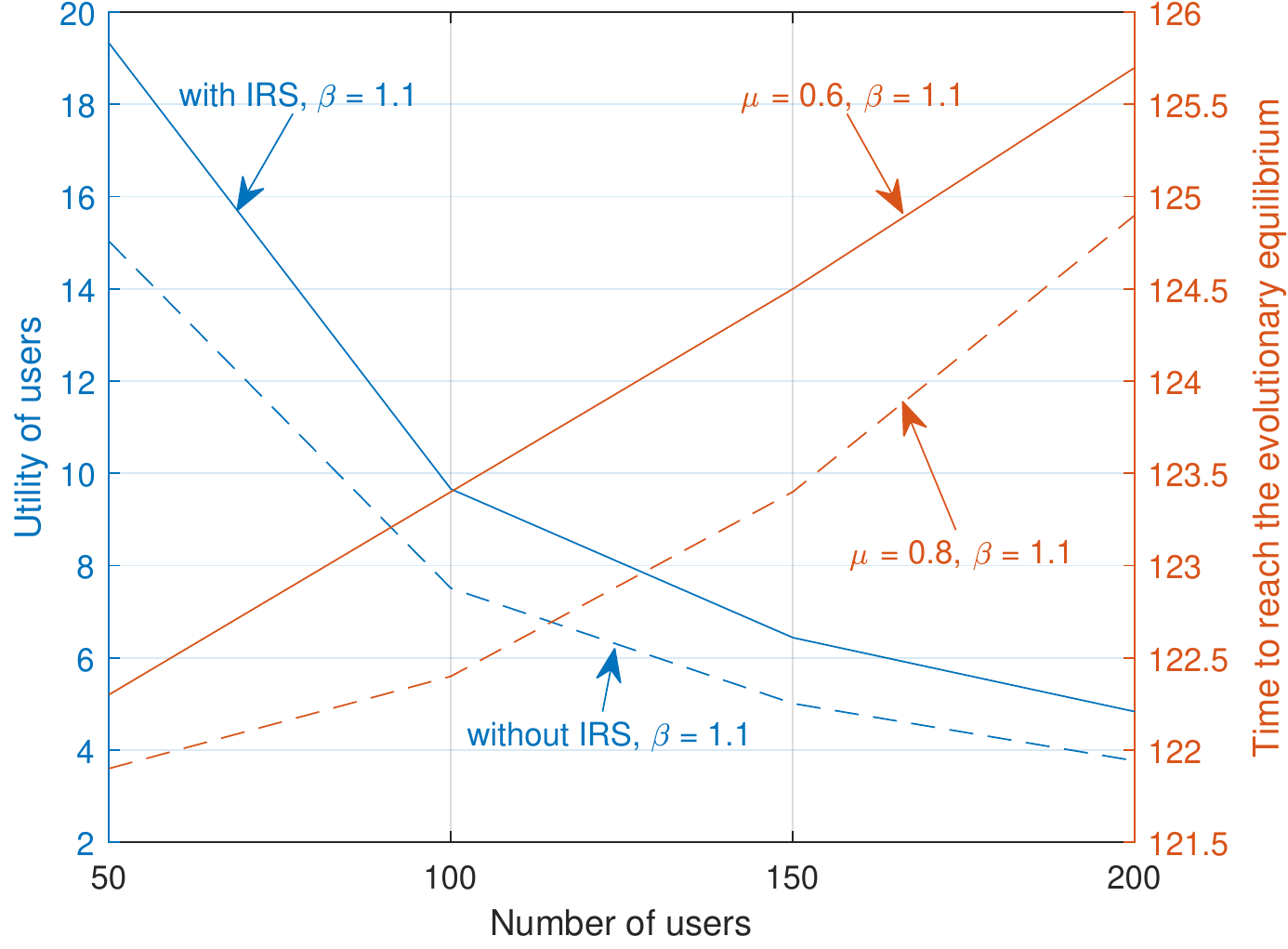}}\quad
  \subcaptionbox{}[.3\linewidth][c]{%
    \includegraphics[width=1\linewidth]{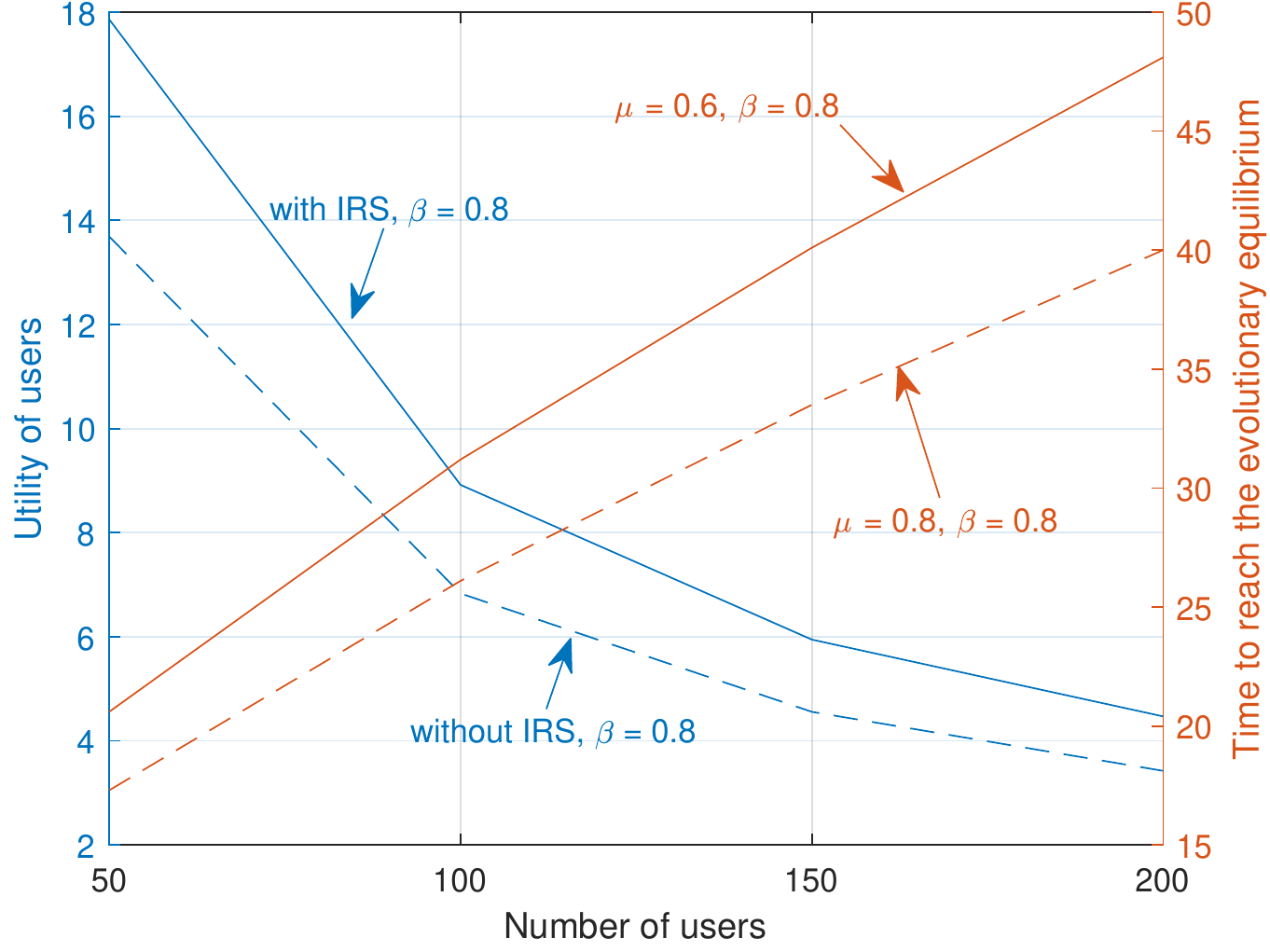}}

\caption{Impact of the number of users on the total utility and the time to reach the equilibrium.}
	\label{learning_rate}
		\vspace{-0.4cm}
\end{figure*}

Note that the time to reach the equilibrium can be different depending on the learning rate $\mu$ and the number of users. Figures~\ref{learning_rate}(a), (b), and (c) show the results obtained from the fractional evolutionary games with $\beta=1, 1.1$ and $0.8$, respectively. As seen, the evolutionary equilibrium is reached faster as $\mu$ is higher since the frequency of the strategy adaptation of the users is higher. Moreover, the games need more time to converge to the equilibrium as the number of users $N$ increases. Note that as $N$ increases, the total utility of the users decreases. This can be explained based on~(\ref{expected_data_rate}), more users share the fixed amount of bandwidth that results in reducing the throughput of the users. Moreover, from Fig.~\ref{learning_rate}(a), the total utility of users with IRSs is much higher than that of users without IRSs. The reason is that deploying IRSs increases the throughput of the users.


\begin{figure*}[t]
  \centering
  \subcaptionbox{}[.3\linewidth][c]{%
    \includegraphics[width=1\linewidth]{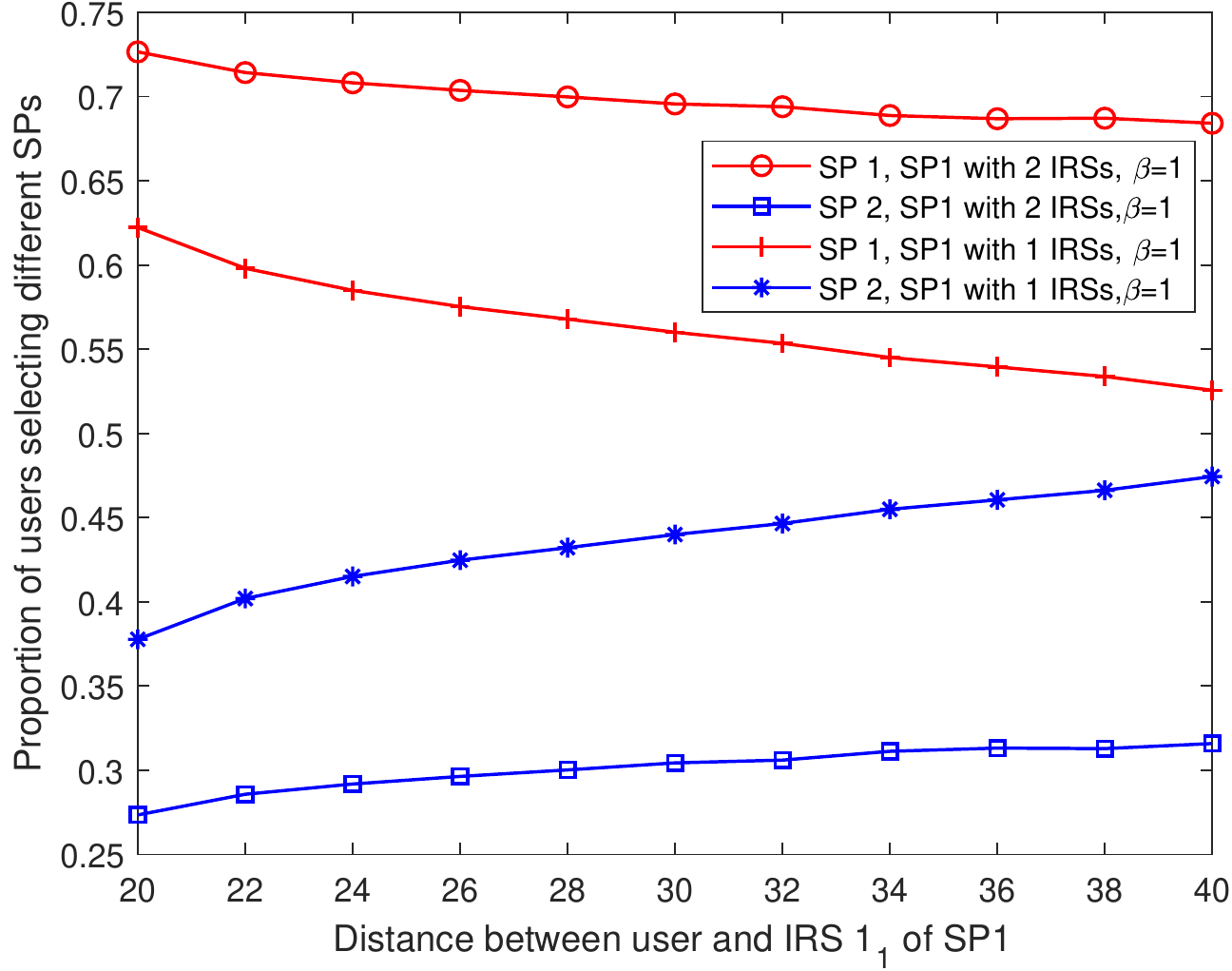}}\quad
  \subcaptionbox{}[.3\linewidth][c]{%
    \includegraphics[width=1\linewidth]{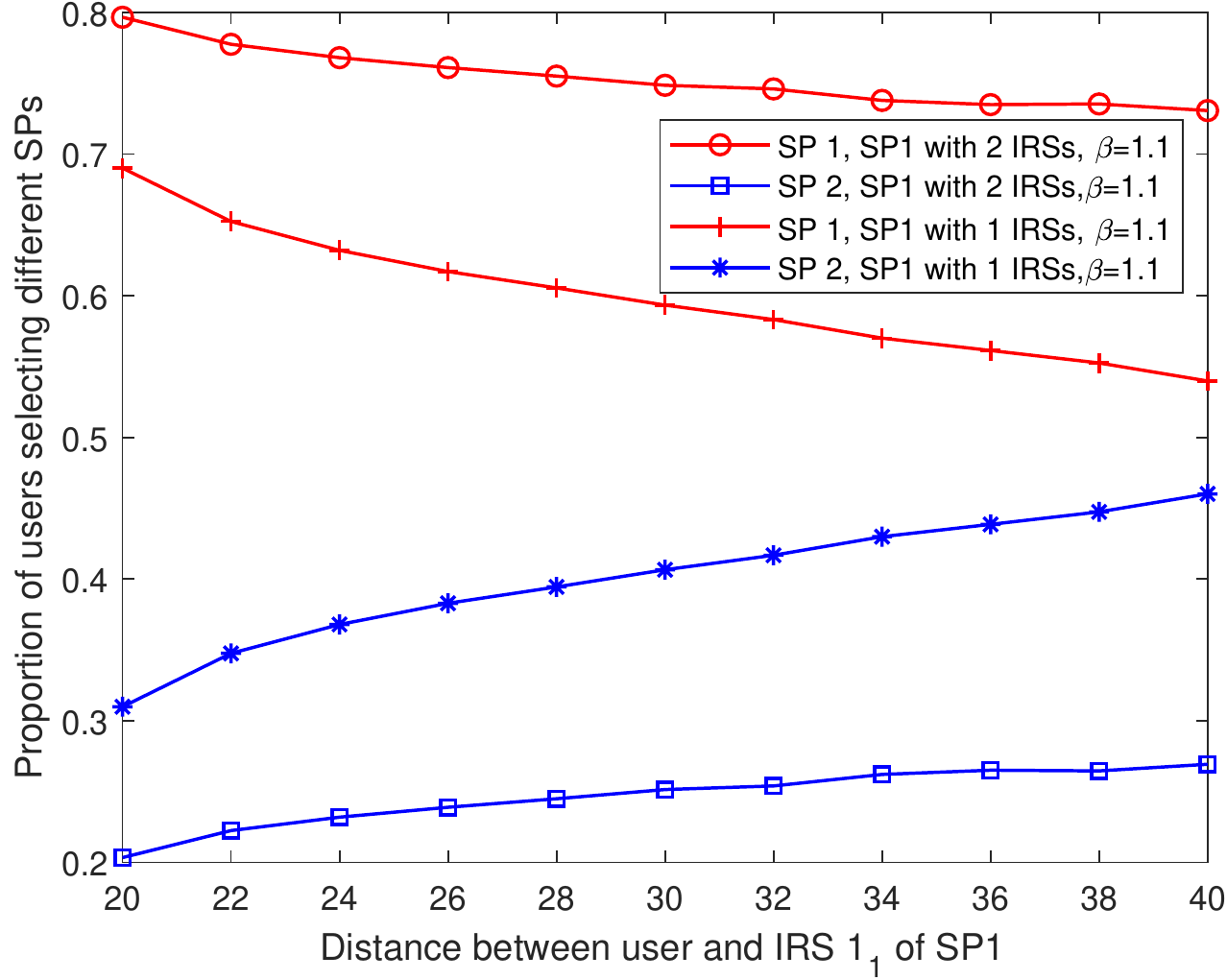}}\quad
  \subcaptionbox{}[.3\linewidth][c]{%
    \includegraphics[width=1\linewidth]{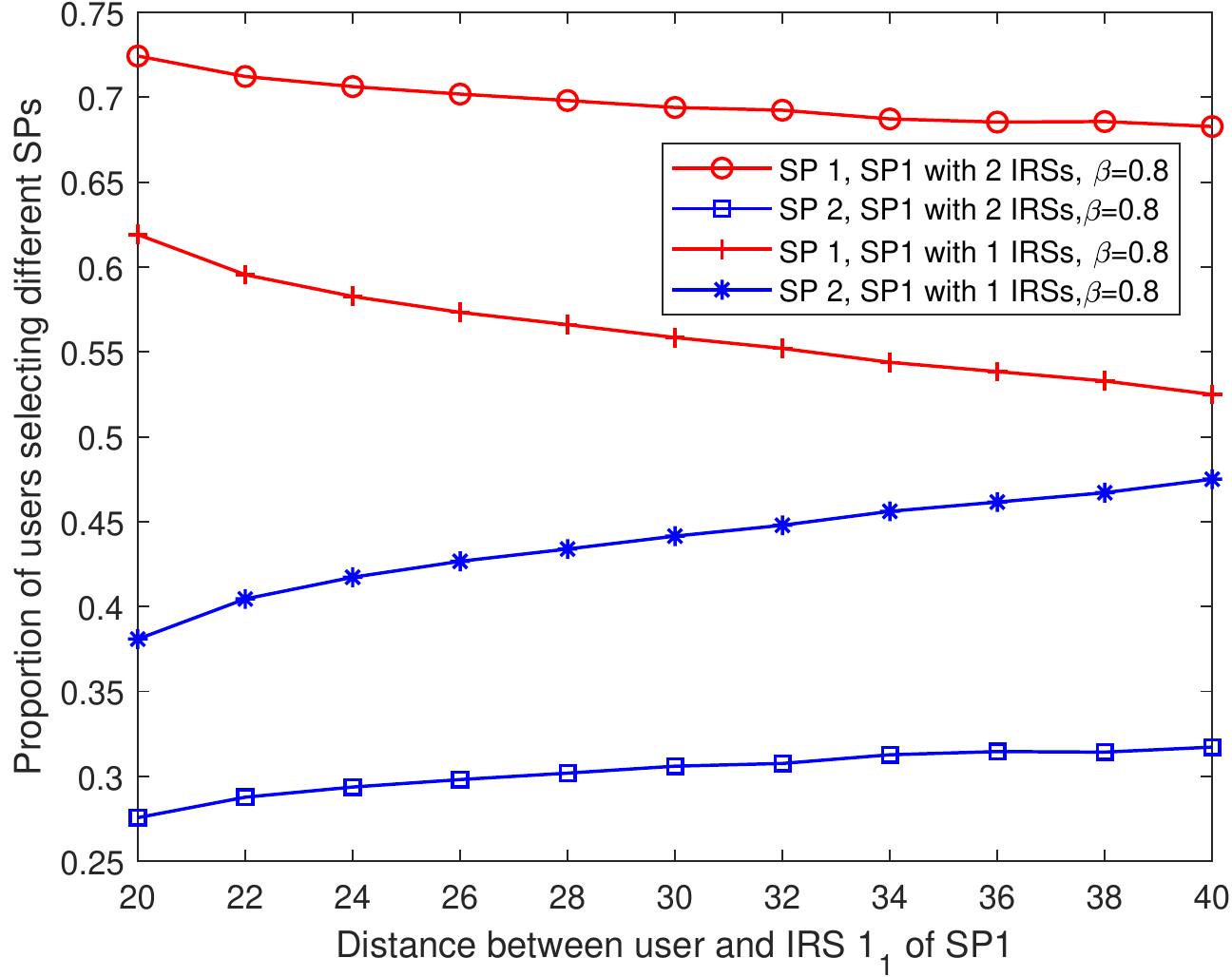}}

\caption{Proportions of users selecting different SPs vs. distance.}
	\label{proportion_change_distance}
		\vspace{-0.4cm}
\end{figure*}

Next, we discuss how the mobility of the users impacts the selection strategies of the users. In particular, we evaluate the proportions of users selecting different SPs as the distance between the users and IRS 1 provided by SP 1 varies. The results for the evolutionary games with $\beta=1, 1.1$ and $0.8$ are respectively shown in Figs.~\ref{proportion_change_distance}(a), (b), and (c). As observed from Figs.~\ref{proportion_change_distance}(a), (b), and (c), the proportion of users selecting SP 2 increases as the distance between the users and IRS 1 of SP 1 increases. This is because of that the throughput obtained by the users selecting services of SP 1 decreases. Thus, the users are willing to select services of SP 2. In addition, we consider the case as SP 1 deploys one IRS. As observed from Figs.~\ref{proportion_change_distance}(a), (b), and (c), as SP 1 deploys 1 IRS, the proportions that the users select SP 1 is lower than those that SP 1 deploys 2 IRSs. Especially,    decreases more slowly than those that SP 1 deploys 2 IRSs. Interestingly, as SP 1 deploys 1 IRS, the proportions that the users select SP 1 decrease faster than those that SP 1 deploys 2 IRSs. This implies that by deploying more IRSs, the SP can further improve the QoS of the users and can prevent the users to select the network service of other SPs.

\begin{figure*}[t]
  \centering
  \subcaptionbox{}[.3\linewidth][c]{%
    \includegraphics[width=1\linewidth]{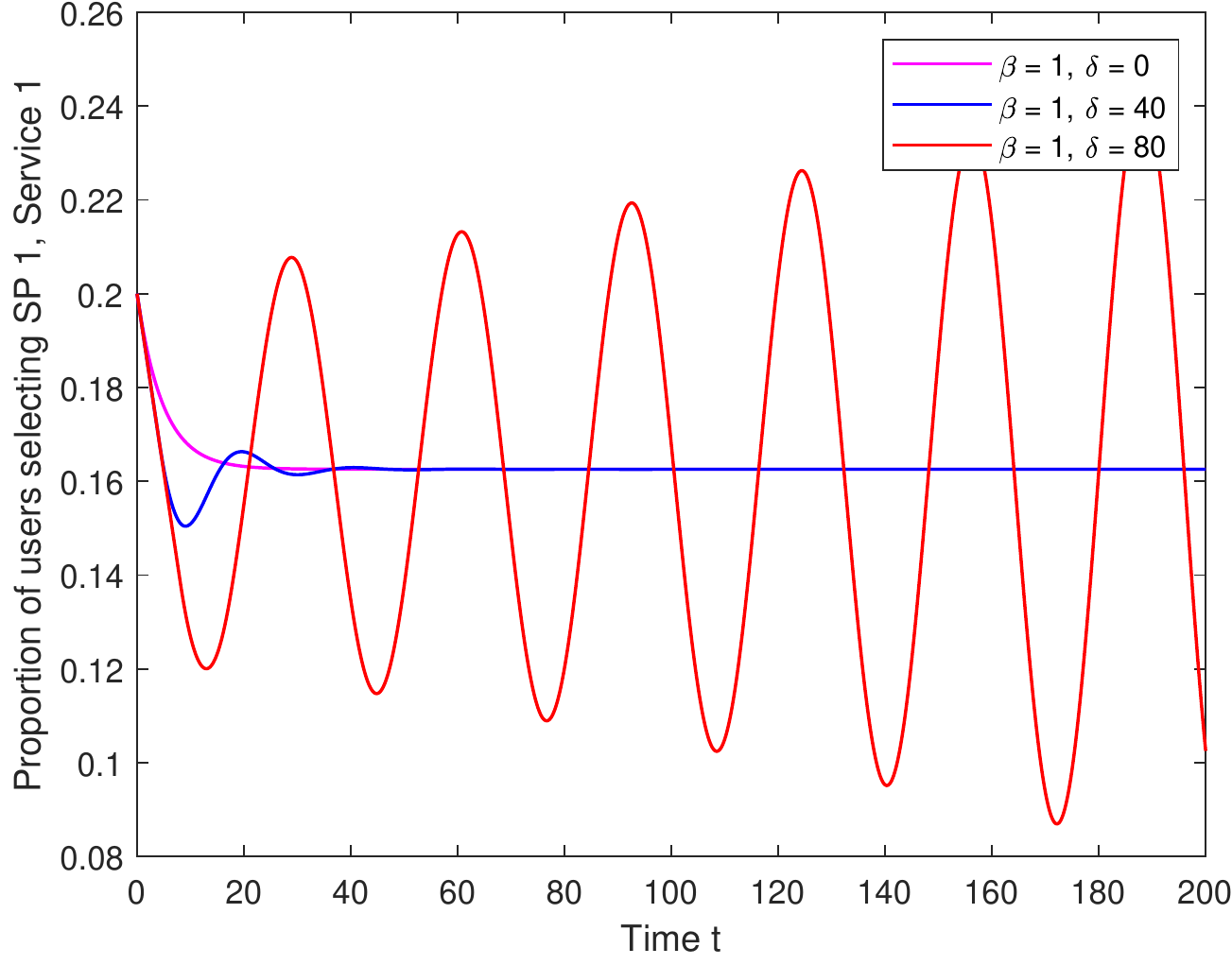}}\quad
  \subcaptionbox{}[.3\linewidth][c]{%
    \includegraphics[width=1\linewidth]{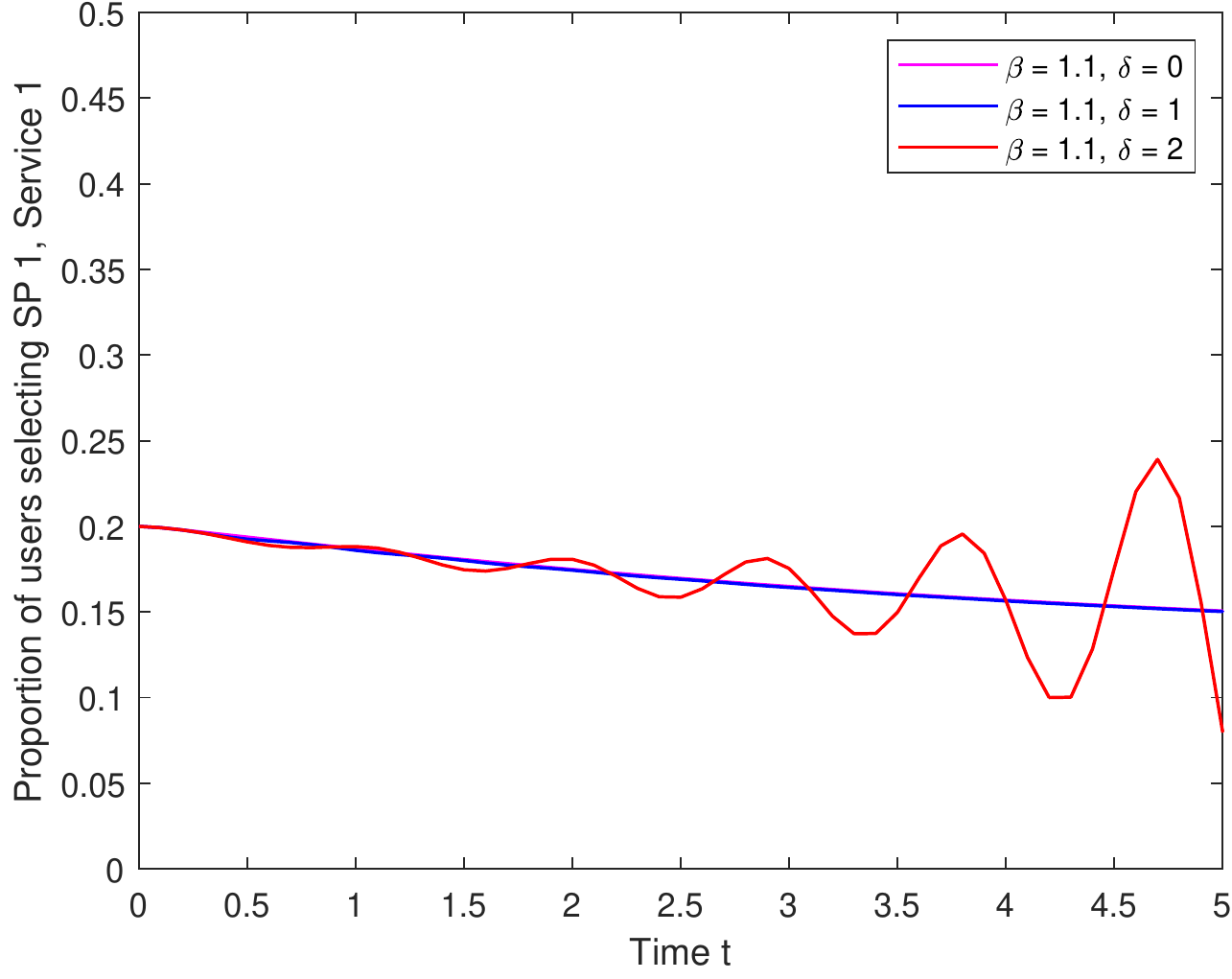}}\quad
  \subcaptionbox{}[.3\linewidth][c]{%
    \includegraphics[width=1\linewidth]{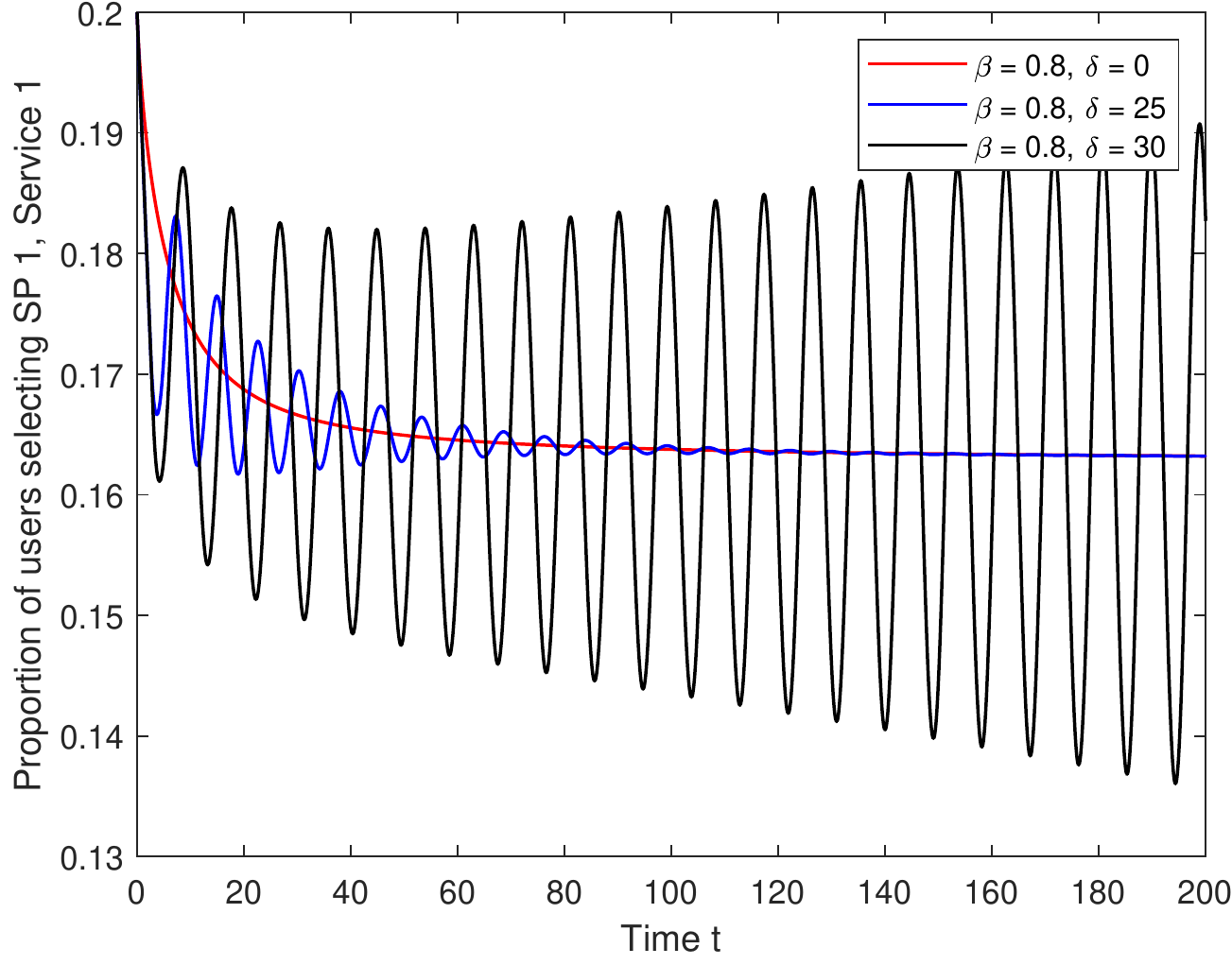}}

\caption{Proportion of the users selecting SP1 and Service 1 with different delays.}
	\label{proportion_delay}
		\vspace{-0.4cm}
\end{figure*}

Next, we discuss how the proportions of users selecting different SPs and services depend on the information delay $\delta$. For the evaluation purpose, we consider the proportion of the users selecting Service 1 provided by SP 1 as shown in Figs.~\ref{proportion_delay}(a) (b), and (c). First, we discuss the results obtained by the classical evolutionary game shown in Figs.~\ref{proportion_delay}(a). This figure shows the proportion of the users choosing Service 1 of SP 1 when the users use information for their decisions at $\delta= 0, 40$  and $80$. As seen, when $\delta > 0$, there is a fluctuating dynamics of strategy adaptation. In particular, as $\delta=40$, the game can sill converge to the equilibrium that is the same as the case when $\delta = 0$. However, as $\delta = 80$, the game cannot reach the equilibrium. These results mean that when the users use information with a small delay for their decisions, the game is still guaranteed for the convergence. In addition, we find in the figure that as $\delta=40$, the proportions of the users have more fluctuations and the game needs more time to reach the equilibrium, compared with the game where $\delta=0$. This implies that the convergence speed is slower as the users use information with larger delay. The results obtained by the fractional games with $\beta=0.8$ and $1.1$ are the same as that obtained by the classical evolutionary game. However, it seems to be that the service selection cannot reach the evolutionary equilibrium even when $\delta$ is small. For example, for fractional evolutionary game with $\beta=1.1$, as $\delta \leq 2$, the service selection cannot reach the evolutionary equilibrium. This is also a shortcoming of the fractional game in which the very outdated information may not be used for the user decisions.

 \begin{figure*}[h]
  \centering
  \subcaptionbox{}[.3\linewidth][c]{%
    \includegraphics[width=1\linewidth]{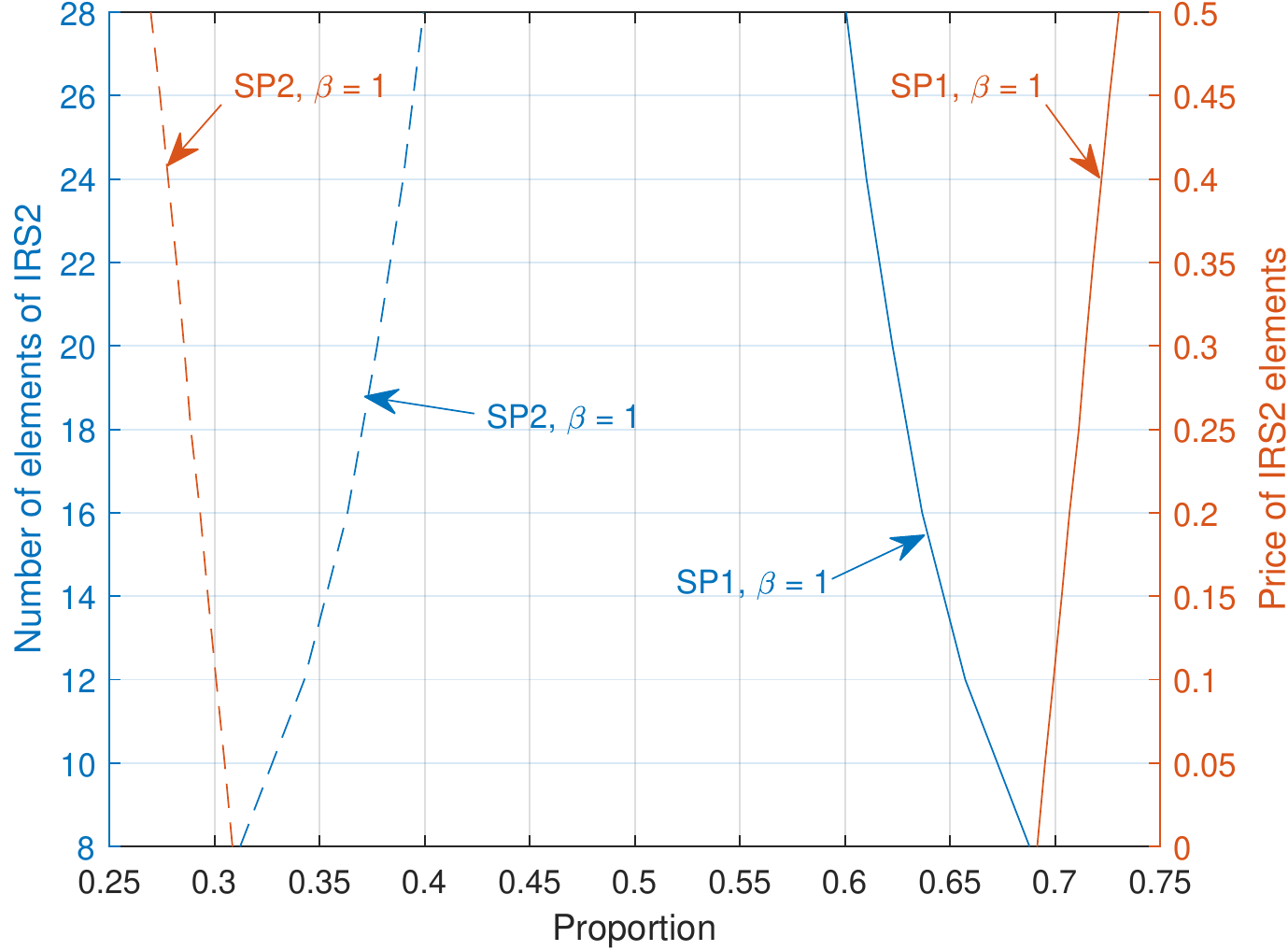}}\quad
  \subcaptionbox{}[.3\linewidth][c]{%
    \includegraphics[width=1\linewidth]{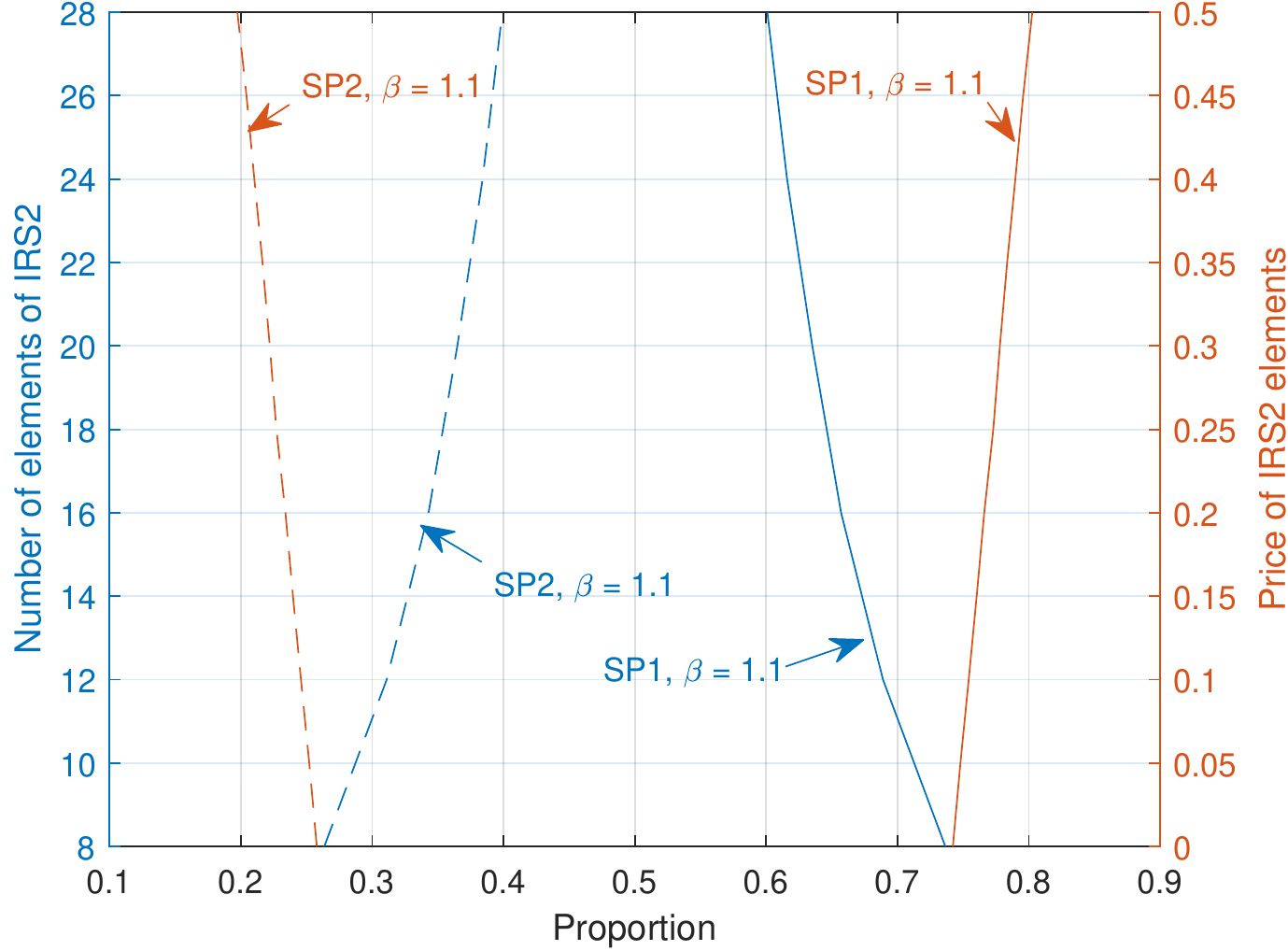}}\quad
  \subcaptionbox{}[.3\linewidth][c]{%
    \includegraphics[width=1\linewidth]{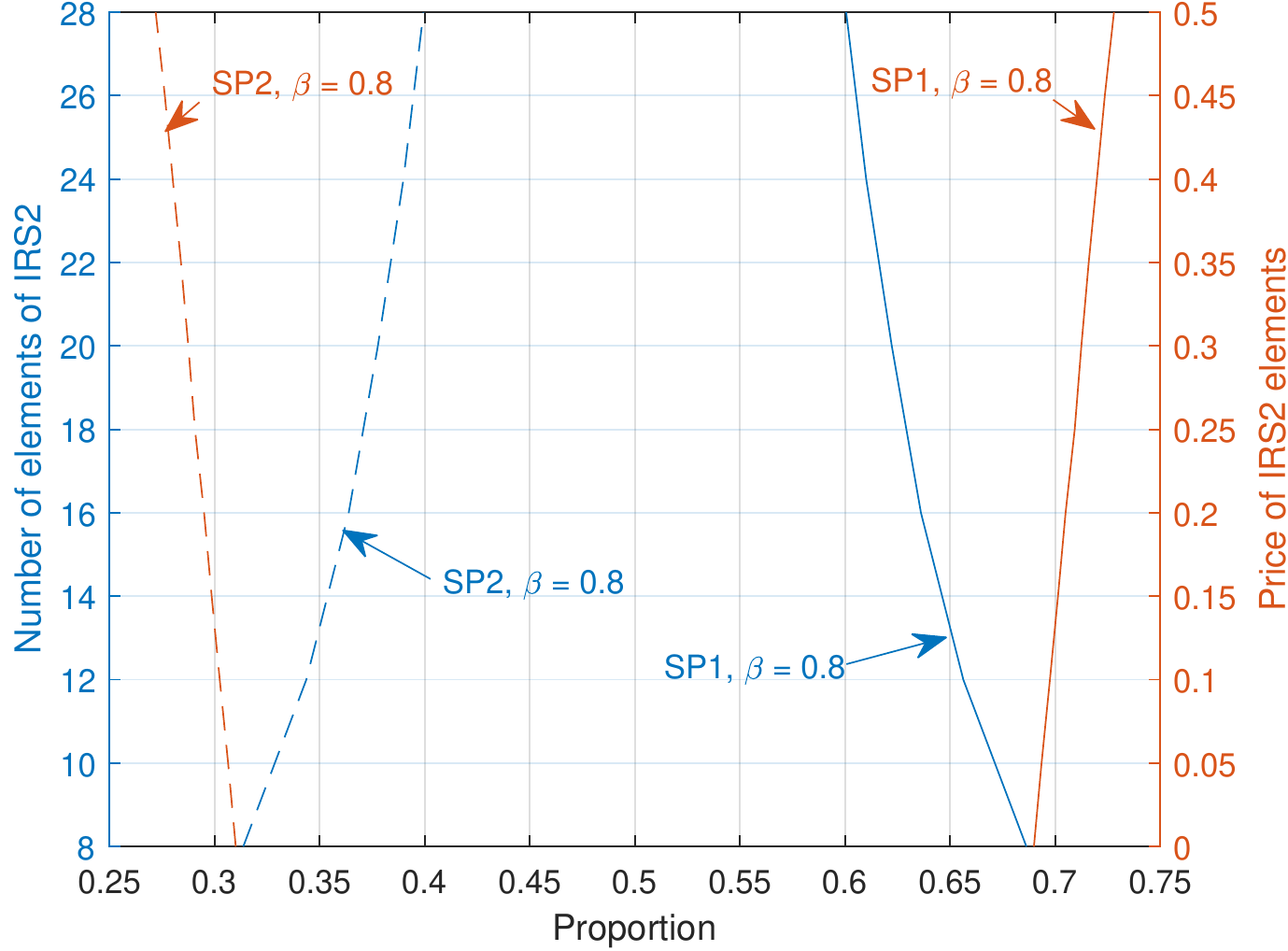}}

\caption{Proportion of the users selecting different SPs as the size of IRS and the price of IRS element provided by SP2 varies.}
	\label{proportion_price}
		\vspace{-0.4cm}
\end{figure*}

We finally discuss how the sizes of IRSs and the price of IRS 2 elements impact on the SP and service selection of the users. Figure~\ref{proportion_price}(a) shows that, for a given price, the proportion of users selecting SP 2 increases as the number of elements of IRS 2 increases. This is because of that the throughput obtained by the users selecting services of SP 2 increases.
Thus, the users are willing to select services of SP 2. This figure also shows the proportion of users selecting SP 1 increases as the price of IRS 2 elements increases. This is because of that the utility of users choose SP 2 will fall down as the price of IRS 2 elements increases. Therefore, the users are willing choose services of SP 1. The same explanations can be applied to the results obtained as the values of $\beta$ are $0.8$ and $1.1$ (Figs.~\ref{proportion_price}(b) and (c)).

\section{Conclusions}
\label{conclu}
We have proposed dynamic game frameworks for modeling the dynamic network selection of mobile users in the IRS-enabled terahertz network. First, we have adopted the classical evolutionary game in which the SP and service adaptation of the users is modeled as replicator dynamics. We have further considered the scenario in which the users use delayed information for their decision-making. In this scenario, we have analyzed the stability region of the delayed replicator dynamics. Furthermore, we have adopted the fractional evolutionary game that incorporates the memory effect to model the SP and service adaptation of the users. The proof of the existence and uniqueness of the game equilibrium has been provided. We have finally provided simulation results obtained by the proposed evolutionary game approaches. In addition, we have further discussed the selection behaviors of the users and compared the performance obtained by the proposed evolutionary games. For future work, we will study the SP and network service selections in a heterogeneous network that includes different types of relaying, i.e., IRS and active relay devices, and different communication technologies, i.e., terahertz and millimeter wave communications. 


\bibliographystyle{IEEEtran}
\bibliography{IRS_selection}{}



\end{document}